\colorlet{DarkRed}{red!50!black}
\colorlet{DarkGreen}{green!50!black}
\colorlet{DarkBlue}{blue!50!black}
\newtheorem{theorem}{Theorem}
\newtheorem{lemma}[theorem]{Lemma}
\newtheorem{definition}[theorem]{Definition}
\newtheorem{observation}[theorem]{Observation}
\newcommand{\eps}{\varepsilon}
\newcommand{\paren}[1]{\mathopen{}\left(#1\right)\mathclose{}}
\newcommand{\ceil}[1]{\mathopen{}\lceil#1\rceil\mathclose{}}
\newcommand{\floor}[1]{\mathopen{}\lfloor#1\rfloor\mathclose{}}
\newcommand{\poly}{\operatorname{\text{{\rm poly}}}}
\newcommand{\T}{\mathcal T}
\newcommand{\MC}{\mathcal M \mathcal C}
\renewcommand{\P}{\mathcal P}
\newcommand{\R}{\mathbb R}
\renewcommand{\a}{\overline{a}}
\newcommand{\E}{\mathbb E}
\newcommand{\Q}{\mathcal Q}
\DeclareMathOperator{\lev}{level}
    \newcommand*\linenomathpatch[1]{%
      \cspreto{#1}{\linenomath}%
      \cspreto{#1*}{\linenomath}%
      \csappto{end#1}{\endlinenomath}%
      \csappto{end#1*}{\endlinenomath}%
    }
\title{Tree-Packing Revisited: Faster Fully Dynamic Min-Cut and Arboricity}
\author{Tijn de Vos\thanks{TU Graz. This work was partially conducted when the author was a PhD student at the university of Salzburg. This research was funded in whole or in part by the Austrian Science Fund (FWF) \url{https://doi.org/10.55776/P36280}, \url{https://doi.org/10.55776/I6915}, and \url{https://doi.org/10.55776/P32863}. For open access purposes, the author has applied a CC BY public copyright license to any author-accepted manuscript version arising from this submission.  This project has received funding from the European Research Council (ERC) under the European Union's Horizon 2020 research and innovation programme (grant agreement No~947702).}
\and \textcircled{r}\thanks{The author ordering was randomized using \url{https://www.aeaweb.org/journals/policies/random-author-order/} generator. It is requested that citations of this work list the authors separated by \texttt{\textbackslash textcircled\{r\}} instead of commas.} \and 
Aleksander B. G. Christiansen\thanks{Technical University of Denmark. This work was supported by VILLUM FONDEN grant 37507 ``Efficient Recomputations for Changeful Problems''.}}
\date{}
\begin{document}

\begin{titlepage}
\maketitle

\begin{abstract}

Tree-packings -- collections of spanning trees of a graph -- are a fundamental tool in the study of minimum cut and related graph parameters. They have played a central role in the design of algorithms across static, dynamic, and distributed settings. In this paper, we study both tree-packings themselves and their structural connections to min-cut and arboricity. Our results lead to faster dynamic algorithms for both problems.

For dynamic min-cut, [Thorup, Comb.\ 2007] used tree-packings to obtain his dynamic min-cut algorithm with $\tilde O(\lambda^{14.5}\sqrt{n})$ worst-case update time.
We reexamine this relationship, showing that we need to maintain fewer trees for such a result; we show that we only need to pack $\Theta(\lambda^3 \log m)$ greedy trees to guarantee either a 1-respecting cut or a trivial cut in some contracted graph.

Based on this structural result, we then provide a deterministic algorithm for fully dynamic exact min-cut that has $\tilde O(\lambda^{5.5}\sqrt{n})$ worst-case update time, for graphs with min-cut value at most $\lambda$. 
In particular, this also yields an algorithm for fully dynamic exact min-cut with $\tilde O(m^{1-1/12})$ amortized update time, improving upon $\tilde O(m^{1-1/31})$ [Goranci et al., SODA 2023].

We also give the first fully dynamic algorithm that maintains a $(1+\varepsilon)$-approximation of the fractional arboricity.  Our algorithm is deterministic and has $O(\alpha \log^6m/\varepsilon^4)$ amortized update time, for graphs with arboricity at most $\alpha$. 
We extend these results to a Monte Carlo algorithm with $O(\poly(\log m,\varepsilon^{-1}))$ amortized update time against an adaptive adversary. Our algorithms work on multi-graphs as well. 

Our structural results on tree-packing also include a lower bound for greedy tree-packing, which -- to the best of our knowledge -- is the first progress on this topic since~[Thorup, Comb.\ 2007].

\end{abstract}
\thispagestyle{empty}
\newpage
\thispagestyle{empty}

\tableofcontents
 \newpage
  \listoftodos
\end{titlepage}

\newpage
\section{Introduction}
A \emph{tree-packing} $\T$ is a collection of spanning trees of a graph. 
Often one is interested in a tree-packing satisfying certain requirements, e.g., greedy, disjoint, which we detail where applicable. 
%sometimes the spanning trees are required to be disjoint, sometimes one lets the weight of an edge be the number of trees an edge belongs to, and require that the spanning trees form successive minimum spanning trees. In this case, the tree-packing is \emph{greedy}. 
%Sometimes these weights are scaled by the total number of trees, and one seeks a tree-packing that minimises the maximum weight or maximises the minimum weight. 
%
In particular, tree-packing first appeared in the seminal works of Nash-Williams~\cite{Nash61} and Tutte~\cite{Tutte61}. 
They studied sufficient and necessary conditions for when a graph can be decomposed into $k$ disjoint spanning subgraphs, i.e., when a graph admits a disjoint tree-packing of size $k$. 
They provided a sufficient and necessary condition by considering partition values: 
given a partition $\mathcal{P}$ of the vertex set of a graph $G$, the \emph{value} of $\mathcal{P}$ is then $\tfrac{|E(G/\mathcal{P})|}{|V(G/\mathcal{P})|-1}$ , where $G/\mathcal{P}$ is the graph obtained by contracting $\mathcal{P}$ in $G$. 
In particular, they showed that $G$ admits a decomposition into $k$ disjoint spanning trees if and only if the minimum partition value is at least $k$. 
Not long after, Nash-Williams noted~\cite{NashWilliams64} that his techniques extend to answer a question with a very similar flavor: what is the smallest number of trees needed to cover all edges of a graph?
This number is now known as the \emph{arboricity} of the graph, and Nash-Williams showed that the arboricity is the ceiling of the \emph{fractional arboricity} defined as $\alpha := \max_{S\subseteq V} \frac{|E(S)|}{|S|-1}$.

Subsequently, tree-packing has also been studied for its algorithmic applications, in particular for computing the value of a minimum cut and related problems~\cite{gabow1995matroid,karger2000minimum,ThorupK00,NaorR01,Thorup07,Thorup08,ChekuriQ019,DBLP:conf/stoc/DagaHNS19,ChekuriQX20,DoryEMN21}. 
The \emph{minimum cut value} of a graph is defined as the minimum number of edges whose deletion causes the graph to become disconnected. 
Such a collection of edges forms a \emph{minimum cut} or \emph{min-cut} for short\footnote{As per convention, we often simply write min-cut to refer to the size/value of a min-cut. Note that there can be multiple min-cuts, although there is a unique min-cut value.}. 
In many of these applications, \emph{ideal relative loads} play a central role. The ideal relative loads (formally defined in \Cref{sc:intro_TP}) correspond to some sort of `ideal tree-packing'. 
We are still far from fully understanding ideal relative loads and their limits.
In this paper, we hope to contribute to a better understanding. 
In particular, we study ideal relative loads by giving new upper and lower bounds on the size of tree-packings that approximate the ideal relative loads. 
Moreover, we show two structural results on ideal relative loads and min-cut/arboricity respectively.  
This leads to a faster fully dynamic exact min-cut algorithm and the first fully dynamic algorithm to maintain a $(1+\eps)$-approximation of the fractional arboricity. 
As such, we categorize our results in three parts: 1) tree-packings and ideal relative loads, 2) min-cut, and 3) arboricity. 

\subsection{Tree-Packings and Ideal Relative Loads}\label{sc:intro_TP}
% \paragraph{Definitions.}
Let $G=(V,E)$ be an unweighted undirected graph. A \emph{tree-packing} $\T$ of $G$ is a family of spanning trees, where edges can appear in multiple trees. The \emph{load} of an edge $e$, denoted by $L^{\T}(e)$, is defined by the number of trees that contain $e$. The \emph{relative load} is defined as $\ell^{\T}(e) = L^{\T}(e)/|\T|$. Whenever the tree-packing is clear from the context, we omit the superscript $\cdot^{\T}$. The \emph{packing value} of a tree-packing $\T$ is

\begin{equation*}
    \rm{pack\_val}(\T) := \frac{1}{\max_{e\in E} \ell^{\T}(e)}.
\end{equation*}
Dual to tree-packing we have a concept for partitions. For a partition $\P$ of the vertex set $V$, we define the \emph{partition value} as

\begin{equation*}
    \rm{part\_val}(\P) := \frac{|E(G/\P)|}{|\P|-1}.
\end{equation*}
We now have that (see e.g.~\cite{Nash61,Tutte61})

\begin{equation*}
    \Phi_G:= \max_{\T} \rm{pack\_val}(\T) = \min_{\P} \rm{part\_val}(\P).
\end{equation*}
We omit the subscript and simply write $\Phi$ when the graph is clear from the context.

Next, we introduce \emph{ideal relative loads}, following Thorup~\cite{Thorup07}. These loads, denoted by $\ell^*(e)$, are defined recursively. 
\begin{enumerate}
    \item Let $\P^*$ be a partition with $\rm{part\_val}(\P^*)=\Phi$.
    \item For all $e\in E(G/\P^*)$, set $\ell^*(e):= 1/\Phi$.
    \item For each $S\in \P^*$, recurse on the subgraph $G[S]$. 
\end{enumerate}

By definition, we have that $\Phi_G =\tfrac{1}{\max_{e\in E}\ell^*(e)}$. We show an analogous result for the fractional arboricity $\alpha(G)$ and the minimum ideal relative load:
%the following relation between arboricity and tree-packing. 
\begin{restatable}{theorem}{ArbStructural}\label{thm:arb_vs_packing}
    Let $G=(V,E)$ be an undirected, unweighted (multi-)graph, then 
    \begin{equation*}
        \alpha(G) = \frac{1}{\min_{e\in E} \ell^{*}(e)}.
    \end{equation*}
\end{restatable}
In \Cref{sc:intro_arb}, we demonstrate how this leads to novel dynamic algorithms to approximate the fractional arboricity. 

We obtain a second structural result, which regards tree-packing and the min-cut. We see a first connection between these two concepts as follows: let $\lambda$ denote the min-cut of $G$, then $\lambda/2 < \Phi \le \lambda$~\cite{karger2000minimum}. Combining this fact with $\Phi =\tfrac{1}{\max_{e\in E}\ell^*(e)}$, we see that the min-cut relates to the ideal relative loads. We investigate this relation further in \Cref{sc:intro_mincut}, where we state our somewhat more technical result.

Before we discuss these two structural results, we consider how to approximately compute ideal relative loads. 

\paragraph{Approximating Ideal Relative loads.}
Computing the ideal relative loads exactly seems to be a challenging question. Giving a tree-packing such that its relative loads approximate the ideal relative loads seems easier. More precisely, the goal is to find a tree-packing $\mathcal T$ such that
\begin{equation}
    |\ell^\T(e)-\ell^*(e)| \leq \eps/\lambda,\label{eq:TP}
\end{equation}
for all $e\in E$. In applications, we need for algorithmic efficiency that $\mathcal T$ is \emph{small}. 
In general, such a tree-packing needs $\lambda/\eps$ trees. We prove that there always exists a tree-packing of this size. 

\begin{restatable}{theorem}{TPexistence}\label{thm:existence}
    Let $G$ be an unweighted, undirected (multi-)graph. There exists a tree-packing $\T$ with $|\T|=\Theta(\lambda/\eps)$ trees that satisfies
    \begin{equation*}
        |\ell^\T(e)-\ell^*(e)| \leq \eps/\lambda,
    \end{equation*}
    for all $e\in E$.
\end{restatable}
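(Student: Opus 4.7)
The plan is to construct the packing recursively, mirroring the recursive definition of $\ell^*$. Set $N := \lceil \lambda/\eps \rceil$. Let $\P^* = \{V_1,\dots,V_k\}$ be an optimal partition attaining $\Phi := \Phi_G$. A standard argument -- lifting any partition of $G/\P^*$ back to $G$ -- shows that the quotient $G/\P^*$ is \emph{uniformly dense}: its trivial partition attains the Nash--Williams--Tutte value $\Phi$, i.e., $|E(G/\P^*)| = \Phi (k-1)$. The central lemma I would establish is a balanced-packing statement: for any uniformly dense graph $H$ and any integer $N\geq 1$, there exists a packing of $N$ spanning trees of $H$ in which every edge has load in $\{\lfloor N/\Phi(H)\rfloor,\lceil N/\Phi(H)\rceil\}$.

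Given this lemma, the packing of $G$ is assembled as follows: apply the lemma to $G/\P^*$ to obtain $N$ balanced spanning trees $T_1^{\mathrm{top}},\dots,T_N^{\mathrm{top}}$ of the quotient, and recursively obtain $N$ spanning trees $T_1^{(j)},\dots,T_N^{(j)}$ of each $G[V_j]$. The $i$-th spanning tree of $G$ is the lift of $T_i^{\mathrm{top}}$ together with $T_i^{(j)}$ for each $j\in\{1,\dots,k\}$. For the load analysis, fix an edge $e$ and let $d$ be the deepest recursion level at which $e$ survives as a quotient edge. Then $\ell^*(e)=1/\Phi^{(d)}$, and the load of $e$ in the $G$-packing equals its load in the level-$d$ packing, which by the lemma lies in $\{\lfloor N/\Phi^{(d)}\rfloor,\lceil N/\Phi^{(d)}\rceil\}$. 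Dividing by $N$, the relative load deviates from $\ell^*(e)$ by at most $1/N\leq \eps/\lambda$, exactly as required. Errors at different recursion levels do not compound, since each edge is handled at only one level.

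The main obstacle is the balanced-packing lemma. Since $N\cdot |E(H)|/\Phi(H) = N(|V(H)|-1)$ is an integer, one can choose multiplicities $m_e\in\{\lfloor N/\Phi\rfloor,\lceil N/\Phi\rceil\}$ for each edge with $\sum_e m_e = N(|V(H)|-1)$, and the task reduces to choosing which edges round up so that the resulting multigraph $M$ contains $N$ edge-disjoint spanning trees -- equivalently, so that $|E_M(H/\P)| \geq N(|\P|-1)$ for every partition $\P$ of $H$, by Nash--Williams--Tutte. My plan is to start from an $N$-tree packing in the $\lceil N/\Phi\rceil$-copy of $H$ (which exists by Nash--Williams--Tutte applied to the scaled graph) and iteratively apply matroid base exchange: whenever some edge has load strictly above $\lceil N/\Phi\rceil$ or strictly below $\lfloor N/\Phi\rfloor$, swap it against a balancing edge in a shared fundamental cycle, moving two loads by $\pm 1$ per step, with a potential function showing termination. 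Alternatively, one can appeal to a matroid-union integer-rounding theorem giving per-constraint discrepancy $O(1)$, which would only absorb a constant factor into the final $\Theta(\lambda/\eps)$ bound.
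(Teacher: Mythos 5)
Your recursive framework is exactly the paper's: both walk down the ideal load decomposition, observe that each edge is settled entirely at the unique level where it survives as a quotient edge, and reduce to producing a balanced packing of $N=\Theta(\lambda/\eps)$ spanning trees on each uniformly dense quotient (your fixed-$N$ bookkeeping is, if anything, slightly cleaner than the paper's $\eps$-rescaling across recursion levels). You have also correctly isolated the one new lemma required, namely that a uniformly dense $H$ admits $N$ spanning trees with every load in $\{\lfloor N/\Phi(H)\rfloor,\lceil N/\Phi(H)\rceil\}$. The genuine divergence is in how to prove this lemma. The paper replaces every edge of $H$ by $1/\eps$ parallel copies and generalizes Kaiser's extremal proof of the tree-packing theorem to produce $\lfloor\Phi_{H}/\eps\rfloor$ disjoint spanning trees plus one forest on exactly $(\Phi_{H}/\eps-\lfloor\Phi_{H}/\eps\rfloor)(n-1)$ edges; extending the forest arbitrarily to a tree and projecting back gives loads in $\{1/\eps,1/\eps+1\}$. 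You instead prescribe the rounded loads up front and propose to realize them by matroid base exchange from a packing in the $\lceil N/\Phi\rceil$-blowup. That sketch has a real gap: the single fundamental-cycle swap you describe need not exist (the entire cycle may already sit at the bottom of the allowed load range), so you would need the full alternating-path machinery of matroid-union augmentation, with a termination argument coupled to it; and your fallback of a ``matroid-union integer-rounding theorem'' is the right flavor of statement but is not an off-the-shelf citation --- establishing it is essentially equivalent to proving the lemma. In short, your decomposition-and-glue layer matches the paper verbatim, while your route to the balanced-packing lemma is a legitimate alternative that would take nontrivial additional work to make rigorous; the paper's Kaiser-style argument is self-contained and delivers the width-one load interval directly.
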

Although our algorithm is constructive, it is not efficient. Instead, we consider \emph{greedy tree-packings}.

\paragraph{Greedy Tree-Packing.}
Thorup~\cite{Thorup07} showed that a \emph{greedy tree-packing} approximates the ideal packing well. 
Here a greedy tree-packing is defined as follows: let the weight of an edge be the number of trees an edge belongs to, and require that the spanning trees in the packing form successive minimum weight spanning trees. 
Thorup showed that a greedy tree-packing $\T$ with $|\T|\geq 6\lambda \log m/\eps^2$ trees\footnote{Throughout this paper we write $\log m$ for all logarithmic factors. We note that for simple graphs this simplifies to $\log n$.} satisfies \Cref{eq:TP}.
% 
%     \begin{equation}
%         |\ell^\T(e)-\ell^*(e)| \leq \eps/\lambda, %\label{eq:thorup_TP}
%     \end{equation}
% for all $e\in E$. 
The fact that greedy tree-packings approximate ideal packings will be the basis for our dynamic min-cut and arboricity algorithms, since greedy tree-packings are easy to compute -- even in a dynamic setting.  

% \paragraph{Number of Greedy Trees.}
To make any algorithm based on \Cref{eq:TP} as efficient as possible, it is worth to study how many greedy trees are necessary to attain it. We establish a novel lower bound. 

\begin{restatable}{theorem}{TPlowerbound}\label{thm:TP_lower_bound}
    Let $G$ be an unweighted, undirected (multi-)graph. In general, a greedy tree-packing $\T$ needs $|\T|=\Omega(\lambda/\eps^{{3}/{2}})$ trees to satisfy
    \begin{equation*}
        |\ell^\T(e)-\ell^*(e)| \leq \eps/\lambda,
    \end{equation*}
    for all $e\in E$, whenever $\eps^{-1} = O(n^{1/3})$.
\end{restatable}

Although a greedy tree-packing is easy to compute and maintain, this lower bound combined with \Cref{thm:existence} indicate that investigating other tree-packings might be worthwhile.

\subsection{Min-Cut}\label{sc:intro_mincut}
In a seminal paper, Gabow~\cite{gabow1995matroid} showed how to compute both a packing of disjoint trees and a min-cut. Subsequently, Karger~\cite{karger2000minimum} used the fact that any large enough greedy tree-packing contains trees that $2$-respect some min-cut, i.e., it contains a spanning tree which crosses some min-cut at most twice. 
To arrive at his near-linear time min-cut algorithm for computing a min-cut, Karger used this fact by showing that one can efficiently compute the size of all cuts that $2$-respects a tree.
This observation has also been proven useful for computing $k$-cuts~\cite{Thorup08,ChekuriQ019}. 

This `semi-duality' between tree-packing and min-cut has additionally found applications in other models of computation such as distributed computing~\cite{DBLP:conf/stoc/DagaHNS19,DoryEMN21} and dynamic algorithms~\cite{ThorupK00,Thorup07}.
Dynamic algorithms maintain a solution to a problem -- for instance the size of a min-cut -- as the input graph undergoes deletions and insertions of edges. 
In this setting, it is not known how to maintain the smallest $2$-respecting cut in a dynamic forest. Instead, Thorup~\cite{Thorup07} showed that packing $|\T|=\Omega(\lambda^7 \log^{3} m)$ greedy trees is sufficient to guarantee that at least one tree in $\T$ $1$-respects a min-cut, provided that the size of the min-cut is at most $\lambda$. 
He then shows that one can dynamically maintain this packing in $\tilde O(|\T|^2 \sqrt{\lambda n})$ update time\footnote{We write $\tilde O(f)$ for $O(f\poly\log f)$.} and that one can maintain the smallest $1$-respecting cut efficiently in a dynamic forest. These things combine to give an exact dynamic min-cut algorithm running in $\tilde O(\lambda^{14.5}\sqrt{n})$ worst-case update time, whenever the min-cut has size at most $\lambda$. 

This dependency on $\lambda$ is persistent. Even the cases $\lambda = 1$ and $\lambda = 2$ are important and very well-studied in the dynamic setting~\cite{Frederickson85,GalilI91,WestbrookT92,EppsteinGIN97, Frederickson97, henzingerK1997fully,  HenzingerK99, HolmLT01,KapronKM13,NanongkaiSW17,HolmRT18,ChuzhoyGLNPS20}. Recently Jin, Sun, and Thorup~\cite{jin2024fully} gave an algorithm for the case that $\lambda\leq (\log m)^{o(1)}$. For general (polynomial) values of $\lambda$, Thorup~\cite{Thorup07} remains the state-of-the-art. 
One immediate way of improving the dependency on $\lambda$ in his update time is to show that packing even fewer trees still guarantees that at least one tree $1$-respects a min-cut. 
This prompted Thorup to ask if it is possible to show that an even smaller tree-packing always $1$-respects at least one min-cut.%, which is still unanswered.
% In recent work, Goranci et al.~\cite{GoranciHNSTW23} balanced the approach of Thorup with a different approach based on expander decompositions to get an exact dynamic algorithm for all values of $\lambda$. Again, an improvement in the $\lambda$ dependency immediately yields a faster algorithm. %In a similar vein to Thorup, they asked whether this is possible. 

\paragraph{Our Structural Result.}
In this paper, we show that a better $\lambda$ dependency is possible. 
We do not attain this by improving directly on Thorup's bound that  $\Omega(\lambda^7 \log m)$ greedy trees suffice. 
Instead, we find that in the dynamic setting, focusing exclusively on $1$-respecting cuts may be too limiting.
In particular, we prove that in the cases where one might need to pack many trees to ensure that at least one packed tree $1$-respects a min-cut, one can instead consider a corresponding approximate partition. We show that whenever we cannot guarantee a $1$-respecting min-cut, we can instead guarantee that the approximate partition contains a \emph{trivial} min-cut. Here, a trivial min-cut is a cut where one side consists of a single vertex.  
% We then make this approach algorithmic by showing how to maintain the trivial min-cuts of this approximate partition. 
With this approach, we need to pack only $\Omega(\lambda^3 \log m)$ trees, thus resulting in a much smaller final dependency on $\lambda$.

% As mentioned above, Thorup~\cite{Thorup07} shows that if we pack $\Omega(\lambda^7 \log m)$ greedy trees, we pack at least one tree that crosses some min-cut only once. 
% %This argument heavily relies on the relation between greedy trees and the ideal relative loads \Cref{eq:thorup_TP}. 
% Instead of directly improving upon this bound, i.e., showing that one can pack fewer trees, we investigate the relationship between ideal relative loads and min-cut more closely, obtaining the following result. 

\begin{restatable}{theorem}{thmCutExistence}\label{thm:CutExistence}
    Let $G$ be an unweighted, undirected (multi-)graph, and let $\T$ be a greedy tree-packing on $G$. Suppose $|\mathcal T|=\Omega(\lambda^3\log m)$, then at least one of the following holds 
    \begin{enumerate}
        \item Some $T\in \T$ $1$-respects a min-cut of $G$; or \label{item:1resp}
        \item Some trivial cut in $G/ \{e\in E: \ell^\T(e)<\tfrac{2}{\lambda}-\tfrac{1}{2\lambda^2}\}$ corresponds to a min-cut of $G$. \label{item:trivial_cut}
    \end{enumerate}    
\end{restatable}

\paragraph{Our Dynamic Results.}
% \subsection{Min-Cut}
% The minimum cut, or min-cut, of a graph is the smallest set of edges such that removing them makes the graph disconnected. We refer to the min-cut value as the number of edges in a min-cut. Note that there can be multiple min-cuts in a graph, although there is a unique min-cut value. 
We combine the structural result from \Cref{thm:CutExistence} with novel dynamic subroutines to obtain the following result. 
% This result is the main technical ingredient for \Cref{thm:mincut_par}, which also contains novel dynamic routines for maintaining both parts.
% We obtain the following deterministic result, which hence also naturally holds against an adaptive adversary.  
\begin{restatable}{theorem}{mincutPar}\label{thm:mincut_par}
     There exists a deterministic dynamic algorithm that, given an unweighted, undirected (multi-)graph $G=(V,E)$, maintains the exact min-cut value $\lambda$ if $\lambda\leq \lambda_{\max}$ in $\tilde O(\lambda_{\max}^{5.5}\sqrt{n})$ worst-case update time.\\
     It can return the edges of the cut in $O(\lambda\log m)$ time with  $\tilde O(\lambda_{\max}^5\sqrt{m})$ worst-case update time.     
\end{restatable}
This improves on the state-of-the-art by Thorup~\cite{Thorup07}, who achieves $\tilde O(\lambda_{\max}^{14.5}\sqrt{n})$ worst-case update time. Thorup also uses this result to obtain $(1+\eps)$-approximate min-cut against an oblivious adversary in $\tilde O(\sqrt{n})$ time. In this setting, our result improves the polylogarithmic factors. An application where our result has more impact is deterministic exact min-cut for unbounded min-cut value~$\lambda$. 

\begin{restatable}{theorem}{mincutCombi}
    There exists a deterministic dynamic algorithm that, given a simple, unweighted, undirected graph $G=(V,E)$, maintains the exact min-cut value $\lambda$ with amortized update time
    \[
        \tilde O(\min\{m^{1-1/12}, m^{11/13}n^{1/13},n^{3/2}\}).
    \]
    %\tijn{can also give $\tilde O(\lambda^{11/13}n^{12/13})$ update time}
    It can return the edges of the cut in $O(\lambda\log m)$ time with $\tilde O(m^{1-1/12})$ amortized update time. 
\end{restatable}
We obtain this result by using the algorithm of Goranci et al.~\cite{GoranciHNSTW23} for the high $\lambda$ regime. When they combine this with~\cite{Thorup07}, they achieve $\tilde O(m^{29/31}n^{1/31})=\tilde O(m^{1-1/31})$ amortized update time\footnote{See the updated arXiv version for the correct bounds. The SODA version of the paper states a different bound (namely $\tilde O(m^{1-1/16})$), which resulted from a typo when citing~\cite{Thorup07}.}. 
We note that Goranci et al.~\cite{GoranciHNSTW23} also provide a \emph{randomized} result with $\tilde O(n)$ worst-case update time against an adaptive adversary.

\subsection{Arboricity}\label{sc:intro_arb}
As mentioned, the \emph{fractional arboricity} $\alpha$ of a graph is defined as 
\begin{align*}
    \alpha := \max_{S\subseteq V} \frac{|E(S)|}{|S|-1}.
\end{align*}
The \emph{(integral) arboricity} is defined as $\lceil \alpha\rceil$, so it is strictly easier to compute. Equivalently, we can define the arboricity as the minimum number of trees needed to cover the graph~\cite{Nash61}. 
As stated in \Cref{thm:arb_vs_packing}, we prove that the arboricity can be expressed in terms of the ideal relative loads. 
We use this equality to maintain an approximation of the arboricity in the dynamic setting. 

\paragraph{Our Dynamic Results.}
We obtain the first dynamic $(1+\eps)$-approximation of the fractional arboricity. Our algorithm is deterministic, hence also works naturally against an adaptive adversary. 

\begin{restatable}{theorem}{arboricityDetNew}\label{thm:arb_det}
    There exists a deterministic dynamic algorithm that, given an unweighted, undirected (multi-)graph $G=(V,E)$, maintains a $(1+\eps)$-approximation of the fractional arboricity $\alpha$ when $\alpha\leq \alpha_{\max}$ in $O(\alpha_{\max}\log^6 m / \eps^4)$ amortized update time or a Las Vegas algorithm with $O(\alpha_{\max}^2 m^{o(1)} / \eps^4)$ worst-case update time.
\end{restatable}
%\tijn{we could add that we can output the subgraph maximizing this}
This improves the approximation factor in the state-of-the-art: Chekuri et al.~\cite{chekuri2024adaptive} give a $(2+\eps)$-approximation of the fractional arboricity. It has $O(\log \alpha_{\max} \log^2 m  /\eps^4)$ amortized update time or $O(\log \alpha_{\max}\log^3 m  /\eps^6)$ worst-case update time. In fact, for simple graphs, the value of the densest subgraph is a $(1+\eps)$-approximation for large values of $\alpha$. Combining this with our result for smaller values of $\alpha$, we obtain the following result. 

\begin{restatable}{theorem}{arboricitySimple}\label{thm:arb_simple}
    There exists a deterministic dynamic algorithm that, given a simple, unweighted, undirected graph $G=(V,E)$, maintains a $(1+\eps)$-approximation of the fractional arboricity $\alpha$ in $O(\log^6 m / \eps^5)$ amortized update time or a Las Vegas algorithm with $O(m^{o(1)}/\eps^6)$ worst-case update time. 
\end{restatable}

For multi-graphs, the value of the densest subgraph remains a $2$-approximation of the fractional arboricity, even when $\alpha$ becomes large. To obtain an efficient algorithm for $(1+\eps)$-approximation, we use a sampling technique to reduce the case of large $\alpha$ to the case of small $\alpha$. Although this is rather straight-forward against an oblivious adversary, we need to construct a more sophisticated scheme to deal with an adaptive adversary.  

\begin{restatable}{theorem}{arboricity}\label{thm:arb}
    There exists a dynamic algorithm that, given an unweighted, undirected (multi-)graph $G=(V,E)$, maintains a $(1+\eps)$-approximation of the fractional arboricity $\alpha$ against an adaptive adversary in $O( \log^{11} m / \eps^{15})$ amortized update time or $O(m^{o(1)} / \eps^{19})$ worst-case update time.  
\end{restatable}

Against an oblivious adversary we obtain the improved amortized update time of  $O(\log^{7} m /\eps^{6})$, or worst-case update time $O(m^{o(1)}/\eps^{8})$, see \Cref{thm:arb_obl}.

We remark that the update time of many dynamic algorithms is parameterized by the arboricity~\cite{BrodalF99,BernsteinS15,BernsteinS16,chekuri2024adaptive,KopelowitzKPS13,NeimanS16,PelegS16}. This shows that the arboricity is an important graph parameter. These algorithms are faster when the arboricity is small, and this is exactly the regime where dynamic approximations of the arboricity are lacking. We hope that our results contribute to a better understanding of this.

\subsection{Further Related Work.} 

\paragraph{Tree-Packing.}
Tree-packing can be formulated as a (packing) LP. One way of solving LPs is via the \emph{Multiplicative Weights Update (MWU) Method}, see e.g., \cite{AroraHK12}. In particular, this method needs at most $O(\rho\log m/\eps^2)$ iterations for a packing/covering LP with width~$\rho$~\cite{PlotkinST95}. For tree-packing, we have width $\rho=\Theta(\lambda)$. 

The MWU method and \emph{greedy} tree-packing are strongly related. In particular, Harb, Quanrud, and Chekuri~\cite{HarbQC22} note that if one uses (a particular version of) the MWU method then the resulting algorithm would also be greedy tree-packing. Chekuri, Quanrad, and Xu~\cite{ChekuriQX20} conjecture that greedy tree-packing is in fact equivalent to approximating the LP via the standard MWU method. 

% General packing LPs have an iteration count lower bound of $\Omega(\rho\log m/\eps^2)$ iterations~\cite{KleinY15}. It is unclear whether this also holds for the special case of tree-packing.  

We note that in both greedy tree-packing and in the MWU method, there is still freedom in the update step: in greedy tree-packing there are often multiple minimum spanning trees (e.g., the first tree can be \emph{any} spanning tree) and in the MWU method we can select any of the violated constraints. 
%So, although we do not expect \emph{every} greedy tree-packing to perform better, we might still be able to show that a specific greedy tree-packing beats the general lower bound. 
For example, we could select a minimum spanning tree at random or select a violated constraint at random. Such adaptation of the standard approach might lead to better greedy tree-packing or MWU algorithms.  

Tree-packing is also used for minimum $k$-cut \cite{NaorR01, ChekuriQX20}. It is an interesting open question whether our techniques extend to this setting.

\paragraph{Min-Cut.}
Jin and Sun~\cite{JinS21} gave an algorithm for $s-t$ min-cut, for min-cut up to size $(\log m)^{o(1)}$ with $n^{o(1)}$ worst-case update time. Recently, they generalized these techniques to get $n^{o(1)}$ worst-case update time for the global min-cut~\cite{jin2024fully} up to size $(\log m)^{o(1)}$. 
There is a further line of work on smaller values of the min-cut: in particular for graph connectivity (whether $\lambda\geq 1$)~\cite{Frederickson85,HenzingerK99, HolmLT01,KapronKM13,NanongkaiSW17,ChuzhoyGLNPS20} and 2-edge connectivity (whether $\lambda \geq 2$)~\cite{WestbrookT92, GalilI91,EppsteinGIN97, Frederickson97, henzingerK1997fully, HolmLT01,HolmRT18}. 

For planar graphs, Lacki and Sankowski~\cite{LackiS11} provided a deterministic algorithm with $\tilde O(n^{5/6})$ worst-case update and query time. 

There are faster algorithms for dynamic min-cut, at the cost of an approximation factor: Thorup and Karger~\cite{ThorupK00} gave a $(2+\eps)$-approximation in $\tilde O(\poly(\log m,\eps^{-1}))$ amortized update time, and Thorup~\cite{Thorup07} gave a $(1+\eps)$-approximation in $\tilde O(\sqrt{n})$ worst-case update time. 

In the partially dynamic setting, Thorup~\cite{Thorup07} gives an algorithm with $\tilde O(n^{3/2}+m)$ total time for the purely decremental and incremental settings. Further, there are incremental algorithms with $O(\lambda \log m)$ or $O(\poly\log m )$ amortized update time, by Henzinger~\cite{Henzinger97} and Goranci, Henzinger, and Thorup~\cite{GoranciHT18} respectively.

\paragraph{Arboricity.}
Computing the (fractional) arboricity of a graph is a relatively hard problem; we do not know a static linear-time algorithm for the exact version. 
Gabow showed how to compute the arboricity in $\tilde O(m^{3/2})$ time~\cite{gabow1995matroid,Gabow98}. For the approximate version, much faster algorithms are known. 
In particular, Eppstein gave a $2$-approximation in $O(m+n)$ time~\cite{Eppstein94}. 

Plotkin, Shmoys, and Tardos gave a FPTAS for solving fractional packing and fractional covering problems, and their algorithm applied to fractional arboricity takes $\tilde O(m\alpha/\eps^2)$ time~\cite{PlotkinST95} and provides a $(1+\eps)$-approximation, where $\alpha$ denotes the value of the fractional arboricity. 
Toko, Worou, and Galtier gave a $(1+\eps)$-approximation for the fractional arboricity in $ O(m\log^3 m/\eps^2)$ time~\cite{WorouG16}. 
Blumenstock and Fischer~\cite{BlumenstockF20} gave a $(1+\eps)$-approximation of the arboricity in $O(m\log m\log\alpha/\eps)$ time. More recently, Quanrud~\cite{quanrud2024faster} gave a $(1+\eps)$-approximation of the arboricity w.h.p.\ in $O(m\log m\alpha(n) + n\log m(\log m + (\log\log m+\log(1/\eps))\alpha(n))/\eps^3)$ time, where $\alpha(n)$ is the inverse Ackermann\footnote{We apologize for the abuse of notation; in the remainder of the paper the inverse Ackermann does not appear and $\alpha$ always refers to the fractional arboricity itself.}.

The exact fractional arboricity can be computed in $\tilde O(mn)$ time~\cite{Gabow98}.

% \tijn{Joachim et al \cite{BlikstadMNT23} actually don't improve state of the art}

For dynamic arboricity, there is a deterministic exact algorithm with worst-case update time $\tilde O(m)$~\cite{BanerjeeRS20}.
The work of Brodal and Fagerberg~\cite{BrodalF99} implies a $4$-approximation of the arboricity in $O(\poly\log m)$ amortized update time against an oblivious adversary.

Both for better approximations and for the fractional arboricity, we turn to the closely related \emph{densest subgraph problem}, where we want to determine the density $\rho$:
\[
    \rho := \max_{S\subseteq V} \frac{|E(S)|}{|S|}.
\]

There have been multiple dynamic approximate densest subgraph algorithms over the last years~\cite{EpastoLS15,BhattacharyaHNT15,SawlaniW20}.  
The state-of-the-art~\cite{chekuri2024adaptive} gives a $(1+\eps)$-approximation, which in turn imply $(2+\eps)$-approximate fractional arboricity algorithms (for simple graphs). It has $O(\log^2 m \log \rho /\eps^4)$ amortized update time or $O(\log^3 m \log \rho /\eps^6)$ worst-case update time. 

The algorithms via densest subgraph, but also \cite{BrodalF99}, use out-orientations to compute the approximate fractional arboricity. Although this works well for crude approximations, it seems to not lead to high precision approximations.

\subsection{Technical Overview}
In this section, we give an overview of our techniques, starting with the results whose proofs are most extensive. However, we first define the model our dynamic results are part of.

\paragraph{Dynamic Graph Algorithms.}
 The goal is to maintain a data structure for a changing graph, that maintains the solution to some problem, e.g., the value of the min-cut or the arboricity. The input graph undergoes a series of updates, which are either edge insertions or edge deletions. If the sequence of updates is fixed from the start, we say we have an \emph{oblivious} adversary. If the sequence of updates can be based upon the algorithm or the state of the data structure, we say we have an \emph{adaptive} adversary. We say the algorithm has \emph{amortized update time} equal to~$t$, if it spends $\sigma t$~time for a series of $\sigma$ updates. We say the algorithm has \emph{worst-case update time} equal to~$t$ if it spends at most $t$~time after every update.

In some cases, the data structure does not store the solution to the problem explicitly, but it can be retrieved by a query, e.g., in case of the \emph{edges} of a min-cut. In this case we state the required time to answer such a query. 

\subsubsection{Min-Cut}
To obtain our dynamic min-cut result, \Cref{thm:mincut_par}, we have three technical contributions. The first is showing \Cref{thm:CutExistence}. Our algorithm then consists of maintaining such a tree-packing, its 1-respecting cuts, and the corresponding trivial cuts. Our second contribution is showing how to maintain such trivial cuts, and the third is a new technique to decrease the recourse of tree-packing. 

\paragraph{Min-Cut and Tree-Packing.}
First, let us sketch the proof of \Cref{thm:CutExistence}. We remark that $\lambda$ and $\Phi$ only differ by a constant factor, which we will use throughout. 
\begin{restatable}{lemma}{lemphilambda}\label{lm:phi_lambda}
    $\lambda/2 < \Phi \leq \lambda$.
\end{restatable}

The main observation that allows one to show that some tree $1$-respects a min-cut is that if some min-cut $C$ has 
\[
\sum_{e \in C} \ell^{\T}(e) < 2,
\]
then it must be $1$-respected by some tree. 
Indeed, then the average number of times a tree in $\T$ crosses $C$ is 
\[
\frac{1}{|\T|}\sum_{e \in C} L^{\T}(e) = \sum_{e \in C} \ell^{\T}(e) < 2,
\]
so some tree must $1$-respect $C$. 
Hence, if $\sum_{e \in C} \ell^{*}(e)$ is small enough, i.e., far enough below $2$, then the tree-packing does not need to be very large to also concentrate the sum $\sum_{e \in C} \ell^{\T}(e)$ below $2$.

Hence, we can restrict ourselves to the case where every min-cut $C$ has $\sum_{e \in C} \ell^{*}(e) \approx 2$.
In this case, one can show that in fact every edge $e$ participating in a min-cut has $\ell^{*}(e) \geq \bar{a} \approx \tfrac{2}{\lambda}$ for some appropriately chosen $\bar{a}$. 
This in turn means that for a different value $\hat{a}$ only slightly smaller than $\bar{a}$, the graph $G/\{e \in E: \ell^{*}(e) < \hat{a}\}$ must contain trivial min-cuts.
To show this we assume this is not the case for contradiction, and then count the edges of the graph in two different ways: once using the $\ell^{*}$ values and once by summing the degrees. 
The first way of counting implies that $G/\{e \in E: \ell^{*}(e) < \hat{a}\}$ contains roughly $\hat{a}^{-1} |V(G/\{e \in E: \ell^{*}(e) < \hat{a}\})| \approx \tfrac{\lambda}{2} |V(G/\{e \in E: \ell^{*}(e) < \hat{a}\})|$ edges, and the second way of counting shows via the Hand-Shaking Lemma that $G/\{e \in E: \ell^{*}(e) < \hat{a}\}$ contains at least $\tfrac{\lambda+1}{2} |V(G/\{e \in E: \ell^{*}(e) < \hat{a}\})|$ -- a contradiction. 

Now that we know a feasible $\hat a$ exists, we need to show that we do not need a very large tree packing~$\T$ to obtain the result. Hereto, the next step is to show that, for an arbitrary trivial min-cut $X$ of $G/\{e \in E: \ell^{*}(e) < \hat{a}\}$, $\Phi_{G[X]}$ is sufficiently large.
We claim that this finishes the proof, because this implies that we do not need to pack too many trees in order to concentrate $\ell^{\T}$ values of the edges in a min-cut far above the $\ell^{\T}$ values of edges in $G[X]$. 
In particular, one can ensure that all edges $e$ in a min-cut have $\ell^{\T}(e) > \hat{a}$ and all edges in $G[X]$ has $\ell^{\T}(e) < \hat{a}$. 
This then implies that $X$ is a trivial min-cut of $G/\{e \in E: \ell^{\T}(e) < \hat{a}\}$. 

To show this, we lower bound the value of $\Phi_{G[X]}$ via the objective value of an optimization problem, which encodes the size of every trivial cut in the partition $\mathcal{P}$ realizing $\Phi_{G[X]}$. 
The key property needed is that no min-cut edges are in $E(G[X])$ and so very few trivial cuts of $G[X]/\mathcal{P}$ have size smaller than $\lambda + 1$. 
This allows us to bound the objective value of the optimization problem significantly above $\tfrac{\lambda}{2}$, which in turn implies that all $\ell^{*}$ values of edges in $G[X]$ are significantly below $\tfrac{2}{\lambda}$, thus yielding the required property. For more details, see \Cref{sc:cut_existence}.

\paragraph{Maintaining Trivial Cuts.}
Second, we describe how we maintain the trivial cuts efficiently. 
The key challenge in maintaining the size of the trivial cuts, is that we do not know how to maintain an explicit representation of $G/\{e \in E: \ell^{\T}(e) < \hat{a}\}$. 
The difficult part is that dynamically maintaining contractions and un-contractions explicitly can be very time consuming, as we might need to look at many edges in order to assign them the correct endpoints after the operation is performed. 
To circumvent this issue, we maintain a weaker, implicit representation of $G/\{e \in E: \ell^{\T}(e) < \hat{a}\}$. 
Instead of maintaining the graph explicitly, we only maintain the vertices of $G/\{e \in E: \ell^{\T}(e) < \hat{a}\}$ as well as their degrees. 
This weaker representation turns out to be much easier to maintain.

We maintain a dynamic connectivity data structure on the graph induced by all edges $e$ with $\ell^{\T}(e) < \hat{a}$. 
The connected components of this graph correspond exactly to the vertices of $G/\{e \in E: \ell^{\T}(e) < \hat{a}\}$. 
In order to calculate the degree of these vertices, we maintain the number of edges $e$ with $\ell^{\T}(e) \geq \hat{a}$ incident to each vertex $v \in V(G)$. 
Since the connectivity data structure supplies us with a connectivity witness in the form of a spanning forest, we can maintain a spanning tree of each connected component as a top tree~\cite{AlstrupHLT05}.
By weighting each vertex $v \in V(G)$ by the number of incident edges $e$ with $\ell^{\T}(e) \geq \hat{a}$, we can use the top trees to dynamically maintain the sum of vertex weights of each tree. 
Since loops are counted twice, the sum of vertex weights in a spanning tree correspond exactly to the degree of the corresponding vertex in $G/\{e \in E: \ell^{\T}(e) < \hat{a}\}$. 
By using a min-heap, we can then maintain the minimum degree of $G/\{e \in E: \ell^{\T}(e) < \hat{a}\}$, which can only over-estimate the size of the corresponding trivial cut. 
To report the min-cut, we can then retrieve the cut by searching each vertex with weight strictly greater than $0$ in only $\log m$ time, and then report each edge with $\ell^{\T}(e) \geq \hat{a}$ incident to the vertex in $\log m$ time per edge. For more details, see \Cref{sc:trival_cuts}.

\paragraph{Recourse in Tree-Packing.}
Third, we turn to the tree-packing. We note that the most expensive step in our algorithm is to maintain $1$-respecting cuts on $|\T|=\Theta(\lambda_{\max}^3 \log m)$ minimum spanning trees. Every time such a spanning tree changes, we need to update its $1$-respecting cuts in $\tilde O(\sqrt{n})$ time. Therefore, it is important to bound the \emph{recourse} in these spanning trees: the number of changes to the tree-packing due to an edge update. Naively, we have $O(|\T|^2)=O(\lambda_{\max}^6\log^2m)$ recourse (see e.g.,~\cite{ThorupK00}), to see this consider an edge deletion. This edge is contained in at most $|\T|$ trees. In each of these trees it needs to be replaced, which leads to a weight increase for the replacement edge. This change needs to be propagated in all following trees, so can lead to a series of $|\T|$ changes. We show that we can get away with a recourse of $O(\lambda_{\max}^5\log^2 m)$. See \Cref{lm:TP_recourse_impr} for a formal statement. This improvement is independent of the improvement from \Cref{thm:CutExistence} in the number of trees we need to pack, and might have applications in other instances where a greedy tree-packing is used. 

The first observation is that if the graph has min-cut $\lambda$, then an edge will be in roughly $|\T|/\lambda$ trees, not in all $|\T|$. If $\lambda=\lambda_{\max}$ throughout the update sequence, then this gives the result. However, $\lambda$ can become arbitrary small. To mitigate this fact, we keep $\log \lambda_{\max}$ copies of the tree-packing, $\T_1, \T_2, \dots$, where $\T_i$ corresponds to the case where the min-cut value is in $[2^i,2^{i+1})$. If $\lambda \geq 2^i$, then we have $\T_i$ as a normal tree-packing on $G=(V,E)$. However, if $\lambda< 2^i$, then $\T_i$ is a tree-packing on $(V,E\cup E_{\rm{virtual}})$, where $E_{\rm{virtual}}$ is some set of virtual edges ensuring that the min-cut stays large. These edges get added as $\lambda$ shrinks and deleted as $\lambda$ grows. 
 
When using this global view of the connectivity, we obtain an algorithm with $O(\lambda_{\max}^5\log^2 m)$ amortized recourse. However, by inspecting the $\ell^\T$ values, we have a much more refined view of the connectivity. With a careful analysis, this means we can obtain the same bound on the worst-case recourse. This comes down to deleting a virtual edge when it is no longer needed, and not waiting until the global min-cut has increased. We refer to \Cref{sc:TP_recourse} for the details. 

\subsubsection{Arboricity}
We revisit the relationship between ideal tree-packing and the partition values. We recall that~\cite{Nash61,Tutte61}
\[
    \Phi_G:= \max_{\T} \rm{pack\_val}(\T) = \min_{\P} \rm{part\_val}(\P), 
\]
which Thorup~\cite{Thorup07} used to recursively define the ideal relative loads $\ell^*$. By a close inspection of these definitions, we prove theorem \Cref{thm:arb_vs_packing}:
\[
   \alpha(G) = \frac{1}{\min_{e\in E} \ell^{*}(e)}.
\]
This means that we can simply estimate $\alpha(G)$ by $(\min \ell^\T(e))^{-1}$, with $\ell^T$ some good approximation of $\ell^*$. We recall that by packing $|\T|=\Theta(\lambda \log m/\eps^2)$ greedy spanning trees, we have that
\begin{equation*}
    |\ell^\T(e)-\ell^*(e)| \leq \eps/\lambda, 
\end{equation*}
for all $e\in E$~\cite{Thorup07}.
By maintaining this many greedy spanning trees through dynamic minimum spanning tree algorithms, we obtain update time $\sim \alpha_{\max}^4$, see \Cref{sc:arb_wUB_warm_up}. Similar to the tree-packing for min-cut, we can alter the graph. This means we have an artificially high min-cut of $\lambda=\Theta(\alpha)$, this both decreases the number of trees we need to pack and decreases the recourse  (and hence update time) to $\sim \alpha_{\max}$, see \Cref{sc:arb_recourse}. It is not a direct application of the same result, since there are even disconnected graphs with high arboricity. We make an adaptation such that we can leverage high min-cut anyway. 

% We conjecture that packing $|\T|=\Theta(\lambda \log m/\eps^2)$ greedy spanning trees gives
% \[
%     |\ell^\T(e)-\ell^*(e)| \leq \eps \ell^*(e). 
% \]
% If this is true, then the dependency on $\alpha_{\max}$ in the update time becomes linear. However, to get rid of $\alpha_{\max}$ altogether, we need some additional techniques. 

\paragraph{Simple Graphs.}
First, we consider simple graphs, to obtain a result independent of $\alpha_{\max}$. Suppose $S\subseteq V$ satisfies $\frac{|E(S)|}{|S|-1}=\alpha$. Since in simple graphs we have $|E(S)|\leq |S|(|S|-1)$, we see that 
\[ 
    |S| = \frac{|S|(|S|-1)}{|S|-1} \geq \frac{|E(S)|}{|S|-1}=\alpha. 
\]
This means that for large values of $\alpha=\Omega(1/\eps)$, we have that $1/|S|\approx 1/(|S|-1)$, hence we have $\rho =\max_{S\subseteq V} \tfrac{|E(S)|}{|S|} \approx \max_{S\subseteq V} \tfrac{|E(S)|}{|S|-1}=\alpha$. We combine our algorithm for bounded $\alpha_{\max}$, \Cref{thm:arb_det}, with an efficient algorithm to compute a $(1+\eps)$-approximation of the density~$\rho$~\cite{SawlaniW20,chekuri2024adaptive} to obtain \Cref{thm:arb_simple}. See \Cref{sc:arb_simple} for more details.

\paragraph{Multi-Graphs.}
For multi-graphs, the same argument does not hold: even in subgraphs of size $|S|=2$ we can have many parallel edges, so the density is only a $2$-approximation of the arboricity, even for large values of $\alpha$. Instead, we use a sampling approach to reduce the case of large $\alpha$ to small $\alpha$. The idea is simple (see also e.g., \cite{McGregorTVV15}): if we sample every edge with probability $\Theta(\tfrac{\log m}{\alpha \eps^2})$, we should obtain a graph with arboricity $\Theta(\tfrac{\log m}{\eps^2})$. By maintaining $\log m$ copies with guesses for $\alpha=2,4,8,\dots$, we can find the arboricity. We describe in \Cref{sc:arb_multi_obl} how this gives an algorithm against an oblivious adversary. 

However, an adaptive adversary poses a problem: such an adversary can for example delete sampled edges to skew the outcome. Of course it is too costly to resample all edges after every update. A first idea is to only resample the edges incident to an updated edge. Although this would give us the needed probabilistic guarantee, it has high update time: a vertex can have degree in the sampled graph as big as $n/\alpha$. To combat this, we work with ownership of edges: each edge belongs to one of its endpoints. When an edge is updated, we resample all edges of its owner. When we assign arbitrary owners, this does not give us any guarantees on the degree yet. We remark that ownership can be seen as an orientation of the edge, where the edge is oriented away from the owner. Now can guarantee that each vertex owns at most $(1+\eps)\alpha$ edges by using an out-orientation algorithm~\cite{chekuri2024adaptive}. 
%In fact, we show that each time we sample the edges of a vertex, it gives the right result with high probability. Hence a polynomial number of attacks cannot skew the outcome. 
For more details, see \Cref{sc:arb_multi_adap}.

\subsubsection{Lower Bound for Greedy Tree-Packing}
To show that a greedy tree-packing needs $\Omega(\lambda/\eps^{3/2})$ trees to satisfy $|\ell^\T(e)-\ell^*(e)|\leq \eps/\lambda$ (\Cref{thm:TP_lower_bound}), we give a family of graphs, together with a tree-packing, such that if we pack $o(\lambda/\eps^{3/2})$ trees, then $|\ell^\T(e)-\ell^*(e)|> \eps/\lambda$ for some edge $e\in E$. First, we restrict to the case $\lambda=\Phi=2$. We give a graph that is the union of two spanning trees, see \Cref{fig:modified_pair_intro}. The vertical string in the graph consists of $k$ vertices, the circular part of the remaining $n-k$ vertices. We show that we can over-pack edges at the top of the vertical string (edge $a$ in \Cref{fig:modified_pair_intro}), at the cost of under-packing edges at the bottom of the vertical string (edge $b$ in \Cref{fig:modified_pair_intro}). The load of two neighboring edges can differ by at most $1$, since the packing is greedy. We prove that by packing $\Theta(k^3)$ trees, we can get the optimal difference of $\Theta(k)$ between the highest and lowest edge. Since both of them are supposed to have a value of $\ell^*(a)=\ell^*(b)=1/\Phi$, we obtain an error $|\ell^T(a)-\ell^*(a)|> \Theta(k/k^3)=\Theta(1/k^2)$. By setting $k=\Omega(\eps^{-1/2})$, we obtain the result.  

    \begin{figure}%
        \centering
        \subfloat{{\includegraphics[width=4cm]{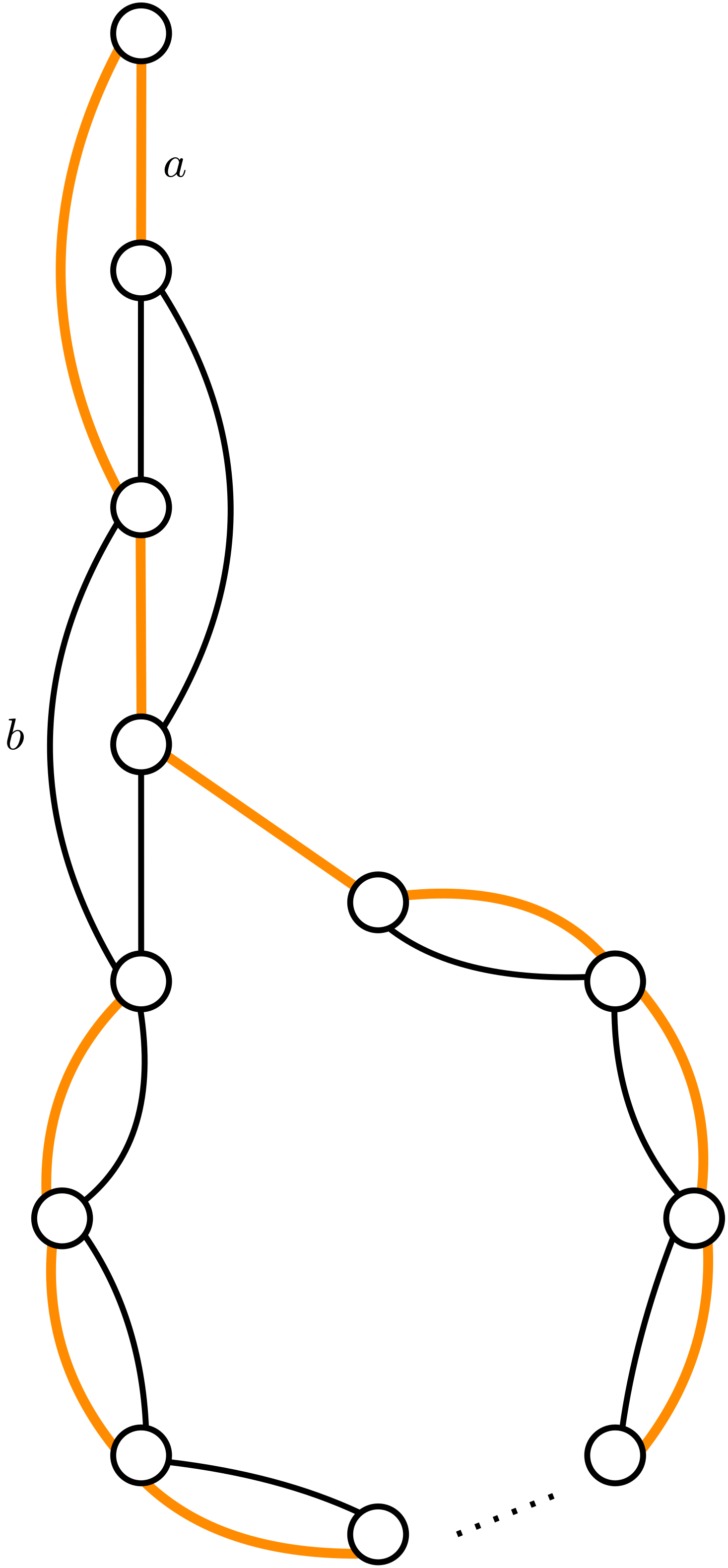} }}%
        \hspace{30mm}%
        \subfloat{{\includegraphics[width=4cm]{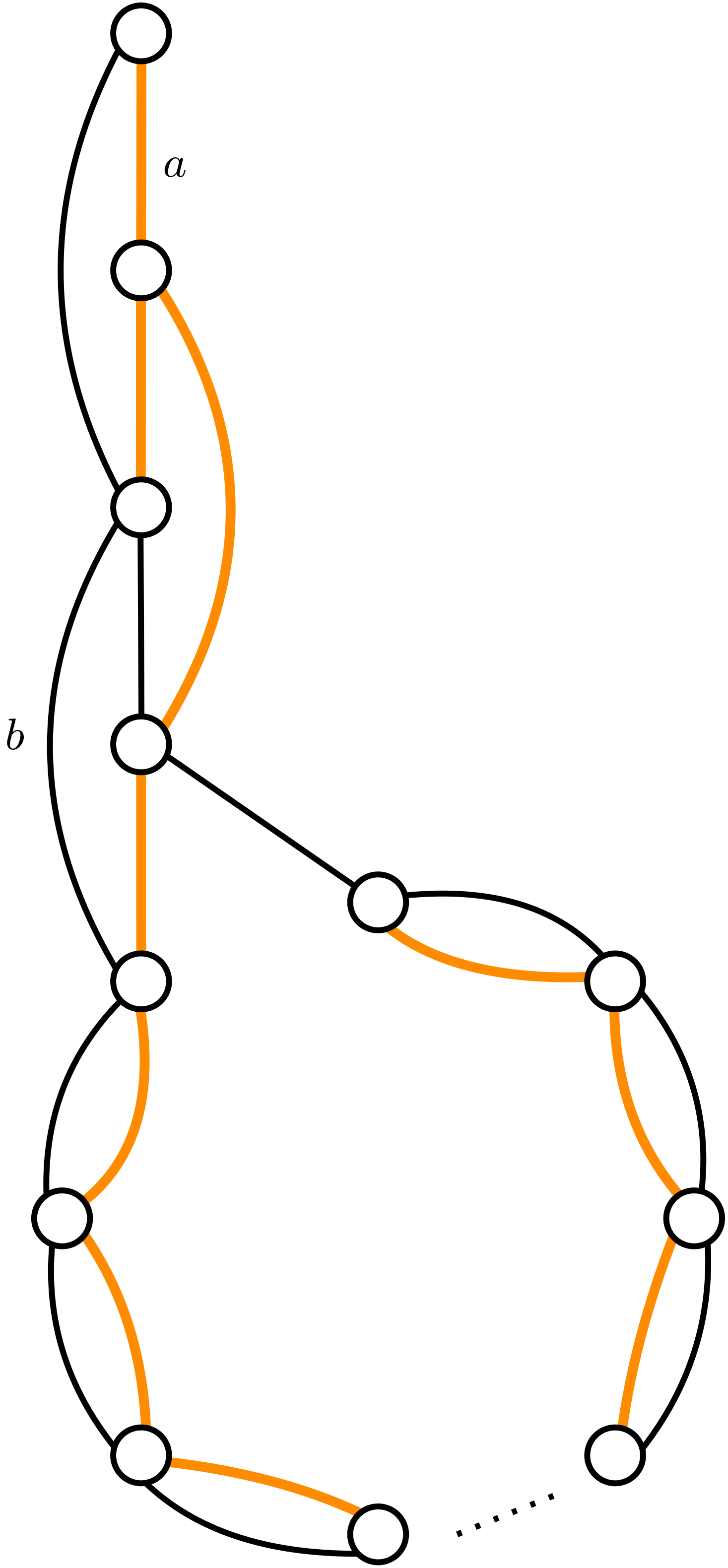} }}%
        \caption{The graph with two trees of the tree-packing colored. Observe that all edges are packed once in these two trees except for $a$, which is packed twice, and $b$ which is never packed. }
        \label{fig:modified_pair_intro}%
    \end{figure}

We generalize this result for any even $\lambda$, by copying every edge in the above construction $\lambda/2$ times and packing trees on each copy in parallel. For more details, we refer to \Cref{sc:LB}.

\subsubsection{Existence of Small Tree-Packing}
Our last technical contribution is \Cref{thm:existence}, showing that there exists a small tree-packing. We first consider the case that $\ell^*(e)=1/\Phi$ for every edge $e\in E$. The proof is rather simple, and based on Kaiser's~\cite{Kaiser12} elegant proof of the tree-packing theorem. This is the theorem initially proved by Tutte~\cite{Tutte61} and Nash-Williams~\cite{Nash61} that shows that $\Phi$ (in our notation) is well defined. Another phrasing is as follows: a graph $G$ contains $k$ pairwise disjoint spanning trees if and only if for every partition $\P$ of $V(G)$, the graph $G/\P$ has at least $k(|\P|-1)$ edges. We generalize this to $k$ trees + 1 forest and show that extending this forest to an arbitrary spanning tree gives the required packing. 

Next, we use the ideal load decomposition to generalize this to any graph. The trees in our packing are simply unions of the trees on each component. For more details, see \Cref{sc:existence}.

\newpage
\section{Min-Cut}
In this section, we reexamine the relation between min-cut and tree-packing. 
The main technical contribution is the following theorem.

\thmCutExistence*

Thorup~\cite{Thorup07} showed that if $|\T|=\Omega(\lambda^7\log^3m)$, then some $T\in \T$ $1$-respects a min-cut. With this new result, we significantly decrease the number of trees we need to pack, leading to a significant speed-up.

This section is organized as follows. In \Cref{sc:cut_existence}, we give a proof of \Cref{thm:CutExistence}. In \Cref{sc:trival_cuts}, we show how to estimate the trivial cuts of part \ref{item:trivial_cut} in \Cref{thm:CutExistence} efficiently. In \Cref{sc:TP_recourse}, we show how to decrease the recourse in the tree-packing, which leads to a faster running time. In \Cref{sc:mincut_par}, we provide our dynamic min-cut algorithm for bounded $\lambda$. And finally, in \Cref{sc:mincut_gen}, we provide our dynamic min-cut algorithm for general $\lambda$. 

\paragraph{Preliminaries.}
Before we move on to the proof of \Cref{thm:CutExistence}, we first make the theorem statement more precise, and we make the notation more concise. 

For some set of edges $E'\subseteq E$, we denote $G/E'$ for the graph where we contract all edges in $E'$. Suppose $uv\in E\setminus E'$ and $u$ and $v$ are contracted in $G/E'$, then we keep $uv$ as a self-loop in $G/E'$. Such edges count \emph{twice} towards the degree of the corresponding vertex in $G/E'$. 

Let $E_{\circ a}^*:= \{e\in E : \ell^*(e) \circ a\}$  and $E_{\circ a}^\T:= \{e\in E : \ell^\T(e) \circ a\}$ for $\circ \in \{\geq, >, \leq <, =\}$. Further let $G_a := G/E^\T_{<a}$. Then in our new notation, part \ref{item:trivial_cut} of \Cref{thm:CutExistence} corresponds to a trivial cut in $G_a=G/E^\T_{<a}$ for $a=\tfrac{2}{\lambda}-\tfrac{1}{2\lambda^2}$.

Let $\MC\subseteq E$ denote the set of all edges that are contained in at least one min-cut. And let $\overline{a}$ denote the largest value in $\R$ such that $E^*_{\geq \overline{a}}\supseteq \MC$.

We recall the definition of $\Phi$. 
\begin{lemma}[\cite{Nash61,Tutte61}] \label{lma:NashTutte}
We have
    $$\Phi_G:= \max_{\T} \rm{pack\_val}(\T) = \min_{\P} \rm{part\_val}(\P).$$
\end{lemma}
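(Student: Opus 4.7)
The plan is to establish the two inequalities $\max_\T \rm{pack\_val}(\T) \leq \min_\P \rm{part\_val}(\P)$ and the reverse one separately.

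For the easy direction, I would use a counting argument. Fix any tree-packing $\T$ and any partition $\P$ of $V(G)$. Each $T \in \T$, after contracting the parts of $\P$, becomes a connected spanning subgraph of $G/\P$, and so retains at least $|\P|-1$ edges. Summing over $T \in \T$ and reindexing by edge yields
\[
|\T|(|\P|-1) \;\leq\; \sum_{T \in \T} |E(T) \cap E(G/\P)| \;=\; \sum_{e \in E(G/\P)} L^\T(e) \;\leq\; |E(G/\P)| \cdot \max_e L^\T(e).
\]
Dividing by $|\T|(|\P|-1)\max_e L^\T(e)$ rewrites this as $\rm{pack\_val}(\T) \leq \rm{part\_val}(\P)$, establishing weak duality.

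For equality, the plan is to reduce to the classical integer tree-packing theorem of Nash-Williams and Tutte: a multi-graph $H$ contains $k$ edge-disjoint spanning trees if and only if $|E(H/\P)| \geq k(|\P|-1)$ for every partition $\P$. Let $G^q$ denote the multi-graph obtained from $G$ by replacing every edge with $q$ parallel copies. An integer packing of $k$ edge-disjoint spanning trees in $G^q$, re-read as a tree-packing $\T$ of $G$ by identifying parallel copies, satisfies $|\T|=k$ and $\max_e L^\T(e) \leq q$, so $\rm{pack\_val}(\T) \geq k/q$. By the integer theorem applied to $G^q$, the largest feasible $k$ equals $\lfloor \min_\P q\,|E(G/\P)|/(|\P|-1)\rfloor$. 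Since each partition value is rational with denominator at most $n-1$, choosing $q$ to be a sufficiently divisible integer (for instance $q = (n-1)!$) makes the floor tight, and the resulting $\T$ realises $\rm{pack\_val}(\T) = \min_\P \rm{part\_val}(\P)$.

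The main obstacle is the integer Nash-Williams/Tutte theorem itself, which is nontrivial. Since the paper already attributes the stated identity to \cite{Nash61,Tutte61}, I would simply invoke the integer version as a black box; alternatively, it can be derived from Edmonds' matroid union theorem applied to $k$ disjoint copies of the graphic matroid of $G$, whose rank admits the required partition formula.
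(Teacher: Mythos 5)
The paper does not prove this lemma; it simply cites Nash-Williams and Tutte and uses the identity as a black box. So there is no paper proof to compare against, and the relevant question is whether your argument is sound on its own terms.

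It is. The weak-duality direction is a clean double-count: every spanning tree, after contracting $\P$, keeps a connected spanning subgraph of $G/\P$ and hence at least $|\P|-1$ non-loop edges, giving $|\T|(|\P|-1)\le\sum_{e\in E(G/\P)}L^\T(e)\le|E(G/\P)|\max_{e}L^\T(e)$, which rearranges to $\rm{pack\_val}(\T)\le\rm{part\_val}(\P)$. (Your stated divisor $|\T|(|\P|-1)\max_e L^\T(e)$ produces $1/\max_e L^\T(e)\le\rm{part\_val}(\P)/|\T|$; multiplying by $|\T|$ finishes it, so the bookkeeping is fine.) The strong-duality direction via $G^q$ with $q=(n-1)!$ is a standard and correct way to pass from the integer Tutte--Nash-Williams theorem to the fractional/LP-duality statement: clearing denominators makes the floor tight, so $k^\ast/q$ exactly equals $\min_\P\rm{part\_val}(\P)$, and reinterpreting the $k^\ast$ disjoint trees of $G^q$ as a packing of $G$ with max load at most $q$ gives a packing achieving that value. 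Invoking the integer theorem as a black box is reasonable given that the lemma itself is attributed to the same sources, and your alternative derivation via matroid union is also standard. Overall the argument is complete and correct.
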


We will repeatedly use the following simple lemma concerning $\Phi$ and the min-cut $\lambda$ as mentioned in the introduction. We include a proof for completeness. 
\lemphilambda*
% \begin{lemma}[\cite{karger2000minimum}]\label{lm:phi_lambda}
%     $\lambda/2 < \Phi \leq \lambda$.
% \end{lemma}
\begin{proof}
    We recall that $\Phi = \min_{\P} \tfrac{|E(G/\P)|}{|\P|-1}$, this immediately gives that 
    \[\Phi \leq \min_{\P=\{A,B\}} \frac{|E(A,B)|}{2-1}= \lambda.\]
    Now consider an arbitrary partition $\P$. We note that $2\tfrac{|E(G/\P)|}{|\P|}$ is the average value of a trivial cut in $G/ \P$. Since this corresponds to a cut in $G$, we get  $2\tfrac{|E(G/\P)|}{|\P|}\geq \lambda$. Now we see
    \[
        \Phi \geq \frac{|E(G/\P)|}{|\P|-1} > \frac{|E(G/\P)|}{|\P|} \geq \lambda/2. %\qedhere
    \]
\end{proof}

We have the following lemma showing that a sufficiently large greedy tree-packing approximates the ideal packing, introduced in \Cref{sc:intro_TP}, well. 
\begin{lemma}[\cite{Thorup07}]\label{lm:tree_packing_thorup}
    A greedy tree-packing $\T$ with $|\T|\geq 6\lambda \log m/\eta^2$ trees, for $\eta<2$, satisfies
    \begin{align*}
        |\ell^\T(e)-\ell^*(e)| \leq \eta/\lambda,
    \end{align*}
    for all $e\in E$.
\end{lemma}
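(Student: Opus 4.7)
My plan is to combine a multiplicative-weights-update (MWU) style potential argument with the recursive structure of the ideal packing $\ell^*$. The proof splits into the upper bound $\ell^\T(e) \leq \ell^*(e) + \eta/\lambda$ and the symmetric lower bound.

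For the upper bound, I would introduce the potential $\Psi_t = \sum_{e \in E} (1+\beta)^{L_t(e)}$ for a parameter $\beta = \Theta(\eta/\lambda)$ to be chosen. Since $(1+\beta)^{x}$ is monotone in $x$, the tree $T_{t+1}$ selected greedily as an MST with respect to $L_t$ is also (approximately) optimal with respect to the reweighted costs $(1+\beta)^{L_t(e)}$, up to second-order terms in $\beta$. By the tree-packing/partition duality of Lemma \ref{lma:NashTutte}, the minimum weight of such a spanning tree is at most $\tfrac{n-1}{\Phi}$ times the average edge weight, yielding the recurrence $\Psi_{t+1} \leq \Psi_t(1 + \beta/\Phi + O(\beta^2))$. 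Iterating gives $\Psi_T \leq m \exp\bigl(T(\beta/\Phi + O(\beta^2))\bigr)$, and combining with the pointwise lower bound $\Psi_T \geq (1+\beta)^{L_T(e)}$ produces $L_T(e)/T \leq 1/\Phi + O(\beta) + \log(m)/(\beta T)$. An optimal choice of $\beta$ combined with $\Phi > \lambda/2$ from Lemma \ref{lm:phi_lambda} yields $\ell^\T(e) \leq 1/\Phi + \eta/\lambda$ for top-level edges as soon as $T \geq 6\lambda \log m/\eta^2$.

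For an edge $e$ that sits deeper in the recursive definition of $\ell^*$, I would apply the same argument to the subgraph $G[S]$ of the appropriate component $S \in \P^*$. The key structural observation is that each spanning tree of $G$ restricted to $S$ is a connected spanning subgraph of $G[S]$ (because $S$ contracts to a single vertex in $G/\P^*$), so the global greedy tree-packing induces, up to controllable offset, a greedy tree-packing of $G[S]$ whose loads converge to $1/\Phi_{G[S]} = \ell^*(e)$. For the lower bound $\ell^\T(e) \geq \ell^*(e) - \eta/\lambda$, I would combine a symmetric potential $\Psi'_t = \sum_e (1-\beta)^{L_t(e)}$ with the cut constraint that every spanning tree crosses every cut $C$ at least once, i.e., $\sum_{e \in C} L_t(e) \geq t$. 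This converts the upper bounds on other edges of a tight cut containing $e$ into a matching lower bound on $L_t(e)$.

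The main obstacle is making the recursive descent rigorous: the greedy algorithm operates globally on $G$, not on each component $S \in \P^*$ separately, so one must argue that the induced sequence of restricted trees on $G[S]$ really does behave like a greedy packing of $G[S]$ (with an additive offset coming from whichever top-level edge is used to stitch $S$ into the surrounding tree). A secondary difficulty is getting the precise constant $6$ in the iteration count, which requires careful tuning of $\beta \sim \eta/\lambda$ and explicit bookkeeping of the Taylor error in the exponential reweighting and in the duality bound; the constant is the reason one cannot simply quote a black-box MWU statement.
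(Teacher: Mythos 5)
The lemma is cited from \cite{Thorup07}; the paper under review does not reproduce a proof, so there is no in-paper argument to compare against directly. Your proposal therefore has to stand on its own against Thorup's original argument, which is a combinatorial induction over the levels of the ideal load decomposition rather than an MWU potential argument. In fact the paper's related-work discussion explicitly flags the relationship between greedy tree-packing and MWU as an open conjecture (\cite{ChekuriQX20,HarbQC22}), which should be a warning sign that the MWU framing is not a drop-in replacement here.

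Your top-level analysis is roughly sound: since $(1+\beta)^{x}$ is strictly increasing, the MST with respect to $L_t$ is also the MST with respect to the reweighted costs, and the tree-packing/partition duality gives $\min_T \sum_{e\in T} w(e) \le \sum_e w(e)/\Phi$, which is what the potential recurrence actually needs (your phrasing ``$\tfrac{n-1}{\Phi}$ times the average edge weight'' is not the same quantity; it should be the total weight divided by $\Phi$). With $\Phi>\lambda/2$ this yields $\ell^{\T}(e) \le 1/\Phi + O(\eta/\lambda)$ for all edges, which upper-bounds $\ell^{\T}(e)$ by $\max_{e'}\ell^{*}(e') + \eta/\lambda$, not by $\ell^{*}(e) + \eta/\lambda$. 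For the majority of edges $\ell^{*}(e)$ is strictly smaller than $1/\Phi$, so the top-level potential argument proves nothing about them.

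The genuine gap is exactly where you flag it: the recursive descent. You cannot simply restrict the global greedy packing to $G[S]$ and claim the restriction is a greedy packing of $G[S]$. The trees $T_t$ are MSTs of $G$ with respect to the global load; inside $S$ a tree $T_t$ might take a non-MST subtree of $G[S]$ because the global load function interleaves edges of $G[S]$ with edges crossing $S$ during Kruskal's algorithm. Moreover, the number of edges a given $T_t$ contributes to $G[S]$ is always $|S|-1$ (so the induced subgraph is a spanning tree of $G[S]$), but the relative ordering among loads inside $S$ can be corrupted by the presence of crossing edges with comparable loads, especially near the boundary between two levels of the ideal decomposition where loads are close. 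Thorup's argument confronts this head-on with a careful threshold-level induction; the MWU potential machinery does not obviously compose across the recursion, and making it do so would require a new idea you have not supplied. As it stands the proposal establishes only a concentration of $\max_e \ell^{\T}(e)$ around $1/\Phi$, which is the easy part, and outlines but does not carry out the hard part.

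Separately, the lower bound sketch (potential $\sum_e (1-\beta)^{L_t(e)}$ plus the cut constraint $\sum_{e\in C} L_t(e)\ge t$) again only controls edges of a top-level minimum cut; the same recursion problem reappears for edges deeper in the decomposition.
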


\subsection{A Proof of Theorem~\ref{thm:CutExistence}} \label{sc:cut_existence}
Let us first briefly discuss the intuition behind the proof. We start by using reasoning similar to arguments appearing in Thorup~\cite{Thorup07} to show that if some edge $e' \in \MC$ has $\ell^{*}(e')$ small enough, then any min-cut containing $e'$ will be crossed once by some tree even if we only pack a relatively small amount of trees. This follows from the observation that if 
\[
\sum_{e \in C} \ell^{\T}(e) < 2
\]
for some cut $C$, then some tree crosses $C$ at most once. Indeed, the average number of times $C$ is crossed by a tree is given by
\[
\frac{1}{|\T|}\sum_{e \in C} L^{\T}(e) = \sum_{e \in C} \ell^{\T}(e)
\]
and at least one tree will cross $C$ at most the average number of times. Hence, if $\ell^{*}(e')$ is small enough, i.e., far enough below $\frac{2}{\lambda}$, then it is sufficient to pack $O(\lambda^3 \log m)$ trees in order to concentrate $\sum_{e \in C} \ell^{\T}(e)$ below $2$. It is important to note here that in order to get this dependency on $\lambda$ one has to use the fact that many of the edges in $C$ will have $\ell^{*}$ values sufficiently far below $\frac{2}{\lambda}$. 

From here on, our argument goes in a completely different direction compared to those of Thorup. The starting point for the second part of \Cref{thm:CutExistence} is that we may assume $\overline{a}$ is sufficiently close to $\tfrac{2}{\lambda}$. 
The end goal is to show that for an appropriate value $\hat{a}$ only slightly smaller than $\overline{a}$, and for any large enough greedy tree-packing $\T$, the graph $G/E^{\T}_{<\hat{a}}$ must contain at least one trivial min-cut. We show this in two main steps. First, we show that $G/E^{*}_{<\bar{a}}$ contains at least one trivial min-cut. 
To show this we may assume this is not the case for contradiction, and then count the edges of the graph in two different ways: once using the $\ell^{*}$ values and once by summing the degrees. This then yields a contradiction since $\hat{a}^{-1}$ is sufficiently smaller than $\tfrac{\lambda + 1}{2}$.

Next, we show that any edge $e$ contracted in $G/E^{\T}_{<\hat{a}}$ has sufficiently small $\ell^{*}(e)$. To do so, we exploit the fact that if $\ell^{*}(e) < \overline{a}$, then $e$ is not in any min-cut by construction. We focus on an arbitrary trivial min-cut $X$ of $G/E^{*}_{<\hat{a}}$. The remaining part of the proof then boils down to showing that $\Phi_{G[X]}$ is sufficiently large.
To do so, we formulate an optimization problem which lower bounds the value $\Phi_{G[X]}$ can take, and then show that the solution to the optimization problem is large enough. To do so, we heavily exploit that no edge in $G[X]$ belongs to a min-cut, and so only very few trivial cuts of any partition of $G[X]$ can deviate from having degree at least $\lambda + 1$. 

As outlined above, we begin by showing that we must have some $1$-respecting cut if $ \overline{a}$ is sufficiently far from $\tfrac{2}{\lambda}$. 
%The following lemma can be established using arguments similar to those of Thorup~\cite{Thorup07}: 
\begin{lemma} \label{lma:1resp}
    If $\a\leq \tfrac{2}{\lambda} - \tfrac{1}{c \lambda^2} + \tfrac{1}{\gamma \lambda^2}$ with $\gamma \geq 4c$ then \ref{item:1resp} holds. 
\end{lemma}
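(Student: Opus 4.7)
The strategy will be to exhibit a specific min-cut $C$ whose sum of ideal loads $\sum_{e\in C}\ell^*(e)$ is bounded strictly below $2$ by $\Omega(1/\lambda)$. By Lemma~\ref{lm:tree_packing_thorup}, a greedy tree-packing of size $|\T|=\Theta(\lambda^3\log m)$ then concentrates the corresponding sum $\sum_{e\in C}\ell^\T(e)$ below $2$; since this quantity equals the average $(1/|\T|)\sum_{T\in\T}|T\cap C|$ and every spanning tree crosses $C$ at least once, some tree must cross $C$ exactly once, proving item~\ref{item:1resp}.

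Write $\delta:=\tfrac{1}{c}-\tfrac{1}{\gamma}$; since $\gamma\ge 4c$ we have $\delta\ge\tfrac{3}{4c}$, so $\delta$ is a positive constant in $c,\gamma$, and the hypothesis rewrites as $\a\le \tfrac{2}{\lambda}-\tfrac{\delta}{\lambda^2}$. First I would pick $e'\in\MC$ with $\ell^*(e')=\a$ (which exists by the definition of $\a$) and let $C$ be any min-cut containing $e'$. Next I would descend through the ideal decomposition to find the deepest subgraph $G[S_0]$ in which $e'$ still appears as a top-level edge, so that $1/\Phi_{G[S_0]}=\a$. Because $\Phi$ is non-decreasing along the recursion, every edge of $E(G[S_0])$ has ideal load $\le 1/\Phi_{G[S_0]}=\a$; combining with $\ell^*(e)\ge\a$ for every $e\in\MC$, each min-cut edge lying in $E(G[S_0])$ has ideal load exactly $\a$.

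Let $K'$ be the number of min-cut edges in $E(G[S_0])$. I would then prove $K'\ge \Phi_{G[S_0]}=1/\a$ by a refinement argument applied to the top-level partition $\P^*_{S_0}$ of $G[S_0]$: either some part of $\P^*_{S_0}$ spans the two sides of the cut $C=\{A,V\setminus A\}$, in which case splitting it introduces at least $\Phi_{G[S_0]}$ new inter-part edges (by optimality of $\P^*_{S_0}$ via Lemma~\ref{lma:NashTutte}), or no part spans $C$ and the induced $2$-partition of the contracted graph $G[S_0]/\P^*_{S_0}$ is a valid partition of value $\ge\Phi_{G[S_0]}$. Since $\a<2/\lambda$, this yields $K'>\lambda/2$. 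Using $\ell^*(e)=\a$ inside $G[S_0]$ and $\ell^*(e)\le 1/\Phi\le 2/\lambda$ elsewhere, the hypothesis gives
\[
\sum_{e\in C}\ell^*(e)\;\le\;K'\,\a+(\lambda-K')\cdot\tfrac{2}{\lambda}\;=\;2-\tfrac{K'\delta}{\lambda^2}\;\le\;2-\tfrac{\delta}{2\lambda}.
\]
Finally, applying Lemma~\ref{lm:tree_packing_thorup} with $\eta:=\delta/(4\lambda)$ requires $|\T|\ge 6\lambda\log m/\eta^2=O(\lambda^3\log m/\delta^2)=O(\lambda^3\log m)$ greedy trees; summing the per-edge error bound $|\ell^\T(e)-\ell^*(e)|\le \eta/\lambda$ over the $\lambda$ edges of $C$ yields $\sum_{e\in C}\ell^\T(e)\le 2-\delta/(4\lambda)<2$, which closes the averaging argument sketched at the start.

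\textbf{Main obstacle.} The critical step is identifying the subgraph $G[S_0]$ and establishing the count $K'\ge 1/\a$. The former relies on the monotonicity of $\Phi$ along the ideal recursion to ensure no edge inside $G[S_0]$ exceeds load $\a$; the latter is a refinement argument that must be carried out at the \emph{right} level of the recursion (namely inside $G[S_0]$, not at the top of $G$), with care taken to handle both the ``spanning part'' and ``no spanning part'' subcases of $\P^*_{S_0}$ with respect to $C$. It is precisely the extraction of a factor-$\lambda$ (rather than factor-$\lambda^2$) gap below $2$ that makes the bound $|\T|=O(\lambda^3\log m)$ achievable instead of a much larger $|\T|=O(\lambda^5\log m)$ that a naive single-edge argument would yield.
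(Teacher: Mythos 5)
Your proposal is correct and takes essentially the same approach as the paper: pick $e'\in\MC$ achieving $\ell^*(e')=\a$, fix a min-cut $C$ containing it, show that at least $\lambda/2$ edges of $C$ have $\ell^*=\a$, deduce $\sum_{e\in C}\ell^*(e)\le 2-\Omega(1/\lambda)$, and close with concentration plus averaging. The only (cosmetic) difference is in how the $\lambda/2$ count is justified — the paper bounds the min-cut of the contracted graph $(G/E^{*}_{<\a})[E^{*}_{=\a}]$ via $\Phi>\lambda/2$, whereas you descend the ideal decomposition to the subgraph $G[S_0]$ where $e'$ is top-level and argue via a case split on $\P^*_{S_0}$; both reduce to the same fact, and your case split could be compressed by noting directly that $\{S_0\cap A,\ S_0\setminus A\}$ is a cut of $G[S_0]$ of size at least $\Phi_{G[S_0]}=1/\a>\lambda/2$.
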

\begin{proof}
      By assumption there exists some $e' \in \MC$ such that $\ell^{*}(e') = \a$. Now pick any min-cut $C = (A,B)$ containing $e'$ and fix it for the remainder of the proof. 
      We will first show that $|E^{*}_{= \a} \cap C| \geq \tfrac{\lambda}{2}$. Indeed, consider the graph $G/E^{*}_{<\a}$. Since $C$ separates the endpoints of $e'$ (or $C-e'$ is a smaller cut), it must be that $C$ separates the endpoints of $e'$ in $G[E^{*}_{=\a}]$. 
      Now we note that $\Phi_{X} = 1/\a \geq \tfrac{\lambda}{2}$, by assumption on $\a$ for any connected component $X$ of $(G/E^{*}_{<\a})[E^{*}_{=\a}]$. Next, we apply \Cref{lm:phi_lambda} to the connected component $X_{e'}$ of $(G/E^{*}_{<\a})[E^{*}_{=\a}]$ containing $e'$, which gives that the min-cut in this graph has size at least $\tfrac{\lambda}{2}$. In particular, the smallest cut in $X_{e'}$ separating the endpoints of $e'$ has size at least $\tfrac{\lambda}{2}$. Hence, it must be that $|E^{*}_{= \a} \cap C| \geq \tfrac{\lambda}{2}$. 

      Following the earlier discussion, we now want to argue that 
      \[
            \sum_{e \in C} \ell^{\T}(e) < 2.
      \]
      This follows from the fact that we pack $|\T| = \Omega( \tfrac{\lambda\log m}{\eps^2})$ trees, and so we achieve concentration when $\eps < \tfrac{3}{8c\lambda}$ and thus we can write:
      \[
          \sum_{e \in C} \ell^{\T}(e) \leq \sum_{e \in C} \ell^{*}(e) + \lambda \cdot{} \frac{\eps}{\lambda} 
                                        \leq \lambda \cdot{} \frac{2}{\lambda} - \frac{\lambda}{2}(\frac{1}{c \lambda^2} - \frac{1}{\gamma \lambda^2}) + \eps 
                                        \leq 2 - \frac{3}{8c \lambda} + \eps 
                                        < 2,
      \]
    which per the previous discussion implies that \ref{item:1resp} holds.
\end{proof}

Following the proof out-line in the beginning of this section, we next show that if $\a$ is greater than or equal to the cut-off value of $\tfrac{2}{\lambda} - \tfrac{1}{c \lambda^2} + \tfrac{1}{\gamma \lambda^2}$, then it is sufficiently close to $\tfrac{2}{\lambda}$ for \ref{item:trivial_cut} to hold. 
We begin by showing that $G/E^{*}_{<\a}$ must contain at least one trivial min-cut that corresponds to a min-cut of $G$.
\begin{lemma} \label{lma:trivexist}
    Suppose that  $\gamma \geq 4c > 4$ and that $\a$ is as outlined above, then $G/E^{*}_{<\tfrac{2}{\lambda} - \tfrac{1}{c \lambda^2} + \tfrac{1}{\gamma \lambda^2}}$ contains at least one trivial min-cut that corresponds to a min-cut of $G$.
\end{lemma}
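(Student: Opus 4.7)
The plan is to argue by contradiction. Suppose every trivial cut of $H := G/E^*_{<\hat{a}}$, with $\hat{a} = \tfrac{2}{\lambda} - \tfrac{1}{c\lambda^2} + \tfrac{1}{\gamma\lambda^2}$, has size at least $\lambda + 1$. Then the Handshake Lemma gives the lower bound $2|E(H)| \geq (\lambda+1)|V(H)|$, and the goal is to derive a strictly smaller upper bound so as to obtain a contradiction using $\gamma \geq 4c > 4$.

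The technical heart is the identity
\[
\sum_{e \in E(H)} \ell^*(e) \;=\; |V(H)| - 1,
\]
which I plan to establish first. Combined with $\ell^*(e) \geq \hat{a}$ on $E(H)$, it yields the desired upper bound $|E(H)| \leq (|V(H)|-1)/\hat{a}$. I would prove the identity by induction on the recursion depth of the ideal decomposition of $\ell^*$. Let $\P^*$ be the top-level partition of $G$ realising $\Phi_G$. If $1/\Phi_G \geq \hat{a}$, the top-level edges survive in $H$ and contribute exactly $|\P^*|-1$ to the sum (using $|E(G/\P^*)| = \Phi_G(|\P^*|-1)$), while each part $S \in \P^*$ contributes $k_S - 1$ by induction, where $k_S$ is the number of components of $(S, E^*_{<\hat{a}}(G[S]))$; these telescope to $|V(H)|-1$. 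If instead $1/\Phi_G < \hat{a}$, I first establish the monotonicity $\Phi_{G[S]} \geq \Phi_G$ for each $S \in \P^*$ --- any strictly better sub-partition of $S$ would merge with $\P^* \setminus \{S\}$ to contradict the optimality of $\P^*$ --- and then by induction every recursive sub-problem falls in the same regime and $(V, E^*_{<\hat{a}})$ is fully connected, so both sides of the identity equal zero.

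Once the identity is in hand, the rest is bookkeeping. Being in the complementary regime of \Cref{lma:1resp} we have $\a > \hat{a}$, so every min-cut edge has $\ell^* > \hat{a}$ and survives the contraction; any min-cut therefore induces a non-trivial bipartition of $V(H)$, forcing $|V(H)| \geq 2$. Combining the two bounds on $|E(H)|$ with $|V(H)| \geq 2$ rearranges to $(\lambda+1)\hat{a} < 2$. A short expansion gives $(\lambda+1)\hat{a} = 2 + \tfrac{1}{\lambda}\bigl[2 + \tfrac{\lambda+1}{\lambda}\bigl(\tfrac{1}{\gamma} - \tfrac{1}{c}\bigr)\bigr]$, and using $c > 1$ together with $\tfrac{1}{c} - \tfrac{1}{\gamma} \leq \tfrac{1}{c} < 1 \leq \tfrac{2\lambda}{\lambda+1}$ makes the bracket strictly positive, so $(\lambda+1)\hat{a} > 2$ --- the contradiction. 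Hence some $v \in V(H)$ has $\deg_H(v) \leq \lambda$; since this degree equals the size of a cut of $G$ (which is at least $\lambda$), it equals $\lambda$ exactly, yielding the claimed trivial min-cut of $H$ that is also a min-cut of $G$.

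The main obstacle will be the identity itself. The loads $\ell^*$ are defined by a nested recursion along the partition hierarchy, whereas $G/E^*_{<\hat{a}}$ is a global contraction whose interaction with that hierarchy depends delicately on $\hat{a}$; in particular, handling the case where the top-level edges themselves are thin enough to be contracted requires the auxiliary monotonicity $\Phi_{G[S]} \geq \Phi_G$. Once these pieces are in place, the two cases of the induction fit together cleanly and everything else is a short calculation.
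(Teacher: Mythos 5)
Your proposal is correct and follows essentially the same route as the paper's proof: assume no trivial cut of $G/E^*_{<\hat a}$ is a min-cut, bound $|E^*_{\geq\hat a}|$ from above by pushing $\ell^*\geq\hat a$ through the ideal load decomposition, bound it from below by the Handshake Lemma using the $\lambda+1$ degree floor, and read off a contradiction. The only cosmetic differences are that you package the decomposition step as the clean identity $\sum_{e\in E(H)}\ell^*(e)=|V(H)|-1$ and prove it by a tidy induction along the recursion hierarchy, whereas the paper unrolls the same telescoping sum explicitly over recursion levels (and states it slightly conservatively as an inequality, even though it is an equality, exactly as you observed). Your use of the $\Phi$-monotonicity $\Phi_{G[S]}\geq\Phi_G$ in the $1/\Phi_G<\hat a$ case is the paper's Observation~\ref{lm:Phi_inc}. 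You also make explicit, via $\a\geq\hat a$ on min-cut edges, why $|V(H)|\geq 2$; the paper leaves this implicit, relying on the lemma only being invoked in that regime, so your version is in fact a touch more careful. The final algebra deriving $(\lambda+1)\hat a>2$ from $\gamma\geq 4c>4$ and $c>1$ checks out.
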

\begin{proof}
    As hinted to earlier, we will assume that this is not the case for contradiction. We will then count the edges of $G' = G/E^{*}_{<\tfrac{2}{\lambda} - \tfrac{1}{c \lambda^2} + \tfrac{1}{\gamma \lambda^2}}$ twice to reach the contradiction. 
    To do so, let $\P_0$ be the partition for $G$ with $\rm{part\_val}(\P_0)=\Phi$. Next, we recurse on the subgraphs $G[S]$ for each $S\in \P_0$. 
    To keep notation simple, we do not introduce double indices, but just continue numbering: first, we write $\P_0= S_1, \dots, S_{|\P_0|}$.  Then, for the partition $\P_1$ of $S_1$, we write $\P_1=\{S_{|\P_0|+1}, \dots,S_{|\P_0|+|\P_1|}\} $. Next, for the partition $\P_2$ of $S_2$, we write $\P_2=\{S_{|\P_0|+|\P_1|}, \dots,S_{|\P_0|+|\P_1|+|\P_2|}\} $. We continue until we reach the partition of $S_{|\P_0|}$, and then go down a level to $S_{|\P_0|+1}$. Formally, we partition the next set $S_i$, with minimal $i$. 
    
    The above process gives us a total decomposition with $\P_i$ the partition on $G[S_i]$, where each $S_i\in \P_j$ for some $j< i$, for $i\in \mathcal{I}$ for some index set $\mathcal I$. We are only interested in parts of the decomposition, and so we will only recurse on $S_i\in \P_j$ if $\Phi_{G[S_i]} \leq (\tfrac{2}{\lambda} - \tfrac{1}{c \lambda^2} + \tfrac{1}{\gamma \lambda^2})^{-1}$.
    
    Now we have the containment:
    \begin{align*}
        E^{*}_{\geq \frac{2}{\lambda} - \frac{1}{c \lambda^2} + \frac{1}{\gamma \lambda^2}} \subset \bigcup_{i\in \mathcal I} E(G[S_i]/\P_i).
    \end{align*}
    Indeed, any edge in $E^{*}_{\geq \frac{2}{\lambda} - \frac{1}{c \lambda^2} + \frac{1}{\gamma \lambda^2}}$ will be included since by \Cref{lma:NashTutte}, we have that $\Phi_{G[S]} = (\max_{e \in G[S_i]} \ell^{*}(e))^{-1}$, and so the recursion will not terminate until every edge in $E^{*}_{\geq \frac{2}{\lambda} - \frac{1}{c \lambda^2} + \frac{1}{\gamma \lambda^2}}$ belongs to some subgraph. 
    In fact, the containment is an equality.
    Next, we have 
    \begin{align*}
        |E^{*}_{\geq \frac{2}{\lambda} - \frac{1}{c \lambda^2} + \frac{1}{\gamma \lambda^2}}| &\leq \sum_{i\in \mathcal I} |E(G[S_i]/\P_i)| \\
        &\leq \sum_{i\in \mathcal I} \Phi_i (|V_{i}| -1) \\
        &\leq (\frac{2}{\lambda} - \frac{1}{c \lambda^2} + \frac{1}{\gamma \lambda^2})^{-1}\sum_{i\in \mathcal I} (|V_{i}| -1), 
    \end{align*}
    where $\Phi_i$ is the packing value of $\P_i$, and $V_{i}=V(G[S_i]/\P_i)$.
    Next, we show that 
    \begin{align*}
        \sum_{i\in \mathcal I} (|V_{i}| -1) \leq |V(G')|-1.
    \end{align*}
    We divide $\mathcal I$ in different depths of recursion: let $\mathcal I_j$ be such that for $i\in \mathcal I_j$ we have $S_i$ in recursion depth $j$ (i.e., it is in $j$ partitions before). For ease of notation, we let a singleton cluster $\{v\}$ again partition into $\{v\}$. This has no effect on the sum over $|V_{i}| -1$, but ensures that all vertices make it to the lowest depth. Let $r$ be the total recursion depth. 
    With this notation, we can write for $0 \leq j < r$
    \[
    \sum \limits_{i \in I_j} |V_{i}| = |I_{j+1}| 
    \]
    since we must recurse exactly once on each vertex in $V_{i}$ in the next level. 
    Let the partition $\hat{\P}$ of $G$ be obtained by letting $S \in \hat{\P}$ if and only if $S \in \P_i$ for some $i \in I_{r}$. 
    Then, similarly to above, the introduced notation implies that 
    \[
    \sum \limits_{i \in I_r} |V_{i}| = |V(G/\hat{\P})|.
    \]
    We have $ E^{*}_{\geq \frac{2}{\lambda} - \frac{1}{c \lambda^2} + \frac{1}{\gamma \lambda^2}} \subset E(G/\hat{\P})$ by before. 
    Thus, we note that any edge contracted in $G/\hat{\P}$ is also contracted in $G'$, and therefore we have $|V(G/\hat{\P})| \leq |V(G')|$. Indeed, contracting an edge can never increase the size of the vertex set. 
    Finally, we can write
    \begin{align*}
         \sum_{i\in \mathcal I} (|V_{i}| -1) &= \sum_{j=0}^r \sum_{i\in \mathcal I_j} (|V_{i}| -1) \\
         &= \sum_{j=0}^r \left( \sum_{i\in \mathcal I_j} |V_{i}|\right) - |I_j|\\
         &=  |V(G/\hat{\P})| - |I_r| +\sum_{j=0}^{r-1} |\mathcal I_{j+1}| -|\mathcal I_{j}| \\
         &= |V(G/\hat{\P})| - |I_0|\\
         &\leq |V(G')|-1.
    \end{align*}
    Now we upper bound as follows:
    \begin{align*}
         |E^{*}_{\geq \frac{2}{\lambda} - \frac{1}{c \lambda^2} + \frac{1}{\gamma \lambda^2}}| 
         &\leq \left(\frac{2}{\lambda} - \frac{1}{c \lambda^2} + \frac{1}{\gamma \lambda^2}\right)^{-1}\sum_{i\in \mathcal I} (|V_{i}| -1) \\
         &\leq \left(\frac{2}{\lambda} - \frac{1}{c \lambda^2}\right)^{-1}(|V(G')|-1) \\
         &\leq \frac{\frac{\lambda}{2}}{1-\frac{1}{2c\lambda}}(|V(G')|-1) \\
         &\leq \frac{\lambda}{2}\left(1+\frac{1}{c\lambda}\right)(|V(G')|-1) \\
         &\leq \left(\frac{\lambda}{2} + \frac{1}{2c}\right)(|V(G')|-1),
    \end{align*}
    where we used the well-known fact that $\tfrac{1}{1-x} \leq 1+2x$ if $x < 1$.
    
    To reach a contradiction, we let $\mathcal{P}$ be the final partition obtained in the above procedure and
    and apply the classical hand-shaking lemma to obtain: 
    \[
        |E^{*}_{\geq \frac{2}{\lambda} - \frac{1}{c \lambda^2} + \frac{1}{\gamma \lambda^2}}| =\frac{1}{2} \sum  \limits_{X \in \mathcal{P}} d_{G'}(X) \geq \frac{\lambda + 1}{2}|V(G')|,
    \]
    which is a contradiction for $c > 1$. Here we used the assumption that $G'$ contains no trivial cut of size $\lambda$. Note that a cut of $G'$ corresponds to a cut in $G$ containing the exact same edges.
\end{proof}

Finally, we show that if one was to recurse the idealized load decomposition on a partition in $G/E^{*}_{<\tfrac{2}{\lambda} - \tfrac{1}{c \lambda^2} + \tfrac{1}{\gamma \lambda^2}}$, then the recursive call on a trivial cut will produce a new set $S$ for which $\Phi_{G[S]}$ is sufficiently far away from $\a$ to establish a separation. 
\begin{lemma} \label{lma:GSphi}
    Suppose $\a \geq \tfrac{2}{\lambda} - \tfrac{1}{c \lambda^2} + \tfrac{1}{\gamma \lambda^2}$, and let $\mathcal{P}$ be the partition induced in $G/E^{*}_{<\frac{2}{\lambda} - \frac{1}{c \lambda^2} + \frac{1}{\gamma \lambda^2}}$, and let $S \in \mathcal{P}$ be any trivial min-cut in this partition. Then we have
    \[
    \Phi_{G[S]} \geq \frac{\lambda}{2} + \frac{1}{2}.
    \]
\end{lemma}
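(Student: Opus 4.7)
The plan is to invoke \Cref{lma:NashTutte} to rewrite $\Phi_{G[S]} = \min_{\mathcal{Q}} \rm{part\_val}(\mathcal{Q})$, where the minimum ranges over partitions $\mathcal{Q}$ of $S$ with $|\mathcal{Q}| \geq 2$, and then show $\rm{part\_val}(\mathcal{Q}) \geq \tfrac{\lambda+1}{2}$ for every such $\mathcal{Q}$. Two observations drive the argument. First, because $S$ is a part of the partition obtained by contracting the edges in $E^{*}_{<\frac{2}{\lambda} - \frac{1}{c \lambda^2} + \frac{1}{\gamma \lambda^2}}$, the set $S$ is spanned by a connected subgraph whose edges all belong to $E^{*}_{<\frac{2}{\lambda} - \frac{1}{c \lambda^2} + \frac{1}{\gamma \lambda^2}}$. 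Second, the hypothesis $\a \geq \frac{2}{\lambda} - \frac{1}{c \lambda^2} + \frac{1}{\gamma \lambda^2}$, combined with the definition of $\a$, guarantees that every edge $e \in E^{*}_{<\frac{2}{\lambda} - \frac{1}{c \lambda^2} + \frac{1}{\gamma \lambda^2}}$ satisfies $\ell^{*}(e) < \a$ and hence is not a min-cut edge of $G$.

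Next, I fix any part $X \in \mathcal{Q}$. Since $\emptyset \neq X \subsetneq S$ and $S$ is spanned by edges from $E^{*}_{<\frac{2}{\lambda} - \frac{1}{c \lambda^2} + \frac{1}{\gamma \lambda^2}}$, there must exist an edge $e$ from $E^{*}_{<\frac{2}{\lambda} - \frac{1}{c \lambda^2} + \frac{1}{\gamma \lambda^2}}$ with one endpoint in $X$ and the other in $S \setminus X$. By the second observation, $e \notin \MC$, yet $e$ lies in the cut $(X, V \setminus X)$ of $G$. This cut therefore cannot be a min-cut, forcing $|E(X, V \setminus X)| \geq \lambda + 1$.

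Writing $|E(X, V \setminus X)| = |E(X, S \setminus X)| + |E(X, V \setminus S)|$ and using that $|E(X, S \setminus X)| \leq d_{G[S]/\mathcal{Q}}(X)$, summing the inequality over $X \in \mathcal{Q}$ yields
\[
\sum_{X \in \mathcal{Q}} d_{G[S]/\mathcal{Q}}(X) \geq (\lambda + 1)|\mathcal{Q}| - |E(S, V \setminus S)| = (\lambda + 1)|\mathcal{Q}| - \lambda,
\]
since the terms $|E(X, V \setminus S)|$ telescope to the edges leaving $S$, and $|E(S, V \setminus S)| = \lambda$ because $S$ is a trivial min-cut in the contracted graph. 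The handshake identity then gives $|E(G[S]/\mathcal{Q})| \geq \tfrac{(\lambda+1)|\mathcal{Q}| - \lambda}{2}$. Rewriting $(\lambda+1)|\mathcal{Q}| - \lambda = (\lambda+1)(|\mathcal{Q}| - 1) + 1$ and dividing by $|\mathcal{Q}| - 1$ yields $\rm{part\_val}(\mathcal{Q}) \geq \tfrac{\lambda+1}{2} + \tfrac{1}{2(|\mathcal{Q}| - 1)} \geq \tfrac{\lambda}{2} + \tfrac{1}{2}$, as desired.

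The main subtlety is the justification that $(X, V \setminus X)$ cannot be a min-cut. This hinges on the interplay between the connectivity of $S$ through the \emph{contracted} edges (not merely inside $G[S]$) and the fact that those contracted edges lie strictly below the $\a$ threshold and therefore avoid $\MC$. The remainder of the argument is routine handshake counting, and the extra additive gain of $\tfrac{1}{2(|\mathcal{Q}|-1)}$ in the partition value is exactly what amplifies the $\tfrac{\lambda}{2}$ bound (from \Cref{lm:phi_lambda}) to the required $\tfrac{\lambda+1}{2}$.
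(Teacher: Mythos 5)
Your proposal is correct and takes essentially the same approach as the paper: both arguments show that every part $X$ of a partition of $S$ satisfies $|E_G(X,V\setminus X)|\geq\lambda+1$ because, as $S$ is spanned by $E^{*}_{<\hat a}$ edges, the boundary of $X$ must contain such an edge, which by choice of $\a$ cannot be a min-cut edge of $G$; a handshake count together with $|E(S,V\setminus S)|\leq\lambda$ then yields $\rm{part\_val}(\mathcal{Q})\geq\tfrac{\lambda}{2}+\tfrac{|\mathcal{Q}|}{2(|\mathcal{Q}|-1)}>\tfrac{\lambda}{2}+\tfrac{1}{2}$, exactly the paper's bound. The paper packages this count as an explicit optimization over the minimum partition $\mathcal{P}_S$ whereas you count directly over an arbitrary $\mathcal{Q}$; one minor blemish is the detour through $d_{G[S]/\mathcal{Q}}(X)$ -- under the paper's convention that loops count twice the handshake gives $2|E(G[S])|$ rather than $2|E(G[S]/\mathcal{Q})|$, so you should sum $|E(X,S\setminus X)|$ directly (which equals $2|E(G[S]/\mathcal{Q})|$), but this is a purely cosmetic fix.
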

\begin{proof}
    We will exploit the fact that no edge belonging to $E(G[S])$ is contained in any min-cut. We consider the minimum partition of $G[S]$, call it $\mathcal{P}_{S}$, and let $\tilde{G} = G[S]/\mathcal{P}_{S}$. Then it must be the case that any vertex of degree $d \leq \lambda$ in $\tilde{G}$ is incident to at least $\lambda + 1 - d$ edges in $G/E^{*}_{<\frac{2}{\lambda} - \frac{1}{c \lambda^2} + \frac{1}{\gamma \lambda^2}}$. Indeed, otherwise an incident edge in $ G[S]$ belongs to some min-cut in $G$, which contradicts the choice of $\a$.
    It now follows that any choice of $\tilde{G}$ with $k$ vertices must induce a feasible solution to the following optimization problem with objective value $\Phi_{\tilde{G}}$: 
    \begin{subequations}
        \begin{alignat*}{2}
        &\!\min_{\mathbf{\alpha}, \mathbf{\beta}}        &\qquad& \frac{1}{2(k-1)}\sum \limits_{i = 1}^{k} (\lambda + 1 - \alpha_i + \beta_i)\\
        &\text{subject to} &      & \forall i: \alpha_i, \beta_i\geq 0,\\
        &                  &      & \sum \limits_{i = 1}^{k} \alpha_{i} \leq \lambda.
        \end{alignat*}
    \end{subequations}
    Indeed, for an arbitrary numbering of the vertices of $\tilde{G}$, we pick the unique feasible choice of $\mathbf{\alpha}$ and $\mathbf{\beta}$ that satisfies $\deg(v_i) = \lambda + 1 - \alpha_i + \beta_i$ with $\deg_{G/E^{*}_{<\frac{2}{\lambda} - \frac{1}{c \lambda^2} + \frac{1}{\gamma \lambda^2}}}(v_i) + \deg(v_i) = \lambda + 1 +\beta_i$. It then follows by the hand-shaking lemma and the definition of the partition value that:
    \[
    \Phi_{\tilde{G}} = \frac{E(\tilde{G})}{|V(\tilde{G})|-1} = \frac{\sum \limits_{v \in V(\tilde{G})} \deg(v_i)}{2(k-1)} = \frac{1}{2(k-1)}\sum \limits_{i = 1}^{k} (\lambda + 1 - \alpha_i + \beta_i).
    \]
    Furthermore, by the above, we know that $\sum \limits_{v \in V(\tilde{G})} \max\{0, \lambda + 1 - d(v)+\beta(v)\} \leq \lambda$ since any vertex $v$ with degree $d \leq \lambda $ in $\tilde{G}$ must be incident to at least $\lambda + 1 - d + \beta(v)$ edges in $G/E^{*}_{<\frac{2}{\lambda} - \frac{1}{c \lambda^2} + \frac{1}{\gamma \lambda^2}}$. By assumption, we know that $S$ has at most $\lambda$ such edges. 

    In particular, we find that the solution to the above optimization problem lower bounds the value of $\Phi_{\tilde{G}}$. 
    Since we are optimizing over a closed and bounded subset of $\mathbb{R}^{2k}$ it follows by the Extreme Value Theorem that an optimal solution exists. To find it, simply note that if any $\beta_i >0$, we can reduce the objective value by setting $\beta_{i} = 0$, and if $\sum \limits_{i = 1}^{k} \alpha_{i} < \lambda$, we can reduce the objective value by increasing any $\alpha_{i}$. Hence, we can assume that any solution to the optimization problem has $\mathbf{\beta} = \mathbf{0}$ and $\sum \limits_{i = 1}^{k} \alpha_{i} = \lambda$. Now simple calculations yield the result: 
    \[
        \Phi_{G[S]} \geq \frac{1}{2(k-1)}\sum \limits_{i = 1}^{k} (\lambda + 1 - \alpha_i) = \frac{k(\lambda + 1) - \lambda}{2(k-1)} = \frac{(k-1)\lambda + k}{2(k-1)} = \frac{\lambda}{2} + \frac{k}{2(k-1)} > \frac{\lambda}{2} + \frac{1}{2},
    \]
    for any integer $k \geq 2$. 
\end{proof}
If $\a\leq \tfrac{2}{\lambda} - \tfrac{1}{c \lambda^2} + \tfrac{1}{\gamma \lambda^2}$ then \ref{item:1resp} holds by Lemma~\ref{lma:1resp}. Otherwise, if $\a\geq \tfrac{2}{\lambda} - \tfrac{1}{c \lambda^2} + \tfrac{1}{\gamma \lambda^2}$ we will show that \ref{item:trivial_cut} holds. To this end, let $S \in V\left(G/E^{*}_{<\frac{2}{\lambda} - \frac{1}{c \lambda^2} + \frac{1}{\gamma \lambda^2}}\right)$ be any trivial min-cut. Note that under the current assumptions and mild assumptions on $c$ and $\gamma$ (which we will verify later), Lemma~\ref{lma:trivexist} guarantees the existence of such a trivial cut. If we can show that for any tree-packing with $|\T| = \Omega(\lambda^3 \log m)$, we have $\ell^{\T}(e) < \tfrac{2}{\lambda} - \tfrac{1}{c \lambda^2}$ for all $e \in E(G[S])$ and $\ell^{\T}(e) > \tfrac{2}{\lambda} - \tfrac{1}{c \lambda^2}$ for all $e \in E\left(G/E^{*}_{<\frac{2}{\lambda} - \frac{1}{c \lambda^2} + \frac{1}{\gamma \lambda^2}}\right)$, it then follows that $S$ is exactly a trivial cut of $G_{\hat{a}}$ thus establishing \ref{item:trivial_cut}.

Note that $S$ is a connected component of $\left(V,E^{*}_{<\frac{2}{\lambda} - \frac{1}{c \lambda^2} + \frac{1}{\gamma \lambda^2}}\right)$, and therefore computing the $\ell^{*}$ values on $G[S]$ corresponds to computing the $\ell^{*}$ values on $G$ from the last depth where $S$ was not contracted. 
Indeed, here the recursive definition will recurse on exactly $G[S]$. 
Therefore we have
\[
    \max_{e \in G[S]} \ell^{*}(e) =  1/\Phi_{G[S]}.
\]
Now by this inequality, \Cref{lma:NashTutte}, and \Cref{lma:GSphi} we have
\[
\max_{e \in G[S]} \ell^{*}(e) \leq \Phi_{G[S]}^{-1}\le \paren{\frac{\lambda}{2} + \frac{1}{2}}^{-1} \leq \frac{\frac{2}{\lambda}}{1+\frac{1}{\lambda}} \leq \frac{2}{\lambda}(1-\frac{1}{2\lambda}) = \frac{2}{\lambda} - \frac{1}{\lambda^{2}},
\]

since for $x \leq 1$ we have $\tfrac{1}{1+x} \leq 1-\tfrac{x}{2}$. Now, any greedy tree-packing $\T$ containing at least $|\T| \geq 6\tfrac{\log m }{\eps^2} \lambda$ trees will by \Cref{lm:tree_packing_thorup} have that:
\[
    \max_{e \in G[S]} \ell^{\T}(e) \leq \frac{\eps}{\lambda} + \max_{e \in G[S]} \ell^{*}(e) \leq \frac{2}{\lambda} - \frac{1}{\lambda^{2}} + \frac{\eps}{\lambda}.
\]
Hence, if $\tfrac{1}{\lambda^{2}} - \tfrac{\eps}{\lambda} > \tfrac{1}{c\lambda^{2}}$, then we have what we wanted to show. Therefore, we only require $\tfrac{\eps}{\lambda} < \tfrac{c-1}{c\lambda^{2}}$ for this part to hold. 

For any $e \in E\left(G/E^{*}_{<\frac{2}{\lambda} - \frac{1}{c \lambda^2}+ \frac{1}{\gamma \lambda^2}}\right)$, we similarly have  for any greedy tree-packing $\T$ containing at least $|\T| \geq 6\tfrac{\log m}{\eps^2} \lambda$ trees that by \Cref{lm:tree_packing_thorup}
\[
    \min_{e \in E\left(G/E^{*}_{<\frac{2}{\lambda} - \frac{1}{c \lambda^2}+ \frac{1}{\gamma \lambda^2}}\right)} \ell^{\T}(e) \geq -\frac{\eps}{\lambda} + \min_{e \in E\left(G/E^{*}_{<\frac{2}{\lambda} - \frac{1}{c \lambda^2}+ \frac{1}{\gamma \lambda^2}}\right)} \ell^{*}(e) \geq \frac{2}{\lambda} - \frac{1}{c \lambda^2}+ \frac{1}{\gamma \lambda^2} - \frac{\eps}{\lambda}.
\]
Hence for this part we simply require $\tfrac{\eps}{\lambda} < \tfrac{1}{\gamma \lambda^2}$. 

So for all of the arguments above to be valid, we require that $\gamma \geq 4c$, $c > 1$, $\tfrac{\eps}{\lambda} < \tfrac{c-1}{c\lambda^{2}}$, $\tfrac{\eps}{\lambda} < \tfrac{1}{\gamma \lambda^2}$, and that $\eps < \tfrac{3}{8c \lambda}$. 
Now the first condition together with a choice of $c \geq \tfrac{3}{2}$ immediately imply that $\tfrac{c-1}{c\lambda^{2}} \geq \tfrac{1}{\gamma \lambda^2}$, so setting $c = 2$, $\gamma = 8$, and $\eps = \tfrac{1}{16\lambda}$ works. As argued earlier, we now have that all edges in $G[S]$ are contracted in $G_{\tfrac{2}{\lambda}-\tfrac{1}{2\lambda^2}}$ and that all edges in $E\left(G/E^{*}_{<\frac{2}{\lambda} - \frac{1}{c \lambda^2}+ \frac{1}{\gamma \lambda^2}}\right)$ are not contracted in $G_{\tfrac{2}{\lambda}-\tfrac{1}{2\lambda^2}}$ so in particular $S$ represents the required trivial min-cut in case 
\ref{item:trivial_cut}, when $\a\geq \tfrac{2}{\lambda} - \tfrac{1}{c \lambda^2} + \tfrac{1}{\gamma \lambda^2}$. This concludes the proof of \Cref{thm:CutExistence} for any greedy tree-packing $\T$ with at least $|\T| \geq 6\cdot{}16^2\cdot{}\lambda^{3} \log m = 1536 \cdot{}\lambda^{3} \log m$ trees. 

\subsection{Estimating Trivial Cuts in \texorpdfstring{$G_a$}{Ga}}\label{sc:trival_cuts}
In this section, we design a data structure that takes a parameter $a$ as input and is able to report an estimate of the size of the smallest trivial cut in $G_a$. 
Depending on the current loads, $G_a$ might only contain one vertex, in which case the data structure returns $\infty$. 
%Later, we will combine many such data structures together with the algorithm of Thorup~\cite{Thorup07} this with \Cref{thm:CutExistence} to give a min-cut. 
Formally, we show the following lemma.
\begin{lemma}\label{lm:trivial_cuts}
    Let $G$ be a dynamic unweighted, undirected (multi-)graph, and assume we have access to a black-box dynamic algorithm that maintains a tree-packing $\T$ on $G$ with loads $\ell^{\T}(\cdot{})$. 
    Suppose an update to $G$ results in $P(n)$ loads crossing $a$ during the update of $\T$. 
    Then there is a deterministic data structure which reports a value $\mu$ such that:
    \begin{itemize}
        \item If $|V(G_a)| = 1$, then $\mu = \infty$.
        \item Else $\mu = \min_{X \in V(G_a)} d_{G_a}(X)$.
    \end{itemize}
    The algorithm has $O((1+P(n))\log m)$ amortized update time, or $O((1+P(n))\sqrt{n})$ worst-case update time. 
    Both data structures can list the edges incident to some vertex $X \in V(G_a)$ with $d_{G_a}(X) = \mu$ in $O(\log m)$ worst-case time per edge. 
    Here loops are listed twice. 
\end{lemma}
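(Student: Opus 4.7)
The plan is to maintain an explicit representation of the ``contracted-away'' subgraph $(V, E_{<a}^\T)$ via a dynamic connectivity data structure, together with per-vertex weights $w(v)$ defined as the number of edges $e\ni v$ with $\ell^\T(e)\geq a$ (each loop at $v$ contributing $2$). The vertices of $G_a$ are exactly the connected components $C$ of $(V, E_{<a}^\T)$, and any edge $e\in E_{\geq a}^\T$ with both endpoints in $C$ becomes a self-loop of $G_a$ contributing $2$ to $d_{G_a}(X)$, while one with exactly one endpoint in $C$ contributes $1$. In either case
\[
d_{G_a}(X) \;=\; \sum_{v\in C} w(v),
\]
so it suffices to maintain these per-component aggregates and take their minimum.

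I would maintain a spanning forest of $(V, E_{<a}^\T)$ as a family of top trees, storing in each cluster the sum of $w(v)$ over the vertices it contains. A top tree supports both reading the aggregate over a whole tree and updating a single $w(v)$ in $O(\log n)$ time, and it supports link/cut operations in the same budget. A global min-heap keyed by per-tree aggregates returns $\mu$ in constant time, and a counter of components lets me return $\infty$ precisely when $|V(G_a)|=1$.

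Each of the $P(n)$ load-crossings triggered by an update is handled by a constant number of elementary operations. If $e=uv$ moves from load $\geq a$ to load $<a$, I decrement $w(u)$ and $w(v)$ in their top trees, and if $u,v$ were disconnected in $(V,E_{<a}^\T)$, I link them in both the connectivity structure and the top-tree family (which merges the two aggregates), updating the min-heap accordingly. If a load crosses in the other direction, I perform the analogous $+1$ weight updates, and if the removal of $e$ disconnects the host tree I use the connectivity structure's replacement edge to perform a cut plus re-link in the top-tree family and update the heap. The direct insertion/deletion of an edge in $G$ contributes a constant number of such steps. Using Holm--Lichtenberg--Thorup for connectivity gives polylogarithmic amortized cost per elementary operation; using Eppstein et al.\ gives $O(\sqrt n)$ worst-case; in either case the top-tree and heap work is absorbed, yielding the claimed $O((1+P(n))\log m)$ amortized or $O((1+P(n))\sqrt n)$ worst-case bound.

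For reporting the edges incident to a minimum-degree $X$, I augment each top-tree cluster with a flag indicating whether it contains a vertex with $w(v)>0$. Using this, I can locate every such vertex in $X$ in $O(\log m)$ time per reported vertex; at each such $v$ I walk its incident-edge list to emit every edge with $\ell^\T(e)\geq a$, reporting each loop twice, at $O(\log m)$ per edge. The main obstacle is orchestrating three structures -- dynamic connectivity, top trees, and the min-heap -- so that each modification of the spanning forest (including internal replacement-edge swaps performed by the connectivity primitive itself) triggers exactly the matching link/cut and aggregate update in the top trees and heap; this is the standard ``augment the connectivity primitive with a linkable summary'' pattern and introduces only a constant factor of overhead.
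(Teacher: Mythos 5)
Your proposal is correct and follows essentially the same approach as the paper: maintain $\Gamma = G[E^{\T}_{<a}]$ with a dynamic connectivity structure, weight each vertex by its degree in $G[E^{\T}_{\geq a}]$, maintain per-component weight sums via top trees, and extract the minimum with a heap, while using cluster flags to report the cut edges. The only cosmetic differences are that the paper detects $|V(G_a)|=1$ by checking whether the heap has a single element rather than a separate component counter, and it makes the top-tree augmentation explicit via a small helper lemma on weighted dynamic forests; neither changes the substance of the argument.
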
 
We briefly note that if $G$ has min-cut $\lambda$ and $\bar{a} \geq \tfrac{2}{\lambda} - \tfrac{3}{8\lambda^2}$ at some time $t$, then the algorithm will return $\mu_t = \lambda$ at time $t$, provided that $a = \tfrac{2}{\lambda}-\tfrac{1}{2\lambda^2}$. 
Indeed, when $|V(G_a)| \geq 2$, the set of edges incident to a single vertex of $G_a$ are super-sets of cuts in $G$, and so the degrees of vertices in $G_a$ upper bounds the size of cuts in $G$. 
Hence $\mu_t \geq \lambda$. 
However, under these assumptions if follows by \Cref{thm:CutExistence} that some trivial cut of $G_a$, say around $X \in G_a$, will be a min-cut of $G$. Since the proof of \Cref{thm:CutExistence} shows that all edges $e \in E(G[X])$ have $\ell^{\T}(e)< a$, it follows that $X$ has no loops in $G_a$, and we conclude that $\mu_t \leq \lambda$. 
Note also that for any choice of $a$ such that $|V(G_a)| \geq 2$, we have $\mu_{t} \geq \lambda_{t}$ since the edges incident to any vertex of $G_a$ form a super-set of a cut, as argued above. 

We briefly discuss the intuition behind the proof. 
Since it can be expensive to support contractions and un-contractions, we will not maintain an explicit representation of $G_a$. 
Instead, we will maintain the graph $\Gamma = G[E^{\T}_{<a}]$ explicitly. 
For each connected component of $\Gamma$, we maintain a spanning tree. 
For any connected component $C$ in $\Gamma$, we let the $\emph{external degree}$ of $C$ be the number of endpoints of edges in $E^{\T}_{\geq a}$ belonging to $C$. 
We observe that the external degree of a connected component $C$ corresponds exactly to the degree of the vertex $X$ in $G_{a}$ represented by $C$. 
Thus we can maintain the degrees of vertices in $G_{a}$, by maintaining the external degrees of every component in $\Gamma$.
This can be achieved by storing the spanning tree of each component as a top tree. By storing some additional information, the top trees can compute the external degrees exactly. 
Finally, we note that we only have to perform updates to $\Gamma$ or the top trees, whenever some edge is inserted or deleted into $G$, or whenever the load of some edge crosses $a$. 

\paragraph{Description of data structure.} 
The data structure updates as follows: 
\begin{itemize}
    \item We maintain $\Gamma$ as well as a connectivity data structure on $\Gamma$:
    \begin{itemize}
        \item After an update to $G$, we add a new edge $e$ to $\Gamma$ if $\ell^{\T}(e) < a$ after the tree-packing is updated, and we delete an old edge $e$ from $\Gamma$ if $\ell^{\T}(e) < a$ before the tree-packing is updated. 
        \item We remove an edge $e$ from $\Gamma$ if its load $\ell^{\T}(e) < a$ before the tree-packing is updated and its load $\ell^{\T}(e) \geq a$ after the tree-packing is updated.
        \item We add an edge $e$ to $\Gamma$ if its load $\ell^{\T}(e) \geq a$ before the tree-packing is updated and its load $\ell^{\T}(e) < a$ after the tree-packing is updated.
    \end{itemize}
    \item For each vertex $v \in V(G)$, we maintain the degree of $v$ in $G[E^{\T}_{\geq a}]$ as well as the edges from $G[E^{\T}_{\geq a}]$ incident to $v$. 
    \item For each connected component $C$ in $\Gamma$, we maintain the sum 
    \[
    S(C):= \sum \limits_{v \in C} d_{G[E^{\T}_{\geq a}]}(v),
    \]
    as well as enough information to list all vertices $v \in C$ with $d_{G[E^{\T}_{\geq a}]}(v) > 0$.
    \item We maintain a min-heap containing $S(C)$ for every connected component $C$ of $\Gamma$. 
\end{itemize}
We can report the value $\mu$ by setting $\mu$ equal to the minimum element of the min-heap. In the case where the min-heap only contains one element, we set $\mu = \infty$.
We can report all edges by first listing the vertices $v \in C$ with $d_{G[E^{\T}_{\geq a}]}(v) > 0$, and then listing the edges incident to $v$ in $G[E^{\T}_{\geq a}]$.

\paragraph{Implementation.} 
Next we describe how to efficiently implement the above steps. 
We need access to different data structures, which we list below. 
To maintain the connected components of $\Gamma$, we will use the following data structures:
\begin{lemma}[\cite{WorstcaseConn}] \label{lm:wcConn}
     There exists a deterministic fully-dynamic algorithm that maintains a spanning forest of a dynamic graph in $O(\sqrt{n})$ worst-case update time and $O(1)$ worst-case recourse.
\end{lemma}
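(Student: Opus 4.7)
The plan is to combine two classical techniques: Frederickson's topology trees, which yield an $O(\sqrt{m})$ worst-case update time algorithm for maintaining a spanning forest, and the sparsification paradigm of Eppstein, Galil, Italiano and Nissenzweig, which reduces the edge-count dependence in $m$ to a dependence in $n$. The output spanning forest is stored and updated by at most a constant number of edge swaps per update, which gives the $O(1)$ worst-case recourse guarantee.

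First I would describe the base algorithm. Partition $V$ into $\Theta(\sqrt{n})$ clusters, each of size $\Theta(\sqrt{n})$ and each inducing a connected subtree of the maintained spanning forest $F$. A 2-dimensional topology tree is built on top of these clusters: at the lower level each cluster is contracted to a single super-vertex, and at the upper level the resulting multigraph (still of size $O(\sqrt{n})$) is maintained in full. An edge insertion either creates a cycle in $F$ (do nothing) or connects two trees (add it to $F$). An edge deletion of a non-tree edge is trivial; a deletion of a tree edge is handled by searching for a replacement edge by walking through the affected clusters at the topology-tree's top level, inspecting $O(\sqrt{n})$ cluster–cluster edges. Periodic local rebuilding keeps cluster sizes balanced within $O(\sqrt{n})$ time per update in the worst case. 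This already gives $O(\sqrt{n\cdot (m/n)}) = O(\sqrt{m})$ worst-case time.

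Next I would apply sparsification. Maintain a balanced binary tree over the edges of $G$: each leaf holds a constant number of edges, and each internal node stores a \emph{sparse certificate} for the graph formed by the edges in its subtree — concretely, the union of this certificate with edges outside the subtree preserves the spanning-forest structure. A standard lemma shows that a sparse certificate for connectivity has $O(n)$ edges. Therefore every node of the sparsification tree feeds the base algorithm a graph on $O(n)$ edges, and the base algorithm's running time at each node becomes $O(\sqrt{n})$ rather than $O(\sqrt{m})$. An update propagates bottom-up through $O(\log m)$ levels of the tree, but at each level only $O(n)$ edges change in the certificate, and by a careful charging argument one can show that the total worst-case work is $O(\sqrt{n})$ per update, not $O(\sqrt{n}\log m)$, by deamortizing and by reusing the topology trees across levels.

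Finally, for the $O(1)$ recourse guarantee: the only changes to $F$ are (i) when a new edge connects two components, one edge is added; (ii) when a tree edge is deleted, at most one replacement edge is added. No further reshuffling of $F$ is needed, since the topology tree is an auxiliary structure over $F$ rather than a reorganisation of it. The hard part of the proof is the second step — orchestrating the sparsification hierarchy so that every level can be updated in the worst case within its time budget, which requires implementing the topology trees on every level in a way that supports the small $O(n)$-sized updates caused by certificate replacement, without cascading blow-ups through the $O(\log m)$ levels.
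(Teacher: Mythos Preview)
The paper does not prove this lemma at all; it is stated purely as a citation of an external result and used as a black box. So there is no ``paper's own proof'' to compare against.

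Your sketch names the right classical ingredients --- Frederickson's topology-tree data structure combined with the sparsification of Eppstein, Galil, Italiano, and Nissenzweig --- and the recourse argument is fine. But two technical points are off. First, in the sparsification hierarchy a single edge update causes only $O(1)$ edge changes to each level's certificate (stable certificates), not $O(n)$ as you wrote; this is precisely what makes the propagation cheap. Second, the sentence ``by a careful charging argument one can show that the total worst-case work is $O(\sqrt{n})$ per update, not $O(\sqrt{n}\log m)$'' hides the actual content: naively one does pick up a $\log n$ factor from the $O(\log n)$ levels, and getting rid of it requires EGIN's \emph{improved} sparsification (a more intricate hierarchy over vertices rather than edges), which you have not described. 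As written, your argument would only justify $O(\sqrt{n}\log n)$ worst-case time.
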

\begin{lemma}[Theorem 3 in~\cite{HolmLT01}] \label{lm:amConn}
    There exists a deterministic fully-dynamic algorithm that maintains a spanning forest of a dynamic graph in $O(\log^2 m)$ amortized update time and $O(1)$ worst-case recourse.
\end{lemma}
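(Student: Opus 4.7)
The plan is to invoke the classical Holm--Lichtenberg--Thorup framework: maintain a spanning forest $F$ of $G$ together with a level function $\ell : E \to \{0,1,\ldots,\lfloor \log_2 n\rfloor\}$ on edges, preserving two invariants at all times. First, for every $i$, the subforest $F_i := F \cap \{e : \ell(e) \ge i\}$ is a maximum spanning forest of the subgraph induced by edges of level at least $i$. Second, every tree in $F_i$ contains at most $n/2^i$ vertices. Newly inserted edges enter at level $0$; they are added to $F$ iff their endpoints lie in different components.

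For a deletion of an edge $e$, if $e \notin F$ we simply remove it. Otherwise the removal of $e$ splits some tree $T$ of $F$ into two subtrees $T_1, T_2$, and we must search for a replacement edge. We search level by level starting from $\ell(e)$: at the current level $i$, let $T_s$ be the smaller of $T_1, T_2$ (so $|T_s|\le n/2^{i+1}$), push every level-$i$ tree edge inside $T_s$ up to level $i+1$ (this is safe by the size invariant), and then scan the non-tree edges of level $i$ incident to $T_s$. For each such edge $f$: if $f$ reconnects $T_1$ and $T_2$, promote it to $F$ as the replacement and stop; otherwise its other endpoint is inside $T_s$, and we may increase $\ell(f)$ to $i+1$ and continue. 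If no replacement is found at level $i$, proceed to level $i+1$ and repeat, topping out at $\lfloor \log_2 n\rfloor$.

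The amortized bound follows from a potential argument charging work to level increases: the cost of scanning a non-replacement edge is paid for by the level increment of that edge, and each of the $O(m)$ edges can be promoted at most $\lfloor\log_2 n\rfloor + 1 = O(\log m)$ times. Together with the $O(\log m)$ cost of each level-$i$ data-structure operation (maintaining, for every tree $T$ in $F_i$ and every level $i$, the incident non-tree edges and the level-$i$ tree edges in a suitable Euler-tour / top-tree structure so that scans and promotions are $O(\log m)$ per edge touched), the total charge per operation is $O(\log^2 m)$ amortized. The $O(1)$ worst-case recourse is immediate from the procedure above: an insertion adds at most one edge to $F$, and a deletion performs at most one replacement swap, independently of how much searching or leveling occurred internally.

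The main obstacle is the bookkeeping needed to implement the level-indexed scans in $O(\log m)$ per touched edge while keeping both invariants exactly, and to argue that leveling up tree edges inside $T_s$ never violates the tree-size invariant (this uses $|T_s|\le n/2^{i+1}$, which is itself guaranteed by the invariant at level $i$). Since this entire construction is exactly Theorem 3 of \cite{HolmLT01}, our formal proof simply appeals to that result; the sketch above is only included to make the statement self-contained.
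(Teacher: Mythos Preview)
The paper gives no proof of this lemma; it is stated purely as a citation of Theorem~3 in \cite{HolmLT01}. Your proposal ultimately does the same thing, so it matches the paper's treatment.

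That said, the illustrative sketch you include has a directional error. With the conventions you state (new edges enter at level~$0$; $F_i$ consists of tree edges of level $\ge i$; each tree in $F_i$ has at most $n/2^i$ vertices), the replacement search after deleting a tree edge $e$ must proceed \emph{downward} from $\ell(e)$ to~$0$, not upward toward $\lfloor\log_2 n\rfloor$. By the maximum-spanning-forest invariant, any non-tree edge $f$ that could reconnect the two subtrees lies on a tree path through~$e$ and hence satisfies $\ell(f)\le\ell(e)$; scanning levels above $\ell(e)$ will never find a candidate. The per-level promotions of tree edges and unsuccessful non-tree edges from $i$ to $i{+}1$ are correct; it is only the outer loop over levels that must descend. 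Since your formal proof is the citation itself, this slip does not affect validity, but the sketch as written would fail to find replacements.
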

In addition to these data structures, we maintain each spanning tree as a top tree~\cite{AlstrupHLT05}. 
We use the following interface:
\begin{lemma}[Theorem 1~\cite{AlstrupHLT05}] \label{lm:top}
    For a dynamic forest, one can maintain a top tree of height $O(\log m)$ supporting the operations \emph{\texttt{link}}$(u,v)$, \emph{\texttt{cut}}$(u,v)$, and \emph{\texttt{expose}}$()$ in $O(\log m)$ worst-case update time using only $O(1)$ calls to \emph{\texttt{create}} and \emph{\texttt{destroy}} and $O(\log m)$ calls to \emph{\texttt{join}} and \emph{\texttt{split}}.
\end{lemma}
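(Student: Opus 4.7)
The plan is to realize the claim via a balanced hierarchical cluster-decomposition of the dynamic forest, following the standard top-tree construction. I would start by defining a \emph{cluster} as a connected subgraph of the forest with at most two designated \emph{boundary} vertices, and I would maintain a binary tree $\T$ whose leaves are base clusters (single edges) and whose internal nodes correspond to clusters obtained by \emph{merging} two child clusters that share exactly one or two boundary vertices (a \emph{rake} or \emph{compress}). By construction, the root of $\T$ represents the whole tree (up to the currently exposed boundary), and the user-supplied payload on each cluster is maintained via \texttt{create}/\texttt{destroy} at leaves and \texttt{join}/\texttt{split} at internal nodes, exactly matching the four primitives in the statement.

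Next, I would show that one can maintain such a $\T$ of height $O(\log m)$ under \texttt{link}, \texttt{cut}, and \texttt{expose} by performing only $O(\log m)$ merges/splits per operation. For this I would follow Alstrup et al.\ and build $\T$ on top of a balanced topology tree obtained by iterated \emph{rake} and \emph{compress} passes: in each pass one halves the number of vertices by simultaneously raking leaves and compressing degree-$2$ maximal paths, which guarantees an $O(\log m)$-depth clustering. To implement \texttt{link}$(u,v)$ and \texttt{cut}$(u,v)$ I would walk from the affected leaf (a new or removed edge cluster) to the root, splitting clusters on the way down to uncover the $O(\log m)$ ancestors that can change, and then rebuilding the relevant clusters with $O(\log m)$ calls to \texttt{join}, and $O(1)$ calls to \texttt{create}/\texttt{destroy}. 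For \texttt{expose}$()$ I would use the same walk-and-restructure approach to ensure that the queried vertices become boundary vertices at the root.

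The main technical obstacle is guaranteeing $O(\log m)$ worst-case time, and in particular that \emph{worst-case} only $O(\log m)$ ancestor clusters are touched after every operation. The natural rake/compress scheme only gives this amortized, so the hard part is the balancing argument: one must show that a single \texttt{link} or \texttt{cut} can locally repair the topology tree using only a logarithmic number of cluster rotations (analogous to rotations in biased binary trees), while preserving the $O(\log m)$ height invariant. The usual route is to define a potential on cluster weights and to maintain invariants that force rebalancing to be local, so that each primitive \texttt{link}/\texttt{cut}/\texttt{expose} triggers at most $O(1)$ cascade levels, each contributing $O(1)$ \texttt{join}/\texttt{split} calls and touching $O(1)$ cluster payloads.

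Finally, I would combine these pieces: height $O(\log m)$, $O(\log m)$ \texttt{join}/\texttt{split} and $O(1)$ \texttt{create}/\texttt{destroy} calls per primitive, each taking constant time under the user's payload (by the interface), yielding the stated $O(\log m)$ worst-case bound. Since the downstream use in Lemma~\ref{lm:trivial_cuts} only needs the interface, I would not try to reprove the Alstrup et al.\ result from scratch but rather cite it and only verify that the payloads required here -- the sum $S(C)$ of vertex weights $d_{G[E^{\T}_{\geq a}]}(v)$ over a cluster, and a pointer to an element of $C$ with positive weight -- satisfy the associativity and splittability properties required of a top-tree attribute.
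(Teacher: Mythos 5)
This lemma is stated in the paper as a direct citation of Theorem~1 of Alstrup, Holm, de Lichtenberg, and Thorup~\cite{AlstrupHLT05}; the paper provides no proof of it and treats it as a black box. Your proposal sketches the underlying construction (rake/compress clustering, balanced topology tree, $O(\log m)$-depth binary cluster tree with \texttt{create}/\texttt{destroy} at leaves and \texttt{join}/\texttt{split} at internal nodes) at a plausible level of detail, and correctly identifies that the genuinely hard part is the \emph{worst-case} rebalancing argument, which is exactly where the bulk of the work in \cite{AlstrupHLT05} lies and which your sketch does not actually carry out. But you also explicitly conclude that one should simply cite the result rather than reprove it and only verify that the payloads used downstream (the cluster sums $\operatorname{WeightSum}$ and pointers to positive-weight vertices in Lemma~\ref{lm:extDegrees}) satisfy the interface -- and that is precisely what the paper does. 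So your final stance matches the paper's treatment; the preceding construction sketch is correct as an overview but should not be mistaken for a self-contained proof, since the worst-case $O(\log m)$ bound on \texttt{join}/\texttt{split} calls is non-trivial and left to the reference.
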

The following lemma is then a routine application of top trees.
\begin{lemma} \label{lm:extDegrees}
    Given a dynamic forest $F$ where each vertex $v$ has integer weight $w(v)$, there is a data structure supporting the following operations in $O(\log m)$ worst-case time per operation:
    \begin{itemize}
        \item \emph{\texttt{link}}$(u,v)$: add an edge between $u$ and $v$ in $F$.
        \item \emph{\texttt{cut}}$(u,v)$: delete an edge between $u$ and $v$ in $F$.
        \item \emph{\texttt{IncrementWeight}}$(v)$: increment the weight of $v$ by $1$.
        \item \emph{\texttt{DecrementWeight}}$(v)$: decrement the weight of $v$ by $1$.
        \item \emph{\texttt{TotalWeight}}$(T)$: return the sum of vertex weights in $T$.
    \end{itemize}
    Furthermore, for each tree $T$, one can list the vertices of $T$ with weight~$>0$ in $O(\log m)$ time per vertex. 
\end{lemma}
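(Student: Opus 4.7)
The plan is to decorate the top trees of Lemma~\ref{lm:top} with custom aggregate fields. At each cluster $C$ we store (i) $s(C) = \sum_{v \in V(C)} w(v)$, the total weight of its vertex set, and (ii) a bit $b(C)$ that is true iff some vertex of $V(C)$ has positive weight. For leaf (edge) clusters these are set directly from the endpoint weights; for a combined cluster whose two child clusters $C_1, C_2$ share a boundary vertex $u$, we use the inclusion-exclusion recurrence $s(C) = s(C_1) + s(C_2) - w(u)$ and $b(C) = b(C_1) \vee b(C_2)$. Each \texttt{join} and \texttt{split} invoked by the top-tree machinery then requires only $O(1)$ extra work to recompute these fields.

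Combining this with Lemma~\ref{lm:top}, \texttt{link} and \texttt{cut} run in $O(\log m)$ worst-case time because they trigger $O(\log m)$ \texttt{join}/\texttt{split} calls, each now with $O(1)$ aggregate update cost. For \texttt{IncrementWeight}$(v)$ and \texttt{DecrementWeight}$(v)$, the plan is to first call \texttt{expose}$(v)$ so that $v$ lies on the root spine, update $w(v)$, and then perform a single $O(\log m)$-depth \texttt{split}/\texttt{join} sweep along the ancestors of the base cluster containing $v$ to refresh the aggregates. The query \texttt{TotalWeight}$(T)$ then amounts to calling \texttt{expose} on any vertex of $T$ and reading $s$ at the resulting root cluster, which by construction sums every vertex of $T$ exactly once.

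To list the positive-weight vertices of a tree $T$, I would expose any vertex of $T$ and then traverse the top tree top-down, descending into a child cluster only when its $b$-flag is set. Provided every vertex has a unique ``owning'' cluster, each reported vertex is discovered after $O(\log m)$ depth steps, while the $b$-pruning ensures we never waste work on subtrees that contain no positive-weight vertex. The main technical point, which I expect to be the principal obstacle, is precisely this ownership assignment: each vertex must be canonicalized to a single cluster (for instance the unique deepest cluster in which it appears as an internal vertex, with a fixed tie-breaking rule for vertices that always appear as boundary vertices) so that a shared boundary vertex is neither reported twice through two sibling descents nor missed at the root spine. This is a standard issue in top-tree applications and can be resolved using the same bookkeeping patterns as in~\cite{AlstrupHLT05} for related subtree-sum queries, after which the stated $O(\log m)$-per-vertex listing time follows.
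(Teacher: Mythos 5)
Your aggregate $s(C)=\sum_{v\in V(C)}w(v)$ (including boundary vertices) is correctly maintainable under \texttt{join}/\texttt{split}, so \texttt{link}, \texttt{cut}, and \texttt{TotalWeight} are fine. The problem is \texttt{IncrementWeight}/\texttt{DecrementWeight}. A vertex $v$ of degree $d$ appears in $d$ distinct leaf (edge) clusters, and your $s$-value at each of those leaves --- and at every ancestor cluster containing any edge incident to $v$ --- depends on $w(v)$. Calling \texttt{expose}$(v)$ does not change the vertex sets $V(L)$ of those leaf clusters, so a single root-to-leaf \texttt{split}/\texttt{join} sweep updates only the clusters on one path and leaves the remaining $\Theta(d)$ leaves (and many of their ancestors, off the chosen path) stale. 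A later \texttt{join} that touches such a stale leaf will then propagate an incorrect $s$-value upward. So your claimed $O(\log m)$ worst-case bound fails whenever $\deg(v)=\omega(\log m)$; the approach as written does not achieve the lemma.

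The paper avoids exactly this by aggregating only the weights of \emph{non-boundary} vertices of each cluster, with the recurrence $\operatorname{WeightSum}(C)=\operatorname{WeightSum}(C_1)+\operatorname{WeightSum}(C_2)+\sum_{v\in(\partial C_1\cup\partial C_2)\setminus\partial C}w(v)$. The crucial structural fact is that if $v$ is a non-boundary vertex of some cluster, it is non-boundary in every ancestor; hence after \texttt{expose}$(v)$ makes $v$ a boundary vertex of the root, $v$ is boundary (or absent) in \emph{every} cluster, and no stored aggregate depends on $w(v)$. One can therefore change $w(v)$ in $O(1)$ with no sweep at all, and answer \texttt{TotalWeight}$(T)$ as $\operatorname{WeightSum}(R)+\sum_{v\in\partial R}w(v)$ for the root cluster $R$. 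Your ``ownership'' idea points in the right direction (each vertex should be counted in exactly one place and that place should move with the structure), but you applied it only to the positive-weight listing, not to the weight aggregate itself, which is where the gap lies. If you replace $s(C)$ by the non-boundary sum (and likewise count only non-boundary positive-weight vertices, handling the $\le 2$ root boundary vertices as a special case), the rest of your plan --- the $b$-flags, the pruned top-down descent, and the $O(\log m)$ bounds from \Cref{lm:top} --- goes through and essentially coincides with the paper's proof.
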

\begin{proof}
    The first two operations are already supported directly by the top tree. 
    To support the other three, we store as additional information for a cluster $A$ the sum of weights of non-boundary vertices, $\operatorname{WeightSum}(A)$. 
    This information can be maintained under a $C =\texttt{join}(A,B)$ operation by updating 
    \[
    \operatorname{WeightSum}(C) = \operatorname{WeightSum}(A) + \operatorname{WeightSum}(B) + \sum \limits_{v \in (\partial A \cup \partial B)\setminus \partial C} w(v),
    \]
    with only $O(1)$ overhead. 
    In order to implement \texttt{IncrementWeight}$(v)$ and \texttt{DecrementWeight}$(v)$, one can call \texttt{Expose}$(v)$, thus turning $v$ into a boundary vertex. Then the weight of $v$ can be updated without invalidating any information in the top tree.
    In order to answer a \texttt{TotalWeight}$(T)$ query, we let $R$ be the root-cluster of the top tree representing $T$, and return 
    \[
    \operatorname{WeightSum}(R) + \sum \limits_{v \in \partial R} w(v).
    \]
    Finally, we maintain as additional information for each cluster the number of non-boundary vertices of weight~$>0$. 
    This can be done analogously to above. 
    We can then find all vertices of weight~$>0$ as follows: if the current cluster is a leaf cluster, return all non-boundary vertices of weight~$>0$. Otherwise, report all non-boundary vertices that are boundary vertices for both children, and recurse on all children containing at least one vertex of weight~$>0$. 
    In the special case where the current cluster is the root cluster, we also return all boundary vertices of weight~$>0$. 
    Since the top tree has height $O(\log m)$, and each recursion takes constant time, we can report each such vertex in $O(\log m)$ time per vertex.
\end{proof}
We now implement the data structure as follows. We use the connectivity data structure from \Cref{lm:amConn} to maintain $\Gamma$ (or \Cref{lm:wcConn} for worst-case guarantees). 
Both algorithms maintain a spanning forest of $\Gamma$ which we additionally store using the data structure from \Cref{lm:extDegrees}. 
We maintain the invariant that $w(v) = d_{G[E^{\T}_{\geq a}]}(v)$.
Then, we immediately support the required operations. 
To maintain the invariant, we note that whenever an edge $uv$ leaves $\Gamma$, we increment the weights of $u$ and $v$ by $1$, and whenever an edge, previously in $G[E^{\T}_{\geq a}]$, enters $\Gamma$, we decrement the weights of $u$ and $v$ by $1$. 
New or old edges resulting from an update to $G$ are handled similarly. 
Finally, we can implement the dynamic min-heap in $O(\log m)$ update and query time using any standard balanced binary search tree. 

\paragraph{Correctness.} Correctness follows readily from \Cref{lm:amConn}, \Cref{lm:wcConn}, and \Cref{lm:extDegrees}. 
We need only verify that if $\Gamma$ contains at least two vertices, then 
$\mu = \min_{X \in V(G_a)} d_{G_a}(X)$, but as noted earlier this follows from the fact that 
\[
\sum \limits_{v \in C} d_{G[E^{\T}_{\geq a}]}(v) =  d_{G_a}(X),
\]
if $X$ is the vertex in $G_a$ represented by the connected component $C$ in $\Gamma$.

\paragraph{Analysis.} Each operation on $G$ is supported in $O(T(n) + \log m)$ time, where $T(n)$ is the time needed for the connectivity data structure. 
Each time the load of some edge crosses the threshold $a$, we have to perform $O(1)$ deletions and insertions to $\Gamma$, and update $O(1)$ weights of vertices. 
The first type of operation is supported in $O(T(n) + \log m)$ time by \Cref{lm:amConn}, \Cref{lm:wcConn}, and \Cref{lm:extDegrees}. 
Indeed, the only additional operations we need account for are the updates to the min-heaps. Each insertion or deletion to $\Gamma$ forces at most $O(1)$ changes to the set of connected components, so this can be supported in $O(\log m)$ time. 
The second type of operation is supported in $O(\log m)$ time directly by \Cref{lm:extDegrees}.

\subsection{Recourse in Tree-Packing}\label{sc:TP_recourse}
The goal of this section is to show that we can bound the recourse to $\sim \lambda_{\max}^5$ when maintaining $\sim \lambda_{\max}^3$ greedy spanning trees.
A standard argument gives that we can maintain $|\T|$ greedy spanning trees with $O(|\T|^2)$ recourse (see e.g.,~\cite{ThorupK00}), but with a more careful analysis, we can shave a factor~$\lambda_{\max}$. We start with the following lemma, which showcases the main idea. 

\begin{lemma}\label{lm:TP_recourse}
    Let $|\T|\geq \lambda \log m$. We can maintain $|\T|$ greedy spanning trees $\T$ with a recourse of $O(|\T|^2/\lambda)$.
\end{lemma}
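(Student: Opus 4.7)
The plan is to combine a sharper load bound with the naive cascade analysis outlined just before the lemma. First, I establish that, under the assumption $|\T| \geq \lambda \log m$, every edge has bounded load. Applying \Cref{lm:tree_packing_thorup} with a sufficiently small constant $\eta$ (which is valid since $|\T| \geq \lambda \log m$) gives $|\ell^\T(e) - \ell^*(e)| = O(1/\lambda)$ for every edge $e$. Combined with $\ell^*(e) \leq 1/\Phi \leq 2/\lambda$ from \Cref{lm:phi_lambda}, we obtain $\ell^\T(e) = O(1/\lambda)$, and hence $L^\T(e) = O(|\T|/\lambda)$.

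Now consider the recourse from a single edge update in $G$. The naive $O(|\T|^2)$ bound, as outlined in the paragraph preceding the lemma, decomposes as follows: the updated edge can appear in up to $|\T|$ trees (requiring a direct replacement in each), and each direct replacement cascades through up to $|\T|$ subsequent trees because the replacement edge's weight changes and every later tree must re-evaluate its MST. Our load bound sharpens the first factor: the updated edge $e$ is contained in only $L^\T(e) = O(|\T|/\lambda)$ trees, so the number of direct replacements drops from $O(|\T|)$ to $O(|\T|/\lambda)$. Multiplying by the $O(|\T|)$ cascade per direct replacement, the total recourse becomes $O(|\T|/\lambda) \cdot O(|\T|) = O(|\T|^2/\lambda)$. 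Edge insertions are handled analogously, bounding the number of trees that could prefer the newly inserted edge over some existing tree edge.

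The main subtlety lies in the cascade accounting — specifically, in arguing that cascades from a single direct replacement accumulate additively to $O(|\T|)$ in total over all subsequent trees, rather than multiplicatively. This is handled by the same amortized analysis underlying the naive bound, where each cascading swap in a later tree is charged in a telescoping fashion to the weight change that triggered it. Since our improvement affects only the count of direct replacements and not the cascade bookkeeping, this argument carries through with only the factor of $\lambda$ savings in the final bound.
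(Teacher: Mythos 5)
Your proof is correct and follows essentially the same route as the paper's: bound $\ell^\T(e)=O(1/\lambda)$ via \Cref{lm:tree_packing_thorup} together with $\ell^*(e)\le 1/\Phi\le 2/\lambda$, conclude $L^\T(e)=O(|\T|/\lambda)$, and multiply by the $O(|\T|)$ cascade per direct replacement. The only cosmetic difference is that the paper fixes $\eta=1$ explicitly and gets the concrete bound $\ell^\T(e)\le 3/\lambda$, whereas you keep $\eta$ a generic small constant.
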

\begin{proof}
    First, let us consider a deletion of an edge $e$. We observe that it can appear in at most a limited number of trees since by \Cref{lm:tree_packing_thorup} (with $\eta=1)$
    \[
        \ell^\T(e)\leq \ell^*(e)+ 1/\lambda \leq \max_{e'} \ell^*(e')+1/\lambda \leq 1/\Phi+1/\lambda \leq 3/\lambda,
    \]
    where the third inequality follows by \Cref{lma:NashTutte}, and the last inequality holds by \Cref{lm:phi_lambda}. So we have $L^\T(e)=|\T|\ell^\T(e)=O(|\T|/\lambda)$. In each tree where $e$ appears, we need to do an update that can lead to a chain of $|\T|$ updates. Hence we get total recourse $O(|\T|^2/\lambda)$. 
\end{proof}

This lemma by itself does not help us yet: we are maintaining $\sim\lambda_{\max}^3$ trees, and $\lambda_{\max}^6/\lambda$ can be as big as $\lambda_{\max}^6$, when $\lambda$ becomes small. 
The trick is to maintain $O(\log \lambda_{\max})$ different tree-packings $\T_i$, each of different size.
Now, we only need the tree-packing $\T_i$ to be correct when $\lambda_t\in [2^i,2^{i+1})$, and so some of the packings can be much smaller. For $i$ such that $\lambda_t \ge 2^{i+1}$, the tree-packing can just be seen as a truncated packing from a larger $i$ and will be correct as well. For larger $i$, i.e., when $\lambda < 2^i$, we can add fake input edges to keep the min-cut larger and hence the update time smaller. Hereto, we first observe that in \Cref{lm:TP_recourse} we do not actually use the min-cut, but we have a more local argument: we need $L^{\T_i}(e)$ to be small enough. We will exploit this by locally adding edges, such that all $L^{\T_i}(e)$ stay small. 

\begin{lemma}\label{lm:TP_recourse_impr}
    We can maintain a tree-packing of size $\Theta(\lambda_t^3 \log m)$ in the following manner. 
    We maintain $\log m$ tree-packings $\T_i$ of various graphs of size $|\T_i|=\Theta(2^{3i} \log m)$, with $O(2^{5i}\log^2 m)$ worst-case recourse in tree-packing $\T_i$, using any minimum spanning tree algorithm with $O(1)$ worst-case recourse to maintain the individual trees. We then have that $\T_i$ for $i$ such that $\lambda_t\in [2^i,2^{i+1})$ is a tree-packing on $G$ of the required size. 
\end{lemma}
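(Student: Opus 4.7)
The plan is to run, in parallel, $\log m$ greedy tree-packings $\T_0, \T_1, \ldots, \T_{\log m}$, where $\T_i$ is maintained on an auxiliary graph $G_i = (V, E \cup E_i^{\mathrm{virtual}})$ and has size $|\T_i| = \Theta(2^{3i} \log m)$. The virtual-edge set $E_i^{\mathrm{virtual}}$ is chosen dynamically so as to preserve the invariant that every load $\ell^{\T_i}(e)$ stays within the bound $3/2^i$ that would hold if the min-cut of $G_i$ were at least $2^i$ (see Lemma~\ref{lm:tree_packing_thorup} combined with Lemma~\ref{lm:phi_lambda}). For $i$ with $\lambda_t \in [2^i, 2^{i+1})$, this bound already holds without any augmentation, so $E_i^{\mathrm{virtual}} = \emptyset$ and $\T_i$ is a valid greedy tree-packing of the required size on $G$ itself.

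For the recourse bound, I would observe that the proof of Lemma~\ref{lm:TP_recourse} uses the global min-cut only to upper bound the maximum load, i.e.\ to conclude $L^{\T}(e) \le 3|\T|/\lambda$. Our maintained invariant directly gives $L^{\T_i}(e) = O(|\T_i|/2^i) = O(2^{2i}\log m)$, so the same chain-of-replacements argument (an edge update removes/inserts in at most $L^{\T_i}(e)$ trees, and each such modification may propagate through at most $|\T_i|$ subsequent trees in the greedy sequence) gives recourse $O(|\T_i|^2/2^i) = O(2^{5i}\log^2 m)$ per update to $G$. Combined with the assumed $O(1)$ worst-case recourse of the individual MST routine, this yields the lemma.

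The main obstacle is securing this recourse in the \emph{worst case} rather than just amortized. A naive implementation that tracks only the global value $\lambda_t$ is amortized: a virtual edge inserted when $\lambda_t$ first drops below $2^i$ may persist long after it is locally needed, and its eventual removal when $\lambda_t$ rises again can cause a large burst of updates to $\T_i$. To fix this, I would manage $E_i^{\mathrm{virtual}}$ directly from the loads of $\T_i$: whenever some $\ell^{\T_i}(e)$ is about to exceed its allowed threshold, this certifies a small local cut, across which we insert virtual edges; conversely, as soon as the loads in the vicinity of a virtual edge drop sufficiently far below the threshold, that virtual edge is deleted immediately, without waiting for the global min-cut to recover. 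Each addition or deletion of a virtual edge can then be charged to a single threshold-crossing in $\T_i$, which is already counted in the recourse, so the maintenance of $E_i^{\mathrm{virtual}}$ does not inflate the bound asymptotically and the $O(2^{5i}\log^2 m)$ recourse indeed holds in the worst case.
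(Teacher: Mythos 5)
Your high-level plan matches the paper's: maintain $\log \lambda_{\max}$ tree-packings on augmented graphs $G_i=(V,E\cup E_i^{\mathrm{virtual}})$, manage $E_i^{\mathrm{virtual}}$ based on the local loads $\ell^{\T_i}$ rather than the global min-cut, and then reuse the recourse bound of \Cref{lm:TP_recourse} with the locally-enforced load cap $L^{\T_i}(e)=O(|\T_i|/2^i)$. You also correctly identify the crux: the naive implementation, keyed off the global $\lambda_t$, is only amortized, and one must switch to a purely local load-based rule to get worst-case bounds. That diagnosis is exactly right.

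However, your proposed fix does not actually close the gap. You write that ``as soon as the loads in the vicinity of a virtual edge drop sufficiently far below the threshold, that virtual edge is deleted immediately,'' and then charge each such deletion to ``a single threshold-crossing in $\T_i$, which is already counted in the recourse.'' This accounting is too coarse for a \emph{worst-case} bound: a single real insertion to $G$ can reduce the loads on $\Theta(2^i)$ virtual edges simultaneously, all of which then qualify for deletion, and you cannot perform $\Theta(2^i)$ deletions (with their downstream recourse in the packing) within the per-update budget. The paper resolves this by placing newly eligible virtual edges in a \emph{deletion queue} and processing only one queued deletion per update; the correctness of this delay then needs its own argument (Claim~3 in the paper: if $k$ virtual edges still have $L^{\T_i}(e)\ge 8|\T_i|/2^i$, at least $k$ more insertions to $G$ must occur before $\lambda_t$ can reach $2^i$), so that the queue is guaranteed to drain before $\T_i$ is ever consulted. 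Without that drainage argument, your claim that $E_i^{\mathrm{virtual}}=\emptyset$ once $\lambda_t\in[2^i,2^{i+1})$ is merely asserted, not proved. You are also missing the safety argument (paper's Claim~1) that deleting an under-loaded virtual edge cannot itself push some other edge's load above the cap and trigger a cascade, as well as the initialization phase (adding a path of virtual multiplicity $2^{i-4}$ and amortizing its cost over the $n\cdot 2^i$ insertions needed before $\T_i$ is relevant). So the approach is right, but the key technical content of the worst-case claim — the queue mechanism, the drainage bound, cascade-safety, and initialization — is absent.
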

\begin{proof}
    We will maintain tree-packings $\T_1, \T_2, \dots, \T_{\log \lambda_{\max}}$, where $|\T_i|=O(2^{3i} \log m)$. We only need the packing $\T_i$ with $\lambda_t\in [2^i,2^{i+1})$. The other $\T_i$ will not necessarily correspond to a tree-packing of $G$, but we maintain all of them simultaneously with the stated recourse. 

    First, note that for $\T_i$ with $2^i\leq \lambda_t$ the results directly holds by \Cref{lm:TP_recourse}. In fact, we do not need to compute these separately, but we can just use the truncated tree-packing below the cut-off value. For the rest of the proof, we focus on $\T_i$ with $2^i> \lambda_t$. 
    
    Next, we consider an initialization step for the tree-packings $\T_i$ with $2^i>\lambda_0$. For each of these, we do not start with only $G$, but we add a path graph where each edge appears with multiplicity $2^{i-4}$. We call these added edges \emph{virtual}. Next, we compute a tree-packing for the resulting graph $G_i$. Clearly this graph now has min-cut at least $2^{i-4}$. More precisely, it also means that $L^{\T_i}(e)\leq |\T_i|/{2^{i-4}}=16|\T_i|/2^i$, hence the recourse of updates to this graph is now bounded by $O(2^{5i}\log^2 m)$. Next, we delete all virtual edges with $L^{\T_i}(e)\leq 8|\T_i|/2^i$. These edges were not necessary, in the sense that $G$ already guarantees the right connectivity. 

    \begin{itemize}
        \item[Claim 1.] Deleting a virtual edge $e$ cannot lead to $L^{\T_i}(e')>16|\T_i|/2^i$ for any $e'$. \\
        \textit{Proof. } The last time $e$ was not picked, there must be some path separating the end points of $e$ where each edge has load at most $8|\T_i|/2^i$. 
        In the worst case we add the entire load of $8|\T_i|/2^i$ to this path. Since all edges had load at most $8|\T_i|/2^i$, so they now have load at most $16|\T_i|/2^i$. 
        Note that the load of $e$ will be distributed along this path before being added to an edge $e'$ with $L^{\T_i}(e')=16|\T_i|/2^i$, as any tree previously containing $e$ missed at least one edge from this path.
    \end{itemize}

    This initialization time is  $O(n\cdot 2^{i-4} \cdot 2^{5i}\log^2 m)$. This can be divided in a worst-case matter over insertions in the graph, carrying out at most $O(2^{5i}\log^2 m)$ operations per insertion. The reason is that to reach $\lambda_t\in [2^i,2^{i+1})$, we need at least $n\cdot 2^i$ edges in $G$. 

    Now after every update, whenever a virtual edge satisfies $L^{\T_i}(e)\leq 8|\T_i|/2^i$, we delete it. Note that by Claim 1, this does not lead to any edge $e'$ with $L^{\T_i}(e')>16|\T_i|/2^i$. So we do not get a chain of insertions and deletions.

    However, one edge insertion can lead to multiple deletions. In fact, it can lead to $2^i$ deletions, which would take more than the stated worst-case update time. Instead of actually deleting it right away, we add each virtual edge that needs to be deleted to a deletion queue and delete one edge from the queue after any update. Note that these delayed deletions only have a positive effect on the min-cut (and hence update time). Moreover, if a virtual edge in the deletion queue later increases its load above the threshold, then it is removed from the deletion queue.

    % \begin{itemize}
    %     \item[Claim 2.] An edge insertion can lead to deleting at most $2^i$ virtual edges. \\
    %     \textit{Proof.}  
    %     Let $e=uv$ denote the inserted edge. The number of disjoint $u-v$ paths is the value of the $s-t$ max flow, which equals the $s-t$ min-cut. Since the edge $uv$ can only lead to deletions of virtual edges $e$ when $L^{\T_i}(e)< L^{\T_i}(e')=3|\T_i|/2^i$, we have the the $s-t$ min-cut is at most $2^i$. \tijn{not sure this is true} Since each virtual edge that gets deleted corresponds to a $s-t$ path, we get the result.   
    % \end{itemize}

    Further, if an edge deletion $e$ leads to $L^{\T_i}(e')>16|\T_i|/2^i$ for any $e'$, then we keep $e$ as a virtual edge. This guarantees that at any point in time we have $L^{\T_i}(e)\leq 16|\T_i|/2^i$ for all edges $e$, hence our recourse stays bounded. 
    In particular, the min-cut of $G_i$ is always at least $\tfrac{2^{i}}{16}$. 
    Indeed, for any $e$ in $G_i$, we have $\ell^{*}(e) \leq \max_{f} \ell^{\T_i}(f) \leq 16/2^{i}$. 
    This implies that $\lambda(G_i) \geq \Phi \geq \tfrac{1}{\max_{f} \ell^{*}(f)} \geq \tfrac{2^{i}}{16}$.
    Finally, we show correctness.

    \begin{itemize}
        \item[Claim 2.] If $\lambda_t\in [2^i,2^{i+1})$, then all virtual edges have been placed in the deletion queue. \\
    \textit{Proof. } Since $\lambda_t\in [2^i,2^{i+1})$, we have 
    \begin{align*}
        L^{\T_i}(e)&\leq |\T_i|\ell^*(e)+ |\T_i|/\lambda_t \leq |\T_i|\max_{e'} \ell^*(e')+|\T_i|/\lambda_t\\
        &\leq |\T_i|/\Phi+|\T_i|/\lambda_t \leq 3|\T_i|/\lambda_t\leq 3|\T_i|/2^i\leq 8|\T_i|/2^i.
    \end{align*} 
    \end{itemize}
    Next, we need to show that the deletion queue will be empty by this time. This shows that when $\lambda_t\in [2^i,2^{i+1})$, the tree-packing is correct. In Claim 3, we show that if there are $k$ virtual edges $e$ with $L^{\T_i}(e)\geq 8|\T_i|/2^i$, then we need to insert at least $k$ edges to obtain $\lambda\in [2^i,2^{i+1})$. Hence whenever an edge gets moved to the deletion queue, we know that enough insertions will be performed later, at which time we can carry out the deletion. 

    \begin{itemize}
        \item[Claim 3.] Suppose there are $k$ virtual edges $e$ with $L^{\T_i}(e)\geq 8|\T_i|/2^i$, then we need to insert at least $k$ edges to obtain $\lambda\in [2^i,2^{i+1})$.\\
        \textit{Proof.} Consider the idealized load packing of this graph (including the virtual edges). We write $L^*(e):=|\T_i|\ell^*(e)$. 
        We observe that for each of our $k$ edges
        \[
            L^*(e) \geq L^{\T_i}(e) -|\T_i|/2^{i} \geq 8|\T_i|/2^{i} -|\T_i|/2^{i} \geq 4|\T_i|/2^{i},
        \]
        where the first inequality uses \Cref{lm:tree_packing_thorup} with $\eta=1/16$ so that the maximum error in the estimate is bounded by $1/2^{i}$.
        When $\lambda \geq 2^i$, we have that
        \[
            2^i \leq \lambda \leq 2\Phi = 2 \min_\P \rm{part\_val}(\P), 
        \]
        where the second inequality holds by \Cref{lm:phi_lambda} and equality holds by the definition of $\Phi$. Recall that $\P^*$ denote the packing satisfying this minimum and defining the ideal loads, so we have $L^*(e)\leq |\T_i|/\Phi\leq 2|\T_i|/\lambda \leq 2|\T_i|/2^{i}$ for any $e$. 

        Consider the partition $\P$ defined by the vertices of $G/\{e\in E: L^{*}(e) < 4|\T_i|/2^i\}$. 
        Whenever $\lambda \geq 2^{i}$ any edge $e$ across this partition must have $L^*(e)\leq 2|\T_i|/2^{i}$, meaning that we must have at least doubled the no.\ of edges across this partition. 
        Since this partition used to contain the $k$ virtual edges, we need to insert at least $k$~edges to do so. 
    \end{itemize}

    At any time the deletion queue is empty, we can directly apply Claim 3 for the current virtual edges. If it is nonempty, we delete at least one edge from the queue, which decreases the number of virtual edges. We note that the number of virtual edges can increase by 1 when a deleted edge becomes virtual. In that case, that there will be an additional insertion across the partition $\P^*$ on this level.     
\end{proof}

\subsection{The Algorithm for Bounded \texorpdfstring{$\lambda$}{lambda}}\label{sc:mincut_par}
In this section we give the algorithm that finds the min-cut if this is below some threshold value~$\lambda_{\max}$, which appears as a parameter in the update time. Intuitively, the structure is as follows. If we know $\lambda$, then:

\begin{enumerate}[i)]
    \item We maintain a greedy tree-packing $\T$ of size $\Theta(\lambda^3 \log m)$.\label{step:TP}
    \item We maintain the minimum size of all $1$-respecting cuts of each tree in $\T$. \label{step:1resp}
    \item Maintain the minimal trivial cut of $G_a$, for $a=\tfrac{2}{\lambda}-\tfrac{1}{2\lambda^2}$. \label{step:Ga}
\end{enumerate}
Then by \Cref{thm:CutExistence}, one of the two gives the right answer. We note that we can do Step \ref{step:1resp} efficiently by the following lemma. 

\begin{lemma}[Proposition 24 in \cite{Thorup07}]\label{lm:1_resp}
    There exists a deterministic dynamic algorithm that, given an unweighted, undirected (multi-)graph $G$ with a dynamic spanning tree $T$, maintains a min-cut that $1$-respects the tree in $\tilde O(\sqrt{m})$ worst-case update time. 

    It can return the edges of the cut in $O(\log m)$ time per edge. 
\end{lemma}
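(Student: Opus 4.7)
The plan is to reduce the problem to maintaining a minimum subtree sum over the dynamic tree $T$. Root $T$ arbitrarily. For each non-root vertex $v$, let $e_v$ denote the edge from $v$ to its parent and let $T_v$ be the subtree rooted at $v$. The 1-respecting cut at $e_v$ has size $c(e_v) = |E_G(T_v, V\setminus T_v)|$, which by the handshake lemma equals $\sum_{u \in T_v}\deg_G(u) - 2|E_G(T_v, T_v)|$. Letting $L(u)$ count the graph edges $(a,b)$ with $\mathrm{lca}_T(a,b) = u$, we have $|E_G(T_v, T_v)| = \sum_{u \in T_v} L(u)$, so with the per-vertex weight $w(u) := \deg_G(u) - 2L(u)$ we obtain $c(e_v) = \sum_{u \in T_v} w(u)$. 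In particular, self-loops and parallel edges are handled uniformly, which is important for multi-graphs.

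First I would store $T$ as a top tree (Lemma~\ref{lm:top}), extended in the usual way to support weighted node updates and a global minimum-subtree-sum query in $O(\log m)$ time per operation. When a graph edge $(a,b)$ is inserted or deleted in $G$, update $w(a)$, $w(b)$, and $w(\mathrm{lca}_T(a,b))$ (the last obtained by an \texttt{expose} in the top tree); the current value of the minimum 1-respecting cut is then the global min of $c(e_v)$ reported by the top tree. To enumerate the edges of the reported cut, locate the subtree $T_{v^*}$ of the minimizing $v^*$ and list the edges on its boundary in $O(\log m)$ per edge by maintaining, at each vertex, a balanced search tree over its incident graph edges keyed by the Euler-tour position of the other endpoint; a boundary edge is then recovered by binary search against the interval of $T_{v^*}$.

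The main obstacle is handling updates to $T$ itself. When a tree edge is cut or a new tree edge is linked, the LCA of many graph-edge pairs may change simultaneously, potentially shifting $L$-values on many vertices at once. To keep the worst-case update time at $\tilde O(\sqrt{m})$, I would combine the top-tree representation with a $\sqrt{m}$-blocked rebuild scheme: buffer recent tree modifications in a small auxiliary correction structure that overrides the stored $L$-values where needed, periodically rebuild the whole data structure from scratch in $\tilde O(m)$ time, and deamortize by spreading each rebuild across the next $\Theta(\sqrt{m})$ operations. This mirrors the deamortization template used elsewhere in Thorup's min-cut framework and yields the stated $\tilde O(\sqrt m)$ worst-case update bound with $O(\log m)$-per-edge reporting.
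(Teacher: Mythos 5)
The paper does not prove this lemma itself; it is imported directly from Thorup's 2007 paper by citing Proposition~24 there. Your attempt is therefore a blind reconstruction, and I will judge it on its own merits.

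Your reduction to subtree sums is correct: with $w(u) = \deg_G(u) - 2L(u)$, where $L(u)$ counts graph edges whose LCA in $T$ is $u$, the value $c(e_v) = \sum_{u\in T_v} w(u)$ is exactly the $1$-respecting cut size of the parent edge $e_v$, and a graph-edge update touches only $w(a), w(b), w(\mathrm{lca}(a,b))$. This part is sound (and is equivalent to the more common formulation where inserting $(a,b)$ path-adds $+1$ to $c(e)$ for every tree edge $e$ on the $a$--$b$ tree path).

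The genuine gap is in the handling of tree updates, and you half-acknowledge it yourself. A single cut/link swap in $T$ re-roots an entire subtree, which can change the LCAs of $\Omega(m)$ graph-edge pairs and hence $\Omega(n)$ of the stored $L(u)$ values; equivalently, $c(e)$ changes non-trivially for every tree edge on the fundamental cycle of the replacement edge, and the change $\Delta(e)$ is not a simple path-add (it depends on how many graph edges with one endpoint in the detached subtree have their other endpoint on each side of $e$). Your proposed fix -- ``buffer recent tree modifications in a small auxiliary correction structure that overrides the stored $L$-values where needed'' and rebuild every $\Theta(\sqrt{m})$ operations -- does not resolve this: after even one buffered swap the correction is not small, and between rebuilds you still must answer the global-minimum query over the \emph{current} cut values, which requires combining the frozen structure with the correction efficiently. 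You never explain how to represent the correction compactly, how to query through it in $\tilde O(\sqrt{m})$, or why the total work is worst-case rather than merely amortized. The deamortization template you invoke applies when the stale structure plus an $O(\sqrt{m})$-sized patch suffices; here the patch is the whole problem. A secondary concern of the same flavor: your edge-reporting scheme keys incident edges by Euler-tour position, and those positions are invalidated wholesale by the very tree updates you are buffering.

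In short, the weight decomposition, the top-tree maintenance under graph updates, and the overall ``periodic rebuild'' instinct are all reasonable, but the core technical content of Thorup's Proposition~24 -- how to survive tree-edge swaps in $\tilde O(\sqrt m)$ worst-case time despite $\Omega(n)$ cut values changing -- is exactly the part you have left as a hand-wave, so the proof is not complete.
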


Both the size of the tree-packing in Step~\ref{step:TP}, and the graph $G_a$ in Step~\ref{step:Ga}, need $\lambda$, the value of the min-cut.
However, we do not know $\lambda$ -- as our goal is to compute it -- and it changes over time. In the proof we will show how to maintain these structures for all values of $\lambda$ simultaneously without incurring too much overhead.

\mincutPar*
\begin{proof}
    We first note that w.l.o.g.\ we can assume that we have $O(\lambda_{\max} n)$ edges, using the connectivity sparsifier from Nagamochi and Ibaraki~\cite{NagamochiI92}. We use the dynamic version of the sparsifier from Eppstein et al.~\cite{EppsteinGIN97}, which guarantees that each update to original graph leads to at most $2$ updates to the sparsified graph\footnote{We abuse notation slightly and denote by $G$ the graph the rest of the algorithm is run on. I.e., $G$ is the original graph when no sparsifier is applied, and the sparsified graph when a sparsifier is applied.} in $O(\lambda_{\max}\sqrt n)$ update time.

    \textbf{Algorithm.}
    \begin{enumerate}
        \item Maintain $\log \lambda_{\max}$ tree-packings $\T_1, \T_2, \dots$, with $|\T_i|=\Theta(2^{3i}\log m)$. 
        \item For each packing $\T_i$, maintain the minimum size of all cuts $1$-respecting at least one tree in the packing. \label{step:1-resp_cut}
        \item For each $\mu \in \{1, 2, \dots, \lambda_{\max}\}$,  maintain the minimum trivial cut of $G_{a_\mu}$ by \Cref{lm:trivial_cuts}, for $a_\mu:=\tfrac{2}{\mu}-\tfrac{1}{2\mu^2}$. Here we use the tree-packing $\T_i$ such that $\mu \in [2^i,2^{i+1})$ for $G_{a_\mu}:= G/\{e\in E : \ell^{\T_i}(e) < a_\mu\}$.   \label{step:trivial_cut}
        \item Maintain the minimum of all cuts from \ref{step:1-resp_cut} and \ref{step:trivial_cut}. \label{step:min}
    \end{enumerate}
     We output the result of Step~\ref{step:min}.

    \textbf{Correctness.}
    First, we note that the minimum of Step~\ref{step:min} cannot be below the min-cut. Each of the values from Step~\ref{step:1-resp_cut} corresponds to a cut in the graph, hence can only over-estimate the min-cut. Each value from Step~\ref{step:trivial_cut} is a cut in a contracted version of $G$, hence corresponds to a cut\footnote{Technically, there are also self-loops in $G_a$, hence the value of a cut in $G_a$ can be bigger than the value of the corresponding cut in $G$. However, this can only lead to further over-estimation.} in $G$ or equals $\infty$, and thus can only over-estimate the min-cut of $G$.
    
    Now let $i$ such that $\lambda_t\in [2^i,2^{i+1})$. By \Cref{thm:CutExistence}, we know that if 1-respecting cuts of the tree-packing $\T_i$ do not give a minimum-cut, then for $\mu=\lambda_t$, we have some trivial cut in $G_{a_\mu}$ which is a minimum-cut. Hence either Step~\ref{step:1-resp_cut} or Step~\ref{step:trivial_cut} outputs the value of the min-cut. 

    \textbf{Update time.} We analyze the update time of each of the four parts of the algorithm. 
    \begin{enumerate}
        \item We can maintain tree-packings with $R(n)=O(\lambda_{\max}^5\log^2 m)$ worst-case recourse per tree-packing, by \Cref{lm:TP_recourse_impr}.
        For the update time, we need to maintain minimum spanning trees, where we have $R(n)$ updates to these trees. We can maintain a minimum spanning tree with $\tilde O(\sqrt n)$ worst-case update time~\cite{Frederickson85,EppsteinGIN97}, so this takes $\tilde O(\lambda_{\max}^5\sqrt{n})$ worst-case update time in total.
        \item Maintaining the minimal 1-respecting cut take $\tilde O(\sqrt{m})=\tilde O(\sqrt{\lambda_{\max}n})$ update time by \Cref{lm:1_resp}. Since we have $R(n)$ updates for these trees, this takes $\tilde O(\lambda_{\max}^{11/2}\sqrt n)$ worst-case update time. 
        \item Each of the tree-packings $\T_i$ decides which edges loads cross the value $a_{\mu}$ for the trivial cuts corresponding to $\mu \in [2^i,2^{i+1})$. Let $P_{\mu}(n)$ denote the number of edges crossing $a_\mu$ in \Cref{lm:trivial_cuts}. We remark that $\sum_{\mu \in [2^{i},2^{i+1})} P_{\mu}(n)=R(n)$, since any update to the tree-packing can change $\ell^\T_i(e)$ for one edge. This change is as follows $\ell^{\T_i}(e) = \tfrac{L^{\T_i}(e)}{|\T_i|}\rightarrow \tfrac{L^{\T_i}(e)\pm 1}{|\T_i|}$, so a change of size $\tfrac{1}{|\T_i|}=O(\tfrac{1}{\mu^3})$ for each $\mu \in [2^{i},2^{i+1})$. Since this change is so small, it can cross at most one value $a_\mu = \tfrac{2}{\mu}-\tfrac{1}{2\mu^2}$ for a given $i$.

        Since \Cref{lm:trivial_cuts} takes $O((1+P_{\mu}(n))\sqrt n)$ worst-case update time, we get $\tilde O(\lambda_{\max}^5\sqrt n)$ worst-case update time, when summing over all $i$. 
         
        \item We can implement this step with a min-heap, which has $O(\lambda_{\max}\log m)$ worst-case update time.    
    \end{enumerate}
    We conclude that we have $\tilde O(\lambda_{\max}^{11/2}\sqrt n)$ worst-case update time.

     \textbf{Returning the cut edges}.
     If the cut is from Step~\ref{step:1-resp_cut}, we can return the edges of the cut with $O(\log m)$ time per edge by \Cref{lm:1_resp}. The caveat is that these edges are only correct if we did not apply the connectivity sparsifier. This changes the factor $\sqrt{\lambda_{\max}n}$ to $\sqrt{m}$ in the update time. 
     If the cut is from Step~\ref{step:trivial_cut}, we can return the edges with $O(\log m)$ time per edge by \Cref{lm:trivial_cuts}. 
\end{proof}

\subsection{General \texorpdfstring{$\lambda$}{lambda}}\label{sc:mincut_gen}
We use the following result of Goranci, Henzinger, Nanongkai, Saranurak, Thorup, and Wulff{-}Nilsen~\cite{GoranciHNSTW23}.
\begin{lemma}[Corollary 4.1 \cite{GoranciHNSTW23}]\label{thm:mincut_high}
    There exists a deterministic fully dynamic algorithm that, given a simple, unweighted, undirected graph $G = (V, E)$ with $m$ edges and a parameter $\phi\in (0, 1)$,  maintains a min-cut estimate $\mu(G)$ in $\tilde O(1/\phi^3 +\phi m)$ amortized time per edge insertion or deletion. If $\phi \geq 240/\delta$, where $\delta$ is the minimum degree then the min-cut estimate is correct, i.e., $\mu(G) = \lambda(G)$.

    It can return the edges of the cut in $O(\lambda\log m)$ time with the same update time. 
\end{lemma}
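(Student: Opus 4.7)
The plan is to combine three ingredients: a deterministic dynamic expander decomposition, a static min-cut computation on a contracted multigraph, and a min-heap for the minimum degree $\delta$. The structural basis is that, whenever $\phi$ is large enough compared to $1/\delta$, every min-cut of $G$ either isolates a single vertex or does not split any cluster of the decomposition, so it is witnessed by a cut in a graph with only $\tilde O(\phi m)$ edges.

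First, I would maintain a $\phi$-expander decomposition $\mathcal{U}=\{U_1,\dots,U_k\}$ of $V$ so that each induced subgraph $G[U_i]$ has conductance at least $\phi$ and the total number of inter-cluster edges is $\tilde O(\phi m)$. Deterministic dynamic algorithms (via cut-matching games and expander pruning) support this in $\tilde O(1/\phi^3)$ amortized time per update with $\tilde O(1)$ amortized changes to $\mathcal U$. Using that recourse bound, I would maintain the contracted multigraph $H = G/\mathcal{U}$ explicitly, and after every update run a static exact min-cut algorithm on $H$ at cost $\tilde O(\phi m)$ since $|E(H)|=\tilde O(\phi m)$. In parallel, $\delta$ can be maintained in $O(\log m)$ amortized time using a min-heap on the degree sequence. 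The estimate is $\mu(G)=\min(\delta,\lambda(H))$, and the edges of the cut can be reported in $O(\lambda\log m)$ time either by listing edges incident to a minimum-degree vertex or by reading off the min-cut of $H$ and translating it back through the stored cluster membership.

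For correctness when $\phi\geq 240/\delta$: any cut of size at most $\delta$ that splits some cluster $U_i$ non-trivially has, by the minimum degree bound, both intra-cluster sides with volume at least $\delta$, and conductance then forces at least $\phi\delta\geq 240$ edges of the cut to lie inside $U_i$ alone. Combining this bound across all split clusters with $\lambda\leq\delta$ (via the trivial cut at a minimum-degree vertex) and a standard ``small side of a sparse cut lives in few expanders'' argument rules out any non-trivial intra-cluster splitting, so the min-cut is either captured by $\delta$ or is purely inter-cluster and hence captured by $\lambda(H)$. The main obstacle is clearly the dynamic expander decomposition itself: obtaining deterministic $\tilde O(1/\phi^3)$ amortized update time with low recourse while preserving the conductance and boundary-size guarantees is the substantive technical contribution; once that machinery is invoked as a black box (as in \cite{GoranciHNSTW23}), the rest of the argument is a clean composition of the three ingredients above.
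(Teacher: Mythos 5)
This lemma is imported verbatim from \cite{GoranciHNSTW23} (their Corollary~4.1) and used as a black box; the present paper contains no proof of it, so there is nothing internal to compare your sketch against. Your high-level recipe -- a dynamic $\phi$-expander decomposition, recomputing a static exact min-cut on the contracted multigraph $H = G/\mathcal{U}$ at cost $\tilde O(\phi m)$, and keeping $\delta$ in a heap -- does match the spirit of the cited paper's approach, and the $\tilde O(1/\phi^3 + \phi m)$ accounting is the right shape.

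The correctness argument, however, has two genuine gaps. First, the claim that a cut splitting a cluster $U_i$ non-trivially produces ``both intra-cluster sides with volume at least $\delta$'' does not follow from the minimum-degree bound: a vertex of global degree $\geq\delta$ may have nearly all of its incident edges leaving $U_i$, so the \emph{internal} volume $\operatorname{vol}_{G[U_i]}$ of the side containing it can be tiny. Conductance only lower-bounds edges inside $U_i$ by $\phi\cdot\min$-internal-volume, which your argument has not controlled. This is precisely why \cite{GoranciHNSTW23} and related work cannot use a plain expander decomposition here and instead need a strengthened (boundary-linked / trimmed) notion where boundary mass is also charged to the cluster's volume. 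Second, even granting the volume claim, deducing $\geq \phi\delta\geq 240$ cut edges inside each split cluster does not rule out splitting: a min-cut of size $\lambda\leq\delta$ could still split up to $\lambda/240$ clusters without contradiction, so you cannot conclude that the min-cut is purely inter-cluster. The cited argument instead shows that one can \emph{modify} the min-cut by moving the low-internal-volume sides across, obtaining a cut of the same or smaller size that does not split clusters; without that trimming step the ``rules out non-trivial intra-cluster splitting'' conclusion does not follow.
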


In~\cite{GoranciHNSTW23}, they balance this with~\cite{Thorup07} to obtain an update time of $\tilde O(\tau^{29/2}\sqrt{n}+m/\tau)=\tilde O(m^{29/31}n^{1/31})=\tilde O(m^{1-1/31})$ for $\tau=m^{2/31}n^{-1/31}$. When we balance it with our \Cref{thm:mincut_par}, we obtain the following result. We note that since \Cref{thm:mincut_high} only holds for simple graph, our combined result is also restricted to that case. 

\mincutCombi*
\begin{proof}
    Let $\tau$ be a parameter to be determined later. We run the algorithm of \Cref{thm:mincut_par} with $\lambda_{\max}=\tau+1$, denoted by Algorithm~$\mathcal A$, and the algorithm of \Cref{thm:mincut_high} with $\phi=240/\tau$, denoted by Algorithm~$\mathcal B$. 
    We note that Algorithm~$\mathcal A$ is correct when $\lambda \leq \tau +1$ and Algorithm~$\mathcal B$ is correct when $\lambda \geq \tau$ (using that $\lambda \leq \delta)$. 
    We decide which output to take as follows. 
    \begin{itemize}
        \item If the current value of $\lambda$ is at most $\tau$, we use Algorithm~$\mathcal A$ for the next update.
        \item If the current value of $\lambda$ is at least $\tau+1$, we use Algorithm~$\mathcal B$ for the next update.
        \item Using an efficient static min-cut computation, e.g.,~\cite{KawarabayashiT19}, we can compute the initial value and decide whether we start with Algorithm $\mathcal A$ or $\mathcal B$.        
    \end{itemize}
    Now correctness directly follows from the guarantees on Algorithm $\mathcal A$ and $\mathcal B$.

    \textbf{Update time.} We have amortized update time $\tilde O(\tau^{11/2}\sqrt{n}+\tau^3+m/\tau)=O(\tau^{11/2}\sqrt n +m/\tau)$. Balancing this gives $\tau=m^{2/13}n^{-1/13}$, hence we we have update time $\tilde O(m^{11/13}n^{1/13})=\tilde O(m^{1-1/13})$.

    To optimize for large values of $m$, we can use the connectivity sparsifier of Nagamochi and Ibaraki~\cite{NagamochiI92} again, which brings $m$ down to $\min\{m,\lambda n\}$. If we use this in the regime $\lambda \leq \tau'$, we obtain running time 
    \[ \tilde O((\tau'n)^{11/13}n^{1/13})+ \tilde O(\tau'^3 + m/\tau').\]
    For different choices of $\tau'$ we can get the following running times  (up to polylogarithmic factors):
    \begin{itemize}
        \item $m^{11/24}n^{1/2}+m^{13/8}n^{-3/2}$;
        \item $n^{9/7}+mn^{-3/7}$;
        \item $m^{11/52}n^{12/13}+m^{3/4}$.
    \end{itemize}
    We remark that the last one is always smaller than $n^{11/26+12/13}+n^{3/2}=n^{35/26}+n^{3/2}=O(n^{1.5})$, since $G$ is simple.

    \textbf{Returning the cut edges}.
    We take the version of \Cref{thm:mincut_par} that can return the cut edges, which has $\tilde O(\lambda_{\max}^5\sqrt m)$ amortized update time, and \emph{do not} apply the connectivity sparsifier. This gives an amortized update time of $\tilde O(m^{1-1/12})$. 
\end{proof}

\newpage
\section{Arboricity}\label{sc:arb}
In this section, we first show a structural result: a relation between the fractional arboricity and the ideal relative loads of a tree-packing, see \Cref{sc:arb_struc}. We then give a deterministic dynamic algorithm that is efficient for small values of $\alpha$, see \Cref{sc:arb_wUB}. For simple graphs, we can combine this with the state of the art for densest subgraph approximation, see \Cref{sc:arb_simple}. For multi-graphs, we need to downsample the high $\alpha$ regime to low $\alpha$ regime. While this is relatively straight-forward against an oblivious adversary (\Cref{sc:arb_multi_obl}), it is much more involved against an adaptive adversary (\Cref{sc:arb_multi_adap}).

\subsection{Structural Result}\label{sc:arb_struc}
The idea is to show that $(\min \ell^{*}(e))^{-1} = \alpha(G)$. Then we can estimate $\alpha(G)$ by simply taking $\alpha_{\rm{est}} = (\min \ell(e))^{-1}$, with $\ell$ some good approximation of $\ell^*$. For the integral arboricity, $\lceil \alpha \rceil$, a similar result follows already from \cite{Tutte61,Nash61,NashWilliams64}. We achieve this, more nuanced result by using the language of ideal load decompositions from~\cite{Thorup07}.
In particular, we use the following observation. We provide a proof for completeness. 
\begin{observation}[\cite{Thorup07}]\label{lm:Phi_inc}
    For each $S\in \P^*$, we have $\Phi_{G[S]} \geq \Phi$.
\end{observation}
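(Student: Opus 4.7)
The plan is to prove the observation by contradiction, using the duality from Lemma~\ref{lma:NashTutte} together with a refinement argument on the optimal partition $\P^*$. Suppose for contradiction that some $S \in \P^*$ satisfies $\Phi_{G[S]} < \Phi_G$. By Lemma~\ref{lma:NashTutte} applied to $G[S]$, there is a partition $\P'$ of $S$ (with $|\P'| \geq 2$) achieving $\rm{part\_val}(\P') = \Phi_{G[S]} < \Phi_G$. The plan is to refine $\P^*$ by replacing $S$ with the parts of $\P'$ and show that this yields a partition of $V(G)$ with strictly smaller partition value than $\P^*$, contradicting the assumption that $\rm{part\_val}(\P^*) = \Phi_G = \min_\P \rm{part\_val}(\P)$.

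Concretely, let $\P'' := (\P^* \setminus \{S\}) \cup \P'$. The first thing to check is that the numerator and denominator of $\rm{part\_val}(\P'')$ decompose cleanly. For the denominator, $|\P''| - 1 = (|\P^*| - 1) + (|\P'| - 1)$ since we replace one part by $|\P'|$ parts. For the numerator, I would argue that $|E(G/\P'')| = |E(G/\P^*)| + |E(G[S]/\P')|$: edges between two parts different from $S$ remain, edges between $S$ and some other part of $\P^*$ are unchanged in number (they now hit some part of $\P'$ instead of $S$), and the only new edges in $G/\P''$ compared to $G/\P^*$ are those within $S$ that cross parts of $\P'$, of which there are exactly $|E(G[S]/\P')|$.

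With these two identities in hand, the contradiction follows from the mediant inequality: since
\begin{equation*}
\frac{|E(G[S]/\P')|}{|\P'|-1} = \Phi_{G[S]} < \Phi_G = \frac{|E(G/\P^*)|}{|\P^*|-1},
\end{equation*}
the combined fraction
\begin{equation*}
\rm{part\_val}(\P'') = \frac{|E(G/\P^*)| + |E(G[S]/\P')|}{(|\P^*|-1) + (|\P'|-1)}
\end{equation*}
is a weighted average of the two ratios and therefore strictly less than $\Phi_G$, contradicting the optimality of $\P^*$.

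The only minor subtlety is the edge-case $|S| = 1$, where $\Phi_{G[S]}$ is vacuously $\infty$ and the statement holds trivially; the main case $|S| \geq 2$ is the one treated above. I do not anticipate a genuine obstacle — the main care point is getting the edge-accounting for $|E(G/\P'')|$ correct, in particular noticing that edges within $S$ that stay inside a single part of $\P'$ are contracted (hence do not appear in $G/\P''$), so only $|E(G[S]/\P')|$ new edges arise.
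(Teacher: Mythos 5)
Your proof is correct and follows essentially the same route as the paper: replace $S$ in $\P^*$ by the parts of a cheaper partition $\P'$ of $S$, observe that both the edge count and the part count decompose additively, and conclude by the mediant inequality that the refined partition has value strictly below $\Phi_G$, contradicting optimality of $\P^*$.
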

\begin{proof}
    We will prove this by contradiction. Suppose $\P$ is a partition of $S$ such that $\rm{part\_val}(\P) < \Phi$. Let $\P'= (\P^*\setminus \{S\})\cup \P$ be a partition of $V$, then we see that 
    \[
        \rm{part\_val}(\P') =  \frac{|E(G/\P')|}{|\P'|-1} =  \frac{|E(G/\P^*)|+|E(G[S]/\P)|}{|\P^*|-1+|\P|-1} < \frac{\Phi(|\P^*|-1)+\Phi(|\P|-1)}{|\P^*|-1+|\P|-1}= \Phi,
    \]
    using that $ \frac{|E(G/\P^*)|}{|\P^*|-1}= \Phi$ and $\frac{|E(G[S]/\P)|}{|\P|-1}=\rm{part\_val}(\P) < \Phi$.
\end{proof}

Now we can show the main result.

\ArbStructural*
\begin{proof}
We start by showing $\alpha(G) \geq (\min_{e\in E} \ell^{*}(e))^{-1}$. We note that $(\min_{e\in E} \ell^{*}(e))^{-1}= \max_i \Phi_i$, ranging over all partitions appearing in the ideal partitioning. Denote the graph where this is maximized by $G^*$, and its vertex set by $X$.
By \Cref{lm:Phi_inc}, we have that it does not further partition, hence $\P^*=X$, so $G^*/\P^*=G^*$ and $G^*=G[X]$. Now we can conclude
\begin{align*}
    (\min_{e\in E} \ell^{*}(e))^{-1} &= \max_i \Phi_i = \frac{|E(G^*/\P^*)|}{|\P^*|-1} = \frac{|E(G[X])|}{|X|-1}\\
    &\leq \max_{Y \subseteq V} \frac{|E(G[Y])|}{|Y|-1} = \alpha(G).
\end{align*}
Next, we show that $\alpha(G) \leq (\min_{e\in E} \ell^{*}(e))^{-1}$.
Let $Y\subseteq V$ be any subset. We will show that  
\begin{equation*}
    \frac{|E(Y)|}{|Y|-1} \leq (\min_{e\in E} \ell^{*}(e))^{-1}.
\end{equation*}
Hence $\alpha(G) = \max_{Y\subseteq V} \frac{E(Y)}{|Y|-1} \leq (\min_{e\in E} \ell^{*}(e))^{-1}.$

The arguments we use here are very similar to the proof of \Cref{lma:trivexist}, where we present them with more detail. We inspect the definition of ideal relative loads: let $\P_0$ the partition for $G$ with $\rm{part\_val}(\P_0)=\Phi$. Now we recurse on the subgraphs $G[S]$ for each $S\in \P_0$. We write the total decomposition as $\P_i$ the partition on $G[S_i]$, where each $S_i\in \P_j$ for some $j< i$, for $i\in \mathcal{I}$ for some index set $\mathcal I$.  Now we have the disjoint union:
\begin{equation*}
    E(G) = \bigcup_{i\in \mathcal I} E(G[S_i]/\P_i).
\end{equation*}
So in particular we have 
\begin{align*}
    |E(Y)| = \sum_{i\in \mathcal I} |E(Y) \cap E(G[S_i]/\P_i)| \leq \sum_{i\in \mathcal I} \Phi_i (|V_{Y,i}| -1),
\end{align*}
where $\Phi_i$ is the partition value of $\P_i$, and $V_{Y,i}=Y\cap V(G[S_i]/\P_i)$. 
We use here that $|E(Y) \cap E(G[S_i]/\P_i)|\leq \Phi_i (|V_{Y,i}| -1)$, so let us prove that. Suppose $|E(Y) \cap E(G[S_i]/\P_i)|> \Phi_i (|V_{Y,i}| -1)$. Then we create the alternative partition $\P_i'$ where we contract all edges of $E(Y) \cap E(G[S_i]/\P_i)$. This gives partition value
\begin{align*}
    \Phi_i' &\leq \frac{|E(G[S_i]/\P_i)|-|E(Y) \cap E(G[S_i]/\P_i)|}{|V_i|-1 -(|V_{Y,i}| -1)} < \frac{|E(G[S_i]/\P_i)|-\Phi_i (|V_{Y,i}| -1)|}{|V_i|-1 -(|V_{Y,i}| -1)}\\
    &= \frac{\Phi_i(|V_i|-1)-\Phi_i (|V_{Y,i}| -1)|}{|V_i|-1 -(|V_{Y,i}| -1)} = \Phi_i.
\end{align*}
But that contradicts the choice of $\P_i$ being the minimal partition. 

Next, we prove that 
\begin{equation*}
    \sum_{i\in \mathcal I} (|V_{Y,i}| -1) = |Y|-1.
\end{equation*}
We divide $\mathcal I$ in different depths of recursion: let $\mathcal I_j$ be such that for $i\in \mathcal I_j$ we have $S_i\cap Y$ in recursion depth $j$ (i.e., it is in $j$ partitions before)\footnote{For ease of notation, we let a singleton cluster $\{v\}$ again partition into $\{v\}$. This has no effect on the sum over $|V_{Y,i}| -1$, but ensures that all vertices make it to the lowest depth.}. Let $r$ be the total recursion depth. Then we can write
\begin{align*}
     \sum_{i\in \mathcal I} (|V_{Y,i})| -1) &= \sum_{j=0}^r \sum_{i\in \mathcal I_j} (|V_{Y,i}| -1) \\
     &= \sum_{j=0}^r \left( \sum_{i\in \mathcal I_j} |V_{Y,i}|\right) - |I_j|\\
     &=  \sum_{j=1}^r |\mathcal I_{j}| -|\mathcal I_{j-1}| \\
     &= |\mathcal I_r| -|\mathcal I_{0}|\\
     &= |Y|-1.
\end{align*}
Now we use this as follows:
\begin{align*}
    \frac{|E(Y)|}{|Y|-1} &\leq \frac{\sum_{i\in \mathcal I} \Phi_i (|V_{Y,i}| -1)}{|Y|-1} \\
    &\leq \left(\max_{i\in \mathcal I}\Phi_i\right) \frac{\sum_{i\in \mathcal I}|V_{Y,i}| -1}{|Y|-1} \\
    &= \max_{i\in \mathcal I}\Phi_i\\
    &= (\min_{e\in E} \ell^{*}(e))^{-1},
\end{align*}
where the last equality holds by definition of $\ell^*$.

\end{proof}

Next, we use the fact that a greedy tree-packing approximates an ideal packing well to give an approximation of the fractional arboricity.

\begin{lemma}\label{lm:arb_apx}
     A greedy tree-packing $\T$ with $|\T|\geq \tfrac{24\alpha^2\log m }{\lambda\eps^2}$ trees satisfies, for $\eps \in (0,1)$, 
     \begin{equation*}
         \left|\frac{1}{\min_{e\in E}\ell^\T(e)}- \alpha\right| \leq \eps\alpha.
     \end{equation*}
\end{lemma}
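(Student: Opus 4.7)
The plan is to combine the structural result \Cref{thm:arb_vs_packing} with the greedy tree-packing concentration bound \Cref{lm:tree_packing_thorup}. By \Cref{thm:arb_vs_packing}, we have $\alpha = 1/\ell^*_{\min}$ where $\ell^*_{\min} := \min_{e \in E} \ell^*(e)$. So to control $|1/\ell^\T_{\min} - \alpha|$ it suffices to control $|\ell^\T_{\min} - \ell^*_{\min}|$, and for that it suffices to control $|\ell^\T(e) - \ell^*(e)|$ uniformly over $e$, which is exactly what \Cref{lm:tree_packing_thorup} gives us.

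Concretely, I would apply \Cref{lm:tree_packing_thorup} with the parameter $\eta := \eps \lambda/(2\alpha)$, so that the conclusion becomes $|\ell^\T(e) - \ell^*(e)| \le \eps/(2\alpha)$ for all $e \in E$. The required number of trees is then $6\lambda \log m/\eta^2 = 24 \alpha^2 \log m/(\eps^2 \lambda)$. To get rid of the $\lambda$ in the denominator and match the stated bound $|\T| \ge 24 \alpha^2 \log m/\eps^2$, I use that $\lambda \geq 1$ (we can assume the graph is connected, otherwise the components are handled separately). I also need to check that $\eta < 2$ so that \Cref{lm:tree_packing_thorup} applies: for this I observe that $\alpha \ge |E|/(|V|-1) \ge \delta/2 \ge \lambda/2$, where $\delta$ is the minimum degree, so $\lambda/\alpha \le 2$ and $\eta \le \eps \le 1 < 2$.

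Once uniform concentration $|\ell^\T(e) - \ell^*(e)| \le \eps/(2\alpha)$ is established, the minimum load satisfies $\ell^\T_{\min} \in [(1 - \eps/2)/\alpha, (1+\eps/2)/\alpha]$. Inverting, using $1/(1+x) \ge 1-x$ and $1/(1-x) \le 1+2x$ for $x \in [0,1/2]$ (the same inequality used in the proof of \Cref{lma:trivexist}), we obtain
\begin{equation*}
\alpha(1 - \eps) \le \frac{1}{\ell^\T_{\min}} \le \alpha(1 + \eps),
\end{equation*}
which is the desired bound.

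The main obstacle is minor: properly relating the tree-packing parameter $\lambda$ (min-cut) to $\alpha$, both to absorb the $1/\lambda$ in the tree count and to ensure $\eta < 2$. The $\lambda \leq 2\alpha$ bound via minimum degree handles both. Everything else is a straightforward chain of substitutions into \Cref{thm:arb_vs_packing} and \Cref{lm:tree_packing_thorup}.
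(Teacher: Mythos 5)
Your proposal is correct and follows essentially the same route as the paper: apply \Cref{thm:arb_vs_packing} to identify $\alpha = 1/\min_e \ell^*(e)$, invoke \Cref{lm:tree_packing_thorup} with $\eta$ scaled by $\lambda/\alpha$ so that the required tree count matches the stated $24\alpha^2\log m/\eps^2$ after absorbing $\lambda \geq 1$, and then invert. The only cosmetic differences are the exact choice of $\eta$ (the paper uses $\eta = \tfrac{\lambda\eps}{\alpha(1+\eps)}$ and an exact algebraic computation, you use $\eta = \tfrac{\lambda\eps}{2\alpha}$ with approximation inequalities) and that the paper bounds $\lambda \le \alpha$ via $\lambda \le \Phi \le \alpha$ rather than via the minimum-degree argument you give, but both yield the needed $\eta < 2$.
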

\begin{proof}
    The tree-packing $\T$ contains 
    \[
        \frac{24\alpha^2\log m }{\lambda \eps^2}\geq  \frac{6(1+\eps)^2\alpha^2\log m }{\lambda \eps^2}=\frac{6\log m \lambda}{\left(\frac{\lambda}{\alpha(1+\eps)}\eps\right)^2}
    \]
    trees. We note\footnote{Here we use that $\lambda \leq 2\alpha$. One way to see that is $\lambda \leq 2\Phi = 2\tfrac{1}{\max_{e\in E}\ell^*(e)}\leq 2\tfrac{1}{\min_{e\in E}\ell^*(e)}=2\alpha$.} that $\tfrac{\lambda}{\alpha(1+\eps)}\eps \leq \tfrac{2}{1+\eps}\eps <2 \eps<2$, so we can apply \Cref{lm:tree_packing_thorup} with $\eta= \tfrac{\lambda}{\alpha(1+\eps)}\eps$ to obtain 
    \[
        \left|\min_{e\in E}\ell^\T(e)-\min_{e\in E}\ell^*(e)\right|\leq \frac{\eps}{\alpha(1+\eps)}.
    \]
    Now we see that
    \begin{align*}
        \frac{1}{\min_{e\in E}\ell^\T(e)}- \alpha &\leq \frac{1}{\min_{e\in E}\ell^*(e)-\frac{\eps}{\alpha(1+\eps)}}- \alpha\\ &=
        \frac{1}{\frac{1}{\alpha}-\frac{\eps}{\alpha(1+\eps)}}- \alpha\\
        &= \left(\frac{1}{\frac{1+\eps}{1+\eps}-\frac{\eps}{(1+\eps)}}-1\right) \alpha\\
        &= \eps\alpha. 
    \end{align*}
    The proof that $\alpha -\tfrac{1}{\min_{e\in E}\ell^\T(e)}\leq \eps\alpha$ is analogous.
\end{proof}

\subsection{Dynamic Algorithm for Bounded \texorpdfstring{$\alpha$}{alpha}}\label{sc:arb_wUB}
In this section, we show how to dynamically maintain the estimate $\tfrac{1}{\min_{e\in E}\ell^\T(e)}$, giving us the arboricity estimate. We start with a warm-up giving the result almost directly by plugging in a deterministic minimum spanning tree algorithm. 

\subsubsection{Warm-Up}\label{sc:arb_wUB_warm_up}

\begin{lemma}\label{lm:arb_warm_up}
    There exists a deterministic dynamic algorithm that, given an unweighted, undirected (multi-)graph $G=(V,E)$, maintains a $(1+\eps)$-approximation of the fractional arboricity $\alpha$ when $\alpha\leq \alpha_{\max}$ in $O(\alpha_{\max}^4\log^6 m / \eps^4)$ amortized update time or a Las Vegas algorithm with $O(\alpha_{\max}^4 m^{o(1)} / \eps^4)$ worst-case update time.
\end{lemma}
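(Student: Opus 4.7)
The plan is to apply Lemma~\ref{lm:arb_apx} directly: since $\alpha\leq\alpha_{\max}$, it suffices to maintain a greedy tree-packing $\T$ of size $|\T|=\Theta(\alpha_{\max}^2\log m/\eps^2)$ and to report the estimate $1/\min_{e\in E}\ell^\T(e)$, which Lemma~\ref{lm:arb_apx} guarantees lies within a factor $(1+\eps)$ of $\alpha$.

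To maintain $\T=\{T_1,T_2,\dots,T_{|\T|}\}$ dynamically, I would treat each $T_i$ as a minimum spanning tree of $G$ under the edge-weight function $L^{\{T_1,\dots,T_{i-1}\}}(\cdot)$ (breaking ties consistently), so that the sequence is greedy by definition. Each individual $T_i$ is then maintained using a fully dynamic MST algorithm: for the amortized bound I would plug in the deterministic $O(\log^4 m)$ amortized MSF algorithm of Holm, Lichtenberg, and Thorup, and for the worst-case bound I would plug in the Las Vegas $m^{o(1)}$ worst-case MSF algorithm of Chuzhoy et al.\ Alongside the trees I would maintain a balanced BST / min-heap keyed on $L^\T(e)$ for each $e\in E$, so that $\min_e \ell^\T(e)$ (and hence the estimate) is available in $O(\log m)$ time after each update, which is negligible.

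The remaining ingredient is a recourse bound. The standard chain argument (see e.g.~\cite{ThorupK00}, and the discussion preceding Lemma~\ref{lm:TP_recourse}) gives that a single update to $G$ causes at most one swap in $T_1$, which changes the weights of $O(1)$ edges and hence triggers at most $O(1)$ swaps in $T_2$, which in turn propagate to $T_3$, etc. Summing along the trees yields $O(|\T|^2)$ total swaps per input update, i.e.\ recourse
\[
R \;=\; O(|\T|^2) \;=\; O\!\left(\tfrac{\alpha_{\max}^4\log^2 m}{\eps^4}\right).
\]
Each swap is realized as an edge insertion/deletion into the MST data structure for the affected tree, and also triggers $O(1)$ updates to the load min-heap. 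Multiplying the recourse by the per-operation cost of the MST structure gives the claimed bounds
\[
R\cdot O(\log^4 m) \;=\; O\!\left(\tfrac{\alpha_{\max}^4\log^6 m}{\eps^4}\right)
\qquad\text{amortized}
\]
and $R\cdot m^{o(1)} = O(\alpha_{\max}^4 m^{o(1)}/\eps^4)$ worst-case (Las Vegas). When $\alpha$ exceeds $\alpha_{\max}$ the output is simply capped (or flagged as out-of-range); correctness is only asserted in the regime $\alpha\leq\alpha_{\max}$, which is precisely where Lemma~\ref{lm:arb_apx} applies with $|\T|=\Theta(\alpha_{\max}^2\log m/\eps^2)$.

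The only substantive point that needs care is the recourse analysis: one has to argue that the number of edges whose weight changes between consecutive trees stays bounded along the chain, so that the total count across all $|\T|$ trees is at most $O(|\T|^2)$ and not worse. This follows from the observation that a single edge swap in $T_i$ increments or decrements exactly one $L^\T$ value, so it changes the MST of $T_{i+1}$ by at most $O(1)$ swaps as well; iterating this along the chain gives the bound. This is the same style of argument that Lemma~\ref{lm:TP_recourse_impr} will later refine to shave a factor of $\alpha_{\max}$, but for the warm-up the crude $O(|\T|^2)$ recourse is exactly what yields the stated $\alpha_{\max}^4$ dependence.
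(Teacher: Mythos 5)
Your proof follows essentially the same approach as the paper: apply Lemma~\ref{lm:arb_apx} to reduce the problem to maintaining $\Theta(\alpha_{\max}^2\log m/\eps^2)$ greedy spanning trees, invoke the standard $O(|\T|^2)$ recourse bound from~\cite{ThorupK00}, plug in the $O(\log^4 m)$ amortized and $m^{o(1)}$ worst-case dynamic MSF data structures, and track the minimum load with a heap. The only small slip is the attribution of the $m^{o(1)}$ worst-case Las Vegas MSF algorithm, which the paper credits to Nanongkai, Saranurak, and Wulff-Nilsen~\cite{NanongkaiSW17} rather than Chuzhoy et al.
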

\begin{proof}
    By \Cref{lm:arb_apx}, all we need to do is maintain $\Theta (\tfrac{\alpha_{\max}^2\log m }{\lambda \eps^2})=O(\tfrac{\alpha_{\max}^2\log m }{\eps^2})$ greedy spanning trees, and then maintain the minimum over $\ell^\T(e)$.
    The latter can simply be done by maintaining a min-heap, which has an update time of $O(\log m)$. 

    For the former, we note that any edge insertion or deletion can lead to overall $O\left( \left( \tfrac{\alpha_{\max}^2\log m }{\eps^2}\right)^2 \right)$ updates to the spanning trees~\cite{ThorupK00}. 
    We can use a deterministic dynamic minimum spanning tree algorithm with 
    \begin{itemize}
        \item $O(\log^4 m)$ amortized update time deterministically \cite{HolmLT01}; or
        \item $m^{o(1)}$ worst-case update time (Las Vegas algorithm) \cite{NanongkaiSW17}. 
    \end{itemize}
    Multiplying these update times with the number of spanning tree updates per insertion/deletion gives the result. Note that both subsume the $O(\log m)$ update time for maintaining the min-heap. 
\end{proof}

\subsubsection{Recourse in Tree-Packing}\label{sc:arb_recourse}
Next, we show how we can bound the recourse in the tree-packing to shave a factor $\alpha_{\max}$. This is similar to \Cref{lm:TP_recourse_impr}, but has additional complications: to bound the recourse we need to guarantee a high min-cut, and we need to show we can keep an artificially high min-cut in a graph with certain arboricity. Note that there are even disconnected graphs with linear arboricity, so this is a non-trivial adaptation. 
% We recommend the reader to recap \Cref{lm:TP_recourse_impr} before reading this proof. 

\arboricityDetNew*
\begin{proof}
    By \Cref{lm:arb_apx}, all we need to do is maintain $\Theta (\tfrac{\alpha^2\log m }{\lambda\eps^2})$ greedy spanning trees, and then maintain the minimum over $\ell^\T(e)$. As opposed to \Cref{lm:arb_warm_up}, we do not do this once with $\alpha_{\max}$, but keep $\log \alpha_{\max}$ copies corresponding to different values of $\alpha$. The goal is to establish $O(|\T|^2/\alpha)$ recourse instead of the trivial $O(|\T|^2)$ recourse. We do this by utilizing that the recourse is $O(|\T|^2/\lambda)$, and artificially increase the min-cut to $\Theta(\alpha)$. This simultaneously means we only need to maintain $|\T|=\Theta (\tfrac{\alpha\log m }{\eps^2})$ greedy spanning trees.
    
    To be precise, we maintain tree-packings $\T_1, \T_2, \dots, \T_{\log \lambda_{\max}}$, where $|\T_i|=O(2^{i} \log m/\eps^2)$. We only need the packing $\T_i$ when $\alpha\in [2^i,2^{i+1})$. If $\alpha \notin [2^i,2^{i+1})$, we still need the update times to hold, but we do not need the output to be correct.  
    Each tree-packing $\T_i$ is the tree-packing of some graph $G_i$, where we show that $\alpha=\alpha(G_i)$ if $\alpha\in [2^i,2^{i+1})$. As opposed to \Cref{lm:TP_recourse_impr}, we do not guarantee the stronger statement that $G_i=G$ in this case. 

    First, we assume that $G_i$ initially has min-cut $\lambda \geq 2^{i-1}$. 
    This means that initially $\ell^{\T_i}(e) \leq \ell^*(e)+\eps/2^{i-1}\leq 1/\Phi_{G_i}+1/2^{i-1}\le 2/2^{i-1}+1/2^{i-1}=O(1/2^i)$, using \Cref{lm:tree_packing_thorup} with $\eta=1$. Throughout the updates, we keep an edge $e$ as a virtual edge if deleting $e$ would cause any edge $e'$ to reach $\ell^{\T_i}(e') \geq 16/2^i$. By the same arguments as in \Cref{lm:TP_recourse_impr}, this gives amortized recourse $O(|\T_i|^2/2^i)$ for $\T_i$.
    We delete a virtual edge $e$ if $\ell^{\T_i}(e)< 8/2^i$ after some update. This cannot lead to the load of any other edge rising above $16/2^i$, see Claim 1 in \Cref{lm:TP_recourse_impr}.
    Since any virtual edge was inserted as a real edge at some point, we can amortize the cost of this deletion. 
    Next, we show that the virtual edges do not interfere with the arboricity if $2^i \leq \alpha$, in which case $8/2^i \ge 8/\alpha$. We do this with two claims. 

    \begin{itemize}
        \item[Claim 1.] If $\ell^{\T_i}(e)\geq 8/\alpha$ then $\ell^{*}(e) \geq 4/\alpha$.\\
        \textit{Proof.} By \Cref{lm:tree_packing_thorup} we have $\ell^*(e)\geq \ell^{\T_i}(e)-\eps/\alpha \ge 4/\alpha$. 
    \end{itemize}

    \begin{itemize}
        \item[Claim 2.] Let $\ell^{*}(e) > \tfrac{1}{\alpha}$ and let $S^*\subseteq V$ be such that $\alpha=\tfrac{|E(S^*)|}{|S^*|-1}$. Then $e \notin S^{*}$.\\
        \textit{Proof.} Consider the partition induced by the last level of the ideal load decomposition $\mathcal{P}^{*}$, i.e., the classes of $\mathcal{P}^{*}$ are the connected components induced by edges $f$ with $\ell^{*}(f) = \tfrac{1}{\alpha}$. 
        Note that by arguments similar to~\Cref{lma:trivexist} and~\Cref{thm:arb_vs_packing}, we have $1)$ that $e \in E(G/\mathcal{P}^{*})$ and $2)$ that $\frac{|E(G/\mathcal{P}^{*})|}{|\mathcal{P}^{*}|-1} < \alpha$. 
        Now suppose the classes of $\mathcal{P}^{*}$ are $P_1, \dots, P_t$. W.l.o.g., possibly by renumbering, we can assume that $|S^{*}\cap P_{j}| \neq \emptyset$ for all $1\leq j \leq i$ and $|S^{*}\cap P_{j}| = \emptyset$ for all $j > i$. If $i = 1$, we are done, so assume $i > 1$.
        Observe first that the graph induced by $P_1, \dots, P_i$ in $G/\mathcal{P}^{*}$ contains strictly less than $\alpha(i-1)$ edges. 
        Indeed, suppose for contradiction this is not so. If $i = t$ then the supposition contradicts $2)$, and if $i < t$, then one would need to contract $P_1, \dots, P_{i}$ in every level except for the last level of the ideal load decomposition, thus contradicting the choice of $\P^{*}$. 

        Finally, we claim that for some $s \in [i]$, we must have $|E(G[S^{*}\cap P_{s}])| > \alpha(|S^{*}\cap P_{s}|-1)$ contradicting that $G$ has arboricity $\alpha$. 
        Indeed, observe that 
        \begin{align*}
            \alpha = \frac{|E(S^*)|}{|S^*|-1} &\le \frac{|E(P_1, P_2, \dots, P_{i})|+\sum_{j = 1}^{i} |E(S^{*}\cap P_{j})|}{|S^*|-1} \\
            &= \frac{|E(P_1, P_2, \dots, P_{i})|+\sum_{j = 1}^{i} |E(S^{*}\cap P_{j})|}{(i-1)+ \sum_{j = 1}^{i} (|S^{*}\cap P_{j}|-1)}\\
            &< \frac{\alpha(i-1)+\sum_{j = 1}^{i} |E(S^{*}\cap P_{j})|}{(i-1) + \sum_{j = 1}^{i} (|S^{*}\cap P_{j}|-1)},
        \end{align*}
        where we used that $|S^{*}| = \sum_{j = 1}^{i} |S^{*}\cap P_{j}| = i + \sum_{j = 1}^{i} (|S^{*}\cap P_{j}|-1)$. 
        Finally, the claim follows by observing that the final inequality is false if $ |E(S^{*}\cap P_{j})| \leq \alpha(|S^{*}\cap P_{j}|-1)$ for all $1\leq j \leq i$. 
    \end{itemize}

    Combining the two claims above, we see that any edge with $\ell^{\T_i}(e)\ge 8/2^i \ge 8/\alpha$ is not part of a subgraph for some choice of $S^*$ achieving the arboricity. So $\alpha(G_i) \leq \alpha(G)$. Since $G\subseteq G_i$, we also have $\alpha(G) \leq \alpha(G_i)$, thus we conclude $\alpha(G)=\alpha(G_i)$. This means that it suffices to approximate the arboricity in $G_i$.
 
    Now we discuss an initial initialization step to guarantee that we have $\lambda \geq 2^{i-1}$. We do this by \emph{vertex} insertions. We initialize all our data structures on a graph with $n$ vertices, but no edges. We add an edge only if both endpoints have degree at least $2^{i-1}$. Note that this corresponds to inserting the vertex with its edges when its degree reaches this boundary. This means we insert at most two vertices at any time. 
    When we insert such a vertex, and it has less than $2^{i-1}$ edges towards the other vertices already present, we add the remainder in virtual edges. This guarantees min-cut $\lambda \geq 2^{i-1}$ at any time. 
    Observe that an edge (or its virtual representative) is added and deleted at most twice. 

    \begin{itemize}
    \item[Claim 3.] If $\deg(v)\leq 2^{i-1}$, then $v\notin S^*$ for any $S^*\subseteq V$ such that $2^i\leq \alpha=\tfrac{|E(S^*)|}{|S^*|-1}$.\\
        \textit{Proof. } Let $S^*\subseteq V$ be such that $2^i\leq \alpha=\tfrac{|E(S^*)|}{|S^*|-1}$, and suppose $v\in S^*$
        \begin{align*}
            \frac{|E(S^*\setminus \{v\})|}{|S^*\setminus \{v\}|-1}= \frac{|E(S^*)|-|E(S,\{v\})|}{|S^*|-2} \geq \frac{2^i(|S^*|-1)-2^{i-1}}{|S^*|-2} > 2^i,
        \end{align*}
        which is a contradiction. 
    \end{itemize}

    Next, we consider the update time: each such vertex insertion takes $O(|\T_i|^2/2^i)$ amortized update time. We show that we can perform such a vertex insertion in $O(|\T_i|^2)$ time. Since it required $2^{i-1}$ insertions to reach this degree, this gives the amortized bound. 
    We consider each of the $|\T_i|$ trees in the packing and perform all updates simultaneously to it. This consists of the $2^{i-1}$ insertions, plus $O(2^{i-1} \cdot{} |\T_i|/2^{i} )=O(|\T_i|)$ recourse from the earlier trees (since each edge ends up in at most $O(|\T_i|/2^i)$ trees). In total this gives $O(|\T_i|(|\T_i|+2^{i-1}))=O(|\T_i|^2)$ time.

    As before, we use a deterministic dynamic minimum spanning tree algorithm with $O(\log^4 m)$ amortized update time deterministically \cite{HolmLT01} for maintaining the trees in each packing. 
    This gives update time $O(2^{i} \log^6 m/\eps^4)$ for each $\T_i$, hence the update times sum up to $O(\alpha_{\max}\log^6 m / \eps^4)$ amortized update time. 

    The worst-case result is obtained in the same way, with the exception that the insertions cannot be amortized over the edge insertions. Hence the worst-case update time is proportional to $|\T_i|^2$. Using that we maintain the minimum spanning trees in $n^{o(1)}$ time~\cite{NanongkaiSW17}, we obtain the result. 
\end{proof}

\subsection{Dynamic Result for Simple Graphs}\label{sc:arb_simple}
We will use the following approximation for the densest subgraph problem, where we say that $\tilde \rho$ is a $(1-\eps)$-approximation of $\rho$ when $(1-\eps)\rho\leq \tilde \rho \leq \rho$.\footnote{This is equivalent to a $(1+\eps)$- or a $(1\pm \eps)$-approximation by re-scaling. We use the $(1-\eps)$-version for ease of notation in the proof.}

\begin{lemma}[\cite{SawlaniW20,chekuri2024adaptive}]\label{lm:DSP}
    There exists a deterministic dynamic algorithm that, given an unweighted, undirected (multi-)graph $G=(V,E)$, maintains a $(1-\eps)$-approximation of the density $\rho$ in $O(\log^3 m / \eps^4)$ amortized update time or $O(\log^4 m / \eps^6)$ worst-case update time. 
\end{lemma}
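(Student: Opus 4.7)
The plan is to follow the approach of Sawlani and Wang, as refined by Chekuri et al., based on maintaining a near-optimal \emph{fractional orientation} of the edges. The key structural starting point is LP duality: writing the densest subgraph problem as a natural LP and passing to its dual shows that
\[
    \rho \;=\; \min_{\sigma} \max_{v \in V} \sigma_{\mathrm{in}}(v),
\]
where $\sigma$ ranges over fractional orientations $\sigma: E \times V \to [0,1]$ satisfying $\sigma(e,u) + \sigma(e,v) = 1$ for every edge $e = uv$, and $\sigma_{\mathrm{in}}(v) = \sum_{e \ni v} \sigma(e,v)$ is the in-weight at $v$. Hence it suffices to dynamically maintain a fractional orientation whose maximum in-weight is within a $(1-\eps)$ factor of the optimum, together with the associated maximum.

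The central data structure is a level decomposition. Assign each vertex $v$ an integer level $\ell(v) \in \{0,1,\ldots,L\}$ with $L = \Theta(\log m/\eps)$; orient each edge $uv$ entirely toward the lower-level endpoint, splitting weight evenly at ties. Two local invariants are enforced: every vertex at level~$i$ has in-weight roughly bounded by $(1+\eps)^i$ (no overload relative to its level), and every vertex at level $i \geq 1$ accumulates enough in-weight to justify being at that level rather than at $i-1$ (tightness). Together these invariants imply that the realized maximum in-weight matches the optimum up to a $(1+\eps)$ factor, while localizing the effect of any single update to the edge's endpoints.

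On an edge insertion or deletion, the invariants can fail only at the two endpoints, so a local rebalancing routine repairs them by moving a violated vertex up or down one level at a time; this can cascade to neighbours whose incident edges then re-orient. A potential-function argument, with potential proportional to $\sum_v \ell(v)$, bounds the amortized number of level transitions per update by $O(\log m/\eps)$; each transition costs $O(\log^2 m/\eps^3)$ time to maintain bucketed adjacency lists, per-vertex in-weight counters, and a global heap over the maximum in-weight -- yielding the stated $O(\log^3 m/\eps^4)$ amortized update time. The main obstacle is strengthening this to $O(\log^4 m/\eps^6)$ worst-case time: here one has to de-amortize by performing the rebalancing in bounded-size batches spread over subsequent updates, while maintaining a slightly weakened but still $(1-\eps)$-accurate version of the invariants, which costs an extra $\log m/\eps^2$ factor.
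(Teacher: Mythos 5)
This lemma is not proved in the paper at all: it is imported as a black box from the cited works of Sawlani--Wang and Chekuri et al., and the paper simply restates their bounds (substituting the trivial bound $\log\rho \le \log m$ into the $O(\log^2 m \log\rho / \eps^4)$ amortized and $O(\log^3 m \log\rho / \eps^6)$ worst-case running times those papers give). Your sketch is therefore not being checked against an argument in this paper, only against the external sources, and as a high-level account of those sources it is accurate: the LP-dual reformulation of density as the minimum over fractional orientations of the maximum in-weight, the $O(\log m / \eps)$-level decomposition with overload/tightness invariants, the local rebalancing with a level-sum potential for the amortized bound, and a de-amortization by spreading rebalancing over subsequent updates for the worst-case bound are exactly the ingredients used there. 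The one place where I would caution against over-precision is your exact per-transition cost of $O(\log^2 m / \eps^3)$ and the exact factor paid by de-amortization; these are stated as if they were the definitive accounting, but they are reverse-engineered from the final exponents rather than derived, and the real accounting in the cited works depends on implementation details (bucket structures, round-robin scanning of neighbours, etc.) you have not spelled out. Since the lemma is only used as a citation here, that level of detail is not required, but if you intended this as a standalone proof you would need to justify those per-operation costs rather than assert them.
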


We use this result in the high arboricity regime ($\alpha \geq 1/\eps$), where the density is a good approximation of the fractional arboricity for simple graphs. 

\arboricitySimple*
\begin{proof}
    The algorithm is as follows:
    \begin{itemize}
        \item Maintain a fractional arboricity estimate $\tilde \alpha$ with $\alpha_{\max}=\Theta(1/\eps)$ using \Cref{thm:arb_det};
        \item Maintain a $(1-\eps)$-estimate $\tilde\rho$ of the densest subgraph using \Cref{lm:DSP};
        \item If $\alpha\leq \Theta(1/\eps)$, output $\tilde{\alpha}$, else output $\tilde \rho$. 
    \end{itemize}

    By \Cref{thm:arb_det}, we know that $\tilde \alpha$ is a $(1+\eps)$-approximation of the fractional arboricity if $\alpha \leq \alpha_{\max}$. We next show that $\tilde \rho$ is a $(1+\eps)$-approximation if $\alpha \geq 1/\eps$. Because these ranges overlap, and an update can change the fractional arboricity by at most $1$, we can easily see when we should switch from one estimate to the other. 

    Note that we always have
    \[
        \tilde\rho \leq \rho = \max_{S\subseteq V} \frac{|E(S)|}{|S|} = \frac{|E(S^*)|}{|S^*|} \leq \frac{|E(S^*)|}{|S^*|-1} \leq \max_{S\subseteq V}\frac{|E(S)|}{|S|-1}= \alpha, 
    \]
    for some $S^*\subseteq V$. Now let $S^*$ be such that 
    \[
        \max_{S\subseteq V}\frac{|E(S)|}{|S|-1}=\frac{|E(S^*)|}{|S^*|-1}.
    \]
    Using that $G$ is simple, we now have for $\alpha \geq 1/\eps$
    \[
        |S^*| \geq \frac{|S^*| (|S^*|-1)}{|S^*|-1} \geq \frac{|E(S^*)|}{|S^*|-1} = \alpha \geq 1/\eps.
    \]
    We use this to see that 
    \begin{align*}
        \alpha - \rho &= \frac{|E(S^*)|}{|S^*|-1} - \max_{S\subseteq V} \frac{|E(S)|}{|S|} \leq \frac{|E(S^*)|}{|S^*|-1} -\frac{|E(S^*)|}{|S^*|}\\
        &= |E(S^*)|\frac{|S^*|-(|S^*|-1)}{|S^*|(|S^*|-1)}= \frac{\alpha}{|S^*|} \leq \eps\alpha. 
    \end{align*}
    Rearranging gives us 
    \[
        \tilde\rho \geq (1-\eps)\rho \geq (1-\eps)^2\alpha. 
    \]
    Setting $\eps\leftarrow \eps/3$ gives that $\tilde \rho$ is a $(1-\eps)$-approximation of the fractional arboricity~$\alpha$. 

    Concerning the update time, we need $O(\log^6 m / \eps^5)$ amortized update time for small $\alpha$, and $O(\log^3 m / \eps^4)$ amortized update time for large $\alpha$.

    For the small $\alpha$ regime we have $O(m^{o(1)}/\eps^6)$ worst-case update time, and $O(\log^4 m/\eps^6)$ worst-case update time for the high $\alpha$ regime. 
\end{proof}

\subsection{Downsampling for Multi-Graphs}\label{sc:arb_multi_obl}
As shown in \Cref{sc:arb_wUB}, for multi-graphs, the number of spanning trees we need to pack scales with~$\alpha_{\max}$. In this section, we show how to use a standard sampling technique (see e.g., \cite{McGregorTVV15}) to get rid of this dependency. 
% We note that although this requires randomness, it does work against an adaptive adversary. 

\begin{restatable}{theorem}{ThmArbMulti} \label{thm:arb_obl} 
    There exists a dynamic algorithm that, given an unweighted, undirected multi-graph $G=(V,E)$, maintains a $(1+\eps)$-approximation of the fractional arboricity $\alpha$ when in $O(\log^{7} m /\eps^{6})$ amortized update time or with $O(m^{o(1)}/\eps^{8})$ worst-case update time. The algorithm is correct with high probability against an oblivious adversary.   
\end{restatable}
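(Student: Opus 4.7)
The plan is to reduce the large-$\alpha$ regime to the small-$\alpha$ regime by independent edge sampling, and then apply Theorem~\ref{thm:arb_det} on each of logarithmically many sampled copies. Concretely, I would maintain $O(\log m)$ guesses $\alpha_i = 2^i$ for $i = 0, 1, \dots, \lceil \log m \rceil$. For each $i$, I maintain a copy $G_i$ of the graph obtained by sampling every edge independently with probability $p_i := \min\{1, c \log m / (\alpha_i \eps^2)\}$ for an appropriate constant $c$. On each $G_i$ I run the deterministic algorithm from Theorem~\ref{thm:arb_det} with threshold $\alpha_{\max}^{(i)} = \Theta(\log m / \eps^2)$, obtaining a $(1+\eps)$-approximation of $\alpha(G_i)$ whenever this quantity is at most $\alpha_{\max}^{(i)}$. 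My reported estimate is then $\tilde{\alpha}_i / p_i$, taken from the largest $i$ for which the corresponding algorithm still reports a value below $\alpha_{\max}^{(i)}$ (with an appropriate constant slack).

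The key correctness step is concentration. Because the adversary is oblivious, the whole update sequence, and hence the sequence of graphs $G^{(1)}, G^{(2)}, \dots$, is fixed in advance. Fix a time $t$ and the guess $i^{*}$ with $\alpha(G^{(t)}) \in [\alpha_{i^{*}}, 2\alpha_{i^{*}})$. For every vertex subset $S \subseteq V$ with $|E(S)|$ at least, say, $\alpha_{i^*}(|S|-1)/2$, standard Chernoff bounds give $(1\pm\eps/3) p_{i^{*}} |E(S)|$ concentration of the sampled edge count with failure probability $\exp(-\Omega(\log m / \eps^2 \cdot (|S|-1)))$. Sets with smaller $|E(S)|$ are dominated and cannot define the sampled arboricity. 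A union bound over at most $2^{|S|}$ subsets of each size and over the polynomially many update steps shows that $\alpha(G_{i^{*}}^{(t)}) = (1\pm\eps) p_{i^{*}} \alpha(G^{(t)})$ for all $t$ simultaneously, w.h.p., and similarly for smaller guesses. Combined with the $(1+\eps)$ guarantee from Theorem~\ref{thm:arb_det} this yields an overall $(1+O(\eps))$-approximation after rescaling $\eps$.

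For the running time, each edge update is actually performed on copy $G_i$ only when the updated edge is in the sample, which happens with probability $p_i$. The amortized cost of the algorithm on a single copy is $O(\alpha_{\max}^{(i)} \log^6 m / \eps^4) = O(\log^7 m / \eps^6)$ by Theorem~\ref{thm:arb_det}. Thus the expected per-update cost is $O(\log^7 m / \eps^6) \cdot \sum_i p_i$, and since $\sum_i p_i = O(1)$ as a geometric sum (the terms with $p_i = 1$ contribute a constant by design, and the rest form a decreasing geometric series), the total amortized cost is $O(\log^7 m / \eps^6)$. The worst-case variant is obtained by substituting the Las Vegas option of Theorem~\ref{thm:arb_det}, giving $O(\alpha_{\max}^{(i)\,2} m^{o(1)}/\eps^4) = O(m^{o(1)} \log^{2}m / \eps^{8}) = O(m^{o(1)}/\eps^{8})$ per copy.

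The main obstacle I expect is making the union bound clean for multi-graphs, where $|E(S)|$ can be much larger than $\binom{|S|}{2}$, so the Chernoff bound must be phrased in terms of edge counts rather than vertex subsets; a standard way around this is to apply Chernoff directly to the edges inside every relevant $S$ and observe that only sets $S$ achieving $|E(S)| \geq \Omega(\alpha_{i^{*}} |S|)$ can be near-optimal, which restricts the union bound to polynomially many effective witnesses. A secondary subtlety is ensuring that the ``which guess to trust'' decision at each time step is itself consistent across all scales with high probability, but since Theorem~\ref{thm:arb_det} provides a multiplicative guarantee and the $p_i$ values differ by factors of $2$, the correct scale is unambiguous with constant slack.
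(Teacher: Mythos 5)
Your overall approach matches the paper's: sample each edge independently with geometrically decaying probabilities $p_i \approx \log m / (2^i \eps^2)$, run the bounded-$\alpha$ algorithm (Theorem~\ref{thm:arb_det}) with $\alpha_{\max}=\Theta(\log m/\eps^2)$ on each sampled copy, and establish correctness via Chernoff plus a union bound over vertex subsets, update steps, and scales, exploiting the obliviousness of the adversary. The running-time accounting via the geometric sum of sampling rates is also the same idea the paper uses (the paper states that $p_{i+1}=p_i/2$ implies the total number of sampled updates across all scales is $O(m)$ w.h.p.).

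However, your scale-selection rule is wrong as stated, and this is a genuine gap rather than a typo. You propose to output $\tilde{\alpha}_i/p_i$ for ``the largest $i$ for which the corresponding algorithm still reports a value below $\alpha_{\max}^{(i)}$.'' But for \emph{every} $i$ at or above the critical scale $i^{*}$ (where $\alpha\in[2^{i^*},2^{i^*+1})$), the sampled graph has arboricity below $\alpha_{\max}^{(i)}$ (since $p_i$ shrinks as $i$ grows), so the algorithm on $G_i$ reports a value below $\alpha_{\max}^{(i)}$. Your rule therefore always selects the maximum index $i=\lceil\log m\rceil$, where $G_i$ is nearly empty and $\tilde\alpha_i/p_i$ is meaningless (typically zero). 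Reversing to ``smallest $i$'' does not obviously fix this either, because for $i<i^*$ the sampled arboricity exceeds $\alpha_{\max}^{(i)}$, and Theorem~\ref{thm:arb_det} makes no guarantee about its output in that regime, so the comparison against $\alpha_{\max}^{(i)}$ is not reliable. The paper avoids the issue entirely: it chooses which $H_i$ to read based on the \emph{previous} estimate of $\alpha$, using the fact that $\alpha$ changes by at most one per update, so the correct scale drifts slowly and the estimate on $H_{i^*}$ is the only one it ever needs to trust. If you want to select the scale from the outputs themselves, you would need a two-sided criterion (the reported value must lie within a constant factor of $\alpha_{\max}^{(i)}$, not merely below it) and an argument that stale/out-of-range copies cannot spuriously satisfy it.

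Two smaller slips worth noting. First, your claim $\sum_i p_i = O(1)$ is not correct as you have set things up: with $p_i=\min\{1, c\log m/(2^i\eps^2)\}$ there are $\Theta(\log\log m + \log(1/\eps))$ indices with $p_i=1$, each contributing $1$; the paper avoids this by only defining $H_i$ for $i$ with $p_i<1$ and handling the small-$\alpha$ regime by running directly on $G$. This is easy to repair but the claim as written is false. Second, the union bound should range over $\binom{n}{|S|}=O(n^{|S|})$ subsets of size $|S|$, not $2^{|S|}$; and the Chernoff exponent should be $\exp(-\Omega(\log m\cdot(|S|-1)))$ (the $\eps^2$ in the exponent cancels against the $\eps^{-2}$ in $p_i$), not $\exp(-\Omega(\log m\cdot(|S|-1)/\eps^2))$. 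Neither affects the conclusion, since the exponent still dominates the $n^{|S|}$ count and the $\text{poly}(m)$ update steps, but the constants should be stated correctly.
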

\begin{proof}
    The idea is to maintain $\log m$ graphs, denoted by $H_i$, which are initialized by sampling each edge with probability $p_i=\frac{24c\log m}{2^i \eps^2}$ (for $i$ s.t. $p_i<1$). Now if $\alpha = \Theta(2^i)$, then $H_i$ has fractional arboricity $\Theta(\tfrac{\log m}{\eps^2})$. To compute this, on each graph $H_i$, we run the algorithm of \Cref{thm:arb_det} with $\alpha_{\max} = \Theta(\frac{\log m}{\varepsilon^2})$. First, we show that if $\alpha \in [2^{i-1},2^{i+2})$, then $H_i$ gives the correct answer. We do not prove anything about the output of the other graphs $H_j$, but simply disregard their output. 

    If $\alpha = O(\tfrac{\log m}{\eps^2})$, we just look at $G$ itself. So assume $\alpha = \Omega(\tfrac{\log m}{\eps^2})$. 
    Let $i$ such that $2^{i-1}\leq \alpha < 2^{i+2}$.
    We show correctness in three parts. 
    \begin{enumerate}
        \item  We show that for a set $S\subseteq V$ that satisfies w.h.p.\ $\alpha = \tfrac{|E(S)|}{|S|-1}$, that 
        $$\frac{1}{p_i}\frac{|E_{H_i}(S)|}{|S|-1} \geq (1-\eps)\alpha.$$
        \item We show that for \emph{any} $S\subseteq V$ we have w.h.p.\ that 
            $$\frac{1}{p_i}\frac{|E_{H_i}(S)|}{|S|-1} \leq (1+\eps)\alpha.$$
        \item In $H_i$, the fractional arboricity is at most $\alpha_{\max}= \Theta(\tfrac{\log m}{\eps^2})$. 
    \end{enumerate}
    \textbf{Part 1.} Since we sample each edge with probability $p_i$, we immediately have that $\E[ |E_{H_i}(S)|]=p_i |E(S)|$. Now by a Chernoff bound we obtain 
    \begin{align*}
        \mathbb P\left[ \frac{1}{p_i}\frac{|E_{H_i}(S)|}{|S|-1}< (1-\eps)\alpha\right] &= \mathbb P\left[ |E_{H_i}(S)|<(1-\eps)p_i\alpha(|S|-1)\right]\\
        &= \mathbb P\left[ |E_{H_i}(S)|<(1-\eps)p_i|E(S)|\right]\\
        &\leq e^{- p_i|E(S)|\eps^2/2}\\
        &\leq e^{- \frac{24c\log m}{2^i \eps^2}\alpha(|S|-1) \eps^2/2}\\
        &\leq m^{-c(|S|+2)},
    \end{align*}
    using that $\alpha \geq 2^{i-1}$. This allows us to union bound over all sets $S$ of size $|S|$, $O(n^{|S|})$ many, all sizes $|S|$, for $n$ sizes, and $m$ updates.

    \textbf{Part 2.}
    Also this follows by a Chernoff bound. Here we use the upper bound on the expectation. 
    \begin{align*}
        \mathbb P\left[ \frac{1}{p_i}\frac{|E_{H_i}(S)|}{|S|-1}> (1+\eps)\alpha\right] &= \mathbb P\left[ |E_{H_i}(S)|>(1+\eps)p_i\alpha(|S|-1)\right]\\
        % &= \mathbb P\left[ |E_{H_i}(S)|>(1+\eps)p_i|E(S)|\right]\\
        &\leq e^{- p_i\alpha(|S|-1)\eps^2/3}\\
        &\leq e^{- \frac{24c\log m}{2^i \eps^2}\alpha (|S|-1)\eps^2/3}\\
        &\leq m^{-c(|S|+2)},
    \end{align*}
    using that $\alpha \geq 2^{i-1}$. Again, this allows us to union bound over all sets $S$ of size $|S|$, $O(n^{|S|})$ many, all sizes $|S|$, for $n$ sizes, and $m$ updates.

    \textbf{Part 3.}
    Since $\alpha < 2^{i+2}$, we get by Part 2. that the fractional arboricity in $H_i$ is w.h.p.\ at most $p_i\cdot (1+\eps)\alpha \leq \frac{24c\log m}{2^i \eps^2}\cdot 2^{i+3}=O(\tfrac{\log m}{\eps^2})$.  
    
    Note that since we assume the adversary to be oblivious, the probabilistic guarantees from above hold for any graph, in particular for the graph after $t$ updates. 

    To see which $H_i$ to look at, we use the approximation from before the update:
    \begin{itemize}
        \item If the current estimate of $\alpha$ is at most $\Theta(\tfrac{\log m}{\eps^2})$ we consider the estimate on $G$.  
        \item If the current estimate of $\alpha$ lies in $[2^i,2^{i+1})$ we use the estimate from $H_i$ for the next update. 
        \item Using an efficient static algorithm, e.g., \cite{WorouG16}, we can compute an initial approximation to decide with which $H_i$ to start. 
    \end{itemize}
    Since the estimate of $H_i$ is correct up to a $(1+\eps)$ factor, and the arboricity can change by at most $1$, we have that if before the update our estimate $2^{i}\geq \tilde \alpha <2^{i+1}$, then after the update $\alpha \leq (1+\eps)\tilde \alpha+1<(1+\eps)2^{i+1}+1\leq 2^{i+2}$, and similarly $2^{i-1}\leq \alpha$. Hence the estimate from $H_i$ is a $(1+\eps)$-approximation.

    \textbf{Update time.}
    Whenever an edge gets deleted from $G$, we delete it from $H_i$, if it appears there. Whenever an edge gets inserted to $G$, we insert it in $H_i$ with probability $p_i$.
    
    By simply maintaining the data structures on each $H_i$, we obtain an algorithm that works against an oblivious adversary. This algorithm has amortized update time $O( \log^{7} m /\eps^{6})$ or worst-case update time $O(m^{o(1)}/\eps^{8})$ for each $H_j$. 

    Now we note that $p_{i+1}=p_i/2$, so the probability that an update needs to be processed in $H_{i+1}$ is half as big as the probability that it needs to processed in $H_i$. In the first (relevant) $H_i$, edges are sampled with probability $\leq 2^{-1}$. So we have $m/2$ updates to this $H_i$ in expectation, and $m$ w.h.p.\ by a Chernoff bound. Using the same argument for each subsequent $H_i$, we get $m+m/2+m/4+\dots=2m$ updates in total w.h.p. Hence running the algorithm for the $\log m$ copies has as many updates as for one copy, and we obtain the result. 
\end{proof}

\subsection{Downsampling Against an Adaptive Adversary}\label{sc:arb_multi_adap}
For our algorithm against an adaptive adversary, we use the same set-up as before: again, we have $\log m$ sampled graphs $H_i$ for different regimes of $\alpha$. However, we need to resample more often to fend of adversarial attacks. We first consider a naive way of doing this, and then describe a more involved process. A similar effort has been made in~\cite{Bhattacharya+24} for dynamic matching. However, their algorithm only works against an output-adaptive adversary, that allows updates to depend only on the algorithm's output. Furthermore, we believe the techniques of this section can be used to simplify the dynamic matching result of \cite{Wajc20}.
% \tijn{I've put David's references here now. Maybe we want to move them to the intro as well?}

\paragraph{Naive Resampling.}
An adaptive adversary can attack our sampling, for example by deleting our sampled edges. This forces us to introduce some form of resampling. The most straight-forward way to do this, is that for any inserted or deleted edge $uv$. We resample all edges adjacent to either $u$ or $v$. This guarantees that we cannot over- or under-sample the edges of one vertex, and one can show that this is enough to preserve the fractional arboricity. The downside is that this approach is slow: each vertex can have up to degree $n$, even in a graph with bounded fractional arboricity. Instead, we will assign ownership of each edges to a vertex, such that we can sample more efficiently. 

\paragraph{Fancy Resampling.}
First we compute out-orientations such that each vertex has at most $\alpha(1+\eps)$ out-edges.  To each vertex, we assign its out edges. So every edge is assigned to a vertex, and each vertex has at most $\alpha(1+\eps)$ edges assigned to it. Now upon an update to $uv$, we recompute the out-orientation, and then resample all out-edges of each vertex for which its set of out-edges changed. 

There is one more complication that we address in the next paragraph where we provide the full description of the algorithm: we cannot afford to resample in each $H_i$. However, where resampling is too costly, it is also unnecessary. 

\paragraph{Algorithm Description.}
The main algorithm consists of the following steps. 
\begin{enumerate}
    \item Maintain out-orientation with maximum out-degree $(1+\eps)\alpha$. 
    \item Maintain the algorithm of \Cref{thm:arb_det} with $\alpha_{\max}=\Theta(\log m/\eps^4)$.
    \item Upon update to $uv$, i.e., an out-edge of $u$, do for each $i= 1, 2, \dots, \log m$:
    \begin{enumerate}
        \item Resample for $H_i$ all out-edges of $u$ with probability $p_i :=  8(c+3)\tfrac{\log m}{2^i \eps^4}$, using e.g.~\cite{BringmannP12}.
        \item If this leads to at most $\Theta(\tfrac{\log m}{\eps^4})$ changes to $H_i$, process them as updates in $H_i$'s fractional arboricity algorithm, \Cref{thm:arb_det} with $\alpha_{\max}=\Theta(\log m/\eps^4)$. If it leads to more changes, do nothing. 
    \end{enumerate} 
\end{enumerate}
We note that if resampling leads to more than $\Theta (\tfrac{\log m}{\eps^4})$ changes, that $p_i \alpha> p_i d^+_{H_i}(u)= \Omega (\tfrac{\log m}{\eps^4})$, so $\alpha = \Omega(2^i)$, hence $H_i$ is not the correct graph to look at. In particular, before we want to use $H_i$, we need $\alpha$ and hence $d^+_{H_i}(u)$ to decrease, which means we will resample its out-edges before we need the estimate from this graph. 

As before, in \Cref{thm:arb_obl}, we use the current output of the algorithm to see which $H_i$ we should use for the output after the next update.

\paragraph{Correctness.}
Next, we show that the given algorithm always maintains a correct estimate of the fractional arboricity. 

\begin{lemma}\label{lm:arb_aa_correct}
    At any moment, we maintain a $(1+\eps)$-approximation of the fractional arboricity. 
\end{lemma}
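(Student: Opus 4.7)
The plan is to adapt the three-part concentration analysis from the oblivious case (\Cref{thm:arb_obl}) by adding a \emph{freshness invariant}: at every time $t$ and for every edge $e\in G^t$, conditional on the adversary's view through time $t$, the indicator $\mathbb{1}[e\in H_i^t]$ is Bernoulli$(p_i)$, and the indicators for distinct edges are mutually independent. I would prove this by induction on $t$. The step uses that any update causes each edge whose owner is touched (in the maintained out-orientation) to be resampled with a fresh coin of bias $p_i$, and fresh coins are by construction independent of everything the adversary has observed up to the moment they are drawn. The out-orientation bound $d^+(u)\le (1+\eps)\alpha$ is what makes every resampling algorithmically cheap, but correctness only uses freshness.

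Given the invariant, the rest follows the template of \Cref{thm:arb_obl}. One selects an index $i$ based on the previous output such that, after the update, $\alpha\in[2^{i-1},2^{i+2})$ (the small-$\alpha$ regime $\alpha=O(\log m/\eps^4)$ is handled by running \Cref{thm:arb_det} directly on $G$). A Chernoff bound, using the independence guaranteed by the freshness invariant and $p_i=8(c+3)\log m/(2^i\eps^4)$, yields
\[
    \Pr\bigl[\,|E_{H_i^t}(S)|\notin (1\pm\tfrac{\eps}{3})\,p_i|E(S)|\,\bigr]\le m^{-(c+3)(|S|+2)}
\]
for any $S\subseteq V$ of suitable density. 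A union bound over all $O(n^{|S|})$ sets of each size, all sizes, all $m$ updates, and all $O(\log m)$ indices $i$ retains high probability. This gives (1) concentration of $|E_{H_i^t}(S^*)|$ for an $S^*$ achieving $\alpha$; (2) $|E_{H_i^t}(S)|\le(1+\eps/3)p_i\alpha(|S|-1)$ for all $S$; and (3) $\alpha(H_i^t)=O(\log m/\eps^4)$. By (3) and \Cref{thm:arb_det} with $\alpha_{\max}=\Theta(\log m/\eps^4)$, we obtain a $(1+\eps/3)$-approximation of $\alpha(H_i^t)$; rescaling by $1/p_i$ and combining with (1)--(2) gives a $(1+\eps)$-approximation of $\alpha(G^t)$ after adjusting constants.

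The main obstacle is establishing the freshness invariant against an adaptive adversary: the adversary's updates depend on past outputs, which depend on past coin flips, so one cannot naively treat the coins as independent of the history. The saving is that the adversary only controls \emph{when} coins are drawn (by triggering resamplings), never their outcomes. A deferred-decisions argument formalizes this: the coins determining $H_i^t\cap E(u)$ were drawn at the most recent resampling of $u$'s out-edges and were independent of the view at that moment, and coins drawn in different resamplings are mutually independent. Hence conditioning on the full view through time $t$ preserves the Bernoulli$(p_i)$ marginals and cross-edge independence for edges currently in $G^t$, which is all that the Chernoff argument above requires.
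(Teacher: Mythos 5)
There is a genuine gap: the \emph{freshness invariant} is false as stated. You claim that, conditional on the adversary's view through time $t$, the indicators $\{\mathbb{1}[e\in H_i^t]\}_{e\in G^t}$ are mutually independent Bernoulli$(p_i)$. But the view through time $t$ includes the algorithm's past outputs, and these outputs are deterministic functions of the very coins you want to treat as fresh. Conditioning on such an output changes the joint law of the coins that determined it. Concretely, suppose edges $e_1,e_2$ are both owned by $u$ and were last resampled at some time $t'<t$; if the estimate output at time $t'$ (or any later time $\le t$) depends on $\mathbb{1}[e_1\in H_i]+\mathbb{1}[e_2\in H_i]$, then conditioning on that output correlates the two indicators and perturbs their marginals. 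The correct fact is only that each coin is independent of the view \emph{at the moment it is drawn}; it does not follow that the collection of currently-active coins remains a product measure after conditioning on everything observed since. Hence the one-shot Chernoff bound you apply to $|E_{H_i^t}(S)|$ at a fixed time $t$, conditional on the adaptive view, is unjustified, and the downstream union bound over $S$, sizes, times, and indices $i$ cannot repair it.

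The paper's proof sidesteps this by never arguing about a product measure at a fixed snapshot. It works at the level of single resampling events and per-vertex statistics: a vertex $u$ is ``good'' or ``bad'' for $S$ depending on whether $d^+_{H_i}(S,u)$ concentrates after its most recent resampling, and each single resampling is bad with small probability \emph{conditional on the entire past}, since that resampling uses coins drawn after everything being conditioned on. The event that too many vertices are currently bad for $S$ forces a certificate: a set of at least $\lfloor\eps(|S|-1)+1\rfloor$ distinct resampling events (one per bad vertex, taken from a pool of at most $m^2$ possible resamplings) that each went bad. The paper then union bounds over these subsets of $[m^2]$, multiplying conditional probabilities in a martingale-style chain. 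This certificate-and-union-bound structure is essential against an adaptive adversary and is genuinely different from what you propose; to fix your proof you would need to drop the claim of conditional product Bernoulli structure and replace the direct Chernoff on $|E_{H_i^t}(S)|$ with the paper's good/bad-vertex bookkeeping over resampling events (together with the case split $|S|\le 2/\eps$ versus $|S|>2/\eps$, which the paper uses to make the resampling-subset union bound go through).
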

\begin{proof}
    Let $i$ such that $2^{i-1}\leq \alpha < 2^{i+2}$.
    Let $H_i$ be the corresponding sampled graph, where each edge is sampled with probability $p_i =  8(c+3)\tfrac{\log m}{2^i \eps^4}$. Since any action of the adversary leads to resampling all out-edges of all affected vertices, we can treat these resamplings as independent random events. We prove that with high probability our result holds. Hence even many attacks on the same vertex will not lead to breaking our guarantees. 

    For the proof, we consider $\tfrac{|E_{H_i}(S)|}{|S|-1}$, and aim to show this is roughly equal to $p_i\cdot \tfrac{|E(S)|}{|S|-1}$. To be precise, we need to show two parts:
    \begin{enumerate}
        \item  We show that for a set $S\subseteq V$ and that satisfies $\alpha = \tfrac{|E(S)|}{|S|-1}$, that w.h.p.\
        $$\frac{1}{p_i}\frac{|E_{H_i}(S)|}{|S|-1} \geq (1-\eps)\alpha.$$ \label{part:1}
        \item We show that for any $S\subseteq V$ we have w.h.p.\ that 
    $$\frac{1}{p_i}\frac{|E_{H_i}(S)|}{|S|-1} \leq (1+\eps)\alpha.$$\label{part:2}
    \end{enumerate}

    Instead of working with the standard adaptive adversary, who can attack us by implicitly attacking our sampling, we show that the algorithm holds against a stronger adversary: at any moment in time the adversary is allowed to point at a vertex who needs to resample all its out-edges. In total the adversary can call for $O(m\eps^{-4}\log^3m)=O(m^2)$ resamples, since we have $m$ updates, which each lead to $O(\eps^{-4}\log^3 m)$ recourse in the out-orientations.

    If the update sequence is longer than $m^2$, we can build new versions of the data structure in the background via period rebuilding. 
    
    For each part, we make a case distinction on the size of $S$ as follows
    \begin{enumerate}[a)]
        \item $|S| \leq 2/\eps$; or \label{part:a}
        \item $|S|>2/\eps$. \label{part:b}
    \end{enumerate}

    \textbf{Part~\ref{part:1}.}
    We note that at least $(1-\eps)|S|-1$ vertices $u\in S$ have at least $\eps\alpha$ out-neighbors in $S$. This follows from a simple pigeonhole argument: let $k$ denote the number of vertices that have at least $\eps\alpha$ out-neighbors. Then we see that 
    \begin{align*}
        &k(1+\eps)\alpha +(|S|-k)\eps\alpha \geq \alpha(|S|-1)\\
        \iff &k\geq |S|-1 -\eps|S|=(1-\eps)|S|-1.
    \end{align*}

    Next, consider a vertex $u\in S$ with at least $d^+(S,u)\geq \eps \alpha$ out-neighbors in $S$. In $H_i$ this vertex has in expectation $d^+(S,u)p_i$ out-neighbors, to show our result with high probability, we need to be more precise. 
    
    We say a vertex $u$ with $d^{+}(S,u) \geq \eps \alpha$ is \emph{good} for $S$, if the sampled degree for $v$ in $S$ satisfies 
    \begin{equation*}
        d_{H_i}^{+}(S,u) \geq (1-\eps) d^{+}(S,u)p_i.
    \end{equation*}
    If $u$ is not good, it is \emph{bad}. Let $B(S)$ be the set of bad vertices for $S$. Using a Chernoff bound, we see that the probability that a vertex $u$ is bad for $S$
    \begin{align*}
        \mathbb P[d_{H_i}^+(S,u)< (1-\eps)d^+(S,u)p_i] &\leq e^{-\eps^2d^+(S,u)p_i/2} \\
        &\leq e^{-\eps^2\eps \alpha8(c+3)\tfrac{\log m}{2^i \eps^4} /2}\\
        &\leq e^{-\eps^2\eps 2^{i-1}8(c+3)\tfrac{\log m}{2^i \eps^4}/2}\\
        &\leq m^{-(c+3)2/\eps}.
    \end{align*}
    
    \textbf{Part~\ref{part:1}\ref{part:a}.}
    % Now, we consider $|S|\leq 2/\eps$.
    We can use a union bound over all $u$ and $S$ of size at most $2/\eps$ to show that every $u$ is good for all such $S$.
    So w.h.p.\ we have that for $|S|\leq 2/\eps$ and for $u$ with $d^+(S,u)\geq \eps \alpha$ we have $d_{H_i}^+(S,u)\geq (1-\eps)d^+(S,u)p_i$. Further we have at most $\eps|S|+1$ vertices with out-degree at most $\eps\alpha$, in total covering at most $(\eps|S|+1)\eps\alpha$ edges in $S$. Combining these two facts gives us:

    \begin{align*}
        \frac{1}{p_i}\frac{|E_{H_i}(S)|}{|S|-1} &\geq \frac{(1-\eps)\alpha(|S|-1)-(\eps|S|+1)\eps \alpha}{|S|-1}\\
        &\geq (1-\eps)\alpha \left(1-(1-\eps)^{-1}\eps\frac{\eps|S|+1}{|S|-1}  \right)\\
        &\geq (1-\eps)(1-4\eps) \alpha. 
    \end{align*}
    Where we use that $|S|\geq 2$ and $\eps\leq \tfrac{1}{2}$.
    Now setting $\eps\leftarrow \eps/5$ gives the result.

    \textbf{Part~\ref{part:1}\ref{part:b}.}
    Next, we consider $|S|>2/\eps$. 
    If there are at most $|B(S)|\leq \eps (|S|-1)$ bad vertices $u\in S$, then $(1-2\eps)|S|$ good vertices have at least $\eps\alpha$ out-neighbors. Hence this guarantees
    \begin{align*}
        \frac{1}{p_i}\frac{|E_{H_i}(S)|}{|S|-1} &\geq \frac{(1-\eps)\alpha(|S|-1)-(\eps|S|+1)\eps \alpha-\eps(|S|-1)\alpha}{|S|-1}\\
        &\geq (1-\eps)\alpha \left(1-(1-\eps)^{-1}\eps\frac{\eps|S|+1}{|S|-1}  \right)-\eps\alpha\\
        &\geq (1-\eps)(1-4\eps) \alpha-\eps \alpha. 
    \end{align*}
    Setting $\eps\leftarrow\eps/6$ gives the result. 

    Now we compute the probability that there are more than $\eps(|S|-1)$ bad vertices.    
    Earlier we showed that every time a vertex is resampled it is bad for $S$ with probability at most $m^{-(c+3)2/\eps}$.
    Now, using this terminology, we want to show that with small enough probability the event $|B(S)| > \eps(|S|-1)$ occurs.
    We have assumed the adversary can do at most $m^{2}$ resampling attacks
    
    At least $\eps(|S|-1)$ bad resamplings for $S$ has to happen for $|B(S)| > \eps(|S|-1)$. 
    It follows that the event $|B(S)| > \eps(|S|-1)$ implies the existence of a subset $T \subset [m^2]$ with $|T| = \floor{\eps(|S|-1)+1}$ such that every resampling in $T$ is bad for $S$. 
    Hence, we find that:
    \begin{align*}
        \mathbb{P}(|B(S)| > \eps(|S|-1)) &\leq \sum \limits_{T \in [m^2]_{\floor{\eps(|S|-1)+1}}}\mathbb{P}(T \subset B(S)) \\
        &\leq \sum \limits_{T \in [m^2]_{\floor{\eps(|S|-1)+1}}} (m^{-(c+3)2/\eps})^{|T|} \\
        &\leq \binom{m^2}{\floor{\eps(|S|-1)+1}} (m^{-(c+3)2/\eps})^{|T|} \\
        &\leq m^{2\floor{\eps(|S|-1)+1}} (m^{-\tfrac{2(c+3)}{\eps}})^{\floor{\eps(|S|-1)+1}} \\
        &\leq m^{-\tfrac{2c}{\eps}\floor{\eps(|S|-1)+1}} \\
        &\leq m^{-\tfrac{2c}{\eps}\frac{\eps|S|}{2}} \\
        &\leq m^{-c|S|},
    \end{align*}
    since $|S| \geq 2$. 
    Here for a set $A$, we denoted by $A_{k} = \{A' \subset A: |A'| = k\}$ the set of all subsets of $A$ of size $k$. 

    We have at most $n^{t}$ choices for $S$ of size $t$, so union bounding over all $n$ choices of $t$ and all $n^{t}$ choices of $S$ for each $t$ gives that some $S$ is bad with probability at most: 
    \begin{equation*}
        \sum \limits_{t} n^{t} m^{-ct} \leq n \cdot{} m^{-(c-1)} \leq m^{-(c-2)}
    \end{equation*}
    By union bounding over all $m^2$ different choices of $G$ throughout the update sequence, we find that the bad event does not happen for any choice of $S$ or $G$ with probability at least $m^{-(c-4)}$, and so reassigning $c \leftarrow c+4$ gives that it does not happen with high probability.

    \textbf{Part~\ref{part:2}.}
    In this case, we say that a vertex is \emph{good} for $S$:
    \begin{itemize}
        \item If $d^+(S,u)\geq \eps\alpha$, if the sampled degree for $u$ in $S$ satisfies that 
    \begin{equation*}
        d_{H_i}^{+}(S,u) \leq (1+\eps) d^{+}(S,u)p_i.
    \end{equation*}
        \item If $d^+(S,u)< \eps\alpha$, if the sampled degree for $u$ in $S$ satisfies that 
    \begin{equation*}
        d_{H_i}^{+}(S,u) \leq 3\eps \alpha p_i.
    \end{equation*}
    \end{itemize}
    Again, if $u$ is not good, it is \emph{bad}. 
    
    If $d^+(S,u)\geq \eps\alpha$, we can show that $u$ is good for $S$ with probability at least $1-m^{-(c+3)2/\eps}$, again by applying a Chernoff bound, analogous to Part~\ref{part:1}. 
    
    Let $u\in S$ be a vertex with $d^+(S,u)\leq \eps\alpha$, we show that the probability that these vertices are bad is also small.
    Indeed, we have
    \begin{align*}
        \mathbb P\left[ d^+_{H_i}(S,u)> 3\eps \alpha p_i\right] &=  \mathbb P\left[ d^+_{H_i}(S,u)> \frac{3\eps \alpha}{d^+(S,u)}p_i d^+(S,u) \right]\\
        &\le  \mathbb P\left[ d^+_{H_i}(S,u)> \left(1+\frac{2\eps \alpha}{d^{+}(S,u)}\right)p_i d^+(S,u) \right]\\
        &\le \exp\left(- \frac{\left(   \frac{2\eps \alpha}{d^{+}(S,u)}\right)^2p_i d^+(S,u)}{2+\frac{2\eps \alpha}{d^{+}(S,u)}}\right)\\
        &\le \exp\left(- 2\eps\alpha p_i\frac{   \frac{2\eps \alpha}{d^{+}(S,u)}}{\frac{4\eps \alpha}{d^{+}(S,u)}}\right)\\
        &= \exp(-\eps\alpha p_i)\\
        &\le m^{-(c+3)2/\eps^3},
    \end{align*}
    since $\alpha p_i\geq 2^{i-1} 8(c+3)\tfrac{\log m}{2^{i}\eps^4}\geq2(c+3)\tfrac{\log m}{\eps^4}$.
    We conclude that in any case the probability that $u$ is bad for $S$ is at most $m^{-(c+3)2/\eps}$
    
    \textbf{Part~\ref{part:2}\ref{part:a}.}
     Again, we can union bound over all $u$ and all $S$ with $|S|\leq 2/\eps$, and all $O(m^2)$ updates to get that all vertices are good w.h.p.

    Now we see that if we let $k\geq 1$ denote the number of vertices that have $d^+(S,u)< \eps\alpha$ then
    \begin{align*}
        |E_{H_i}(S)| &= \sum_{\substack{u\in S \\ d^+(S,u)\geq \eps\alpha}} d^+_{H_i}(S,u) + \sum_{\substack{u\in S \\ d^+(S,u)< \eps\alpha}} d^+_{H_i}(S,u) \\
        &\leq (|S|-k)(1+\eps)\alpha p_i + k \cdot 3\eps\alpha p_i. 
    \end{align*}
    So we see that 
    \begin{align*}
        \frac{1}{p_i}\frac{|E_{H_i}(S)|}{|S|-1} &\leq \frac{1}{p_i}\frac{(|S|-k)(1+\eps)\alpha p_i + k \cdot 3\eps\alpha p_i}{|S|-1} \\
        &\leq \frac{(|S|-1)(1+\eps)  + k \cdot 3\eps }{|S|-1} \alpha\\
        &\le \left(1+\eps+3\frac{|S|}{|S|-1}\eps\right)\alpha\\
        &\le (1+7\eps)\alpha.
    \end{align*}
    Setting $\eps\leftarrow \eps/7$ gives the result.

    \textbf{Part~\ref{part:2}\ref{part:b}.}
    Finally, we consider $|S|>2/\eps$. 
    We first note
    \begin{equation}
        d_{H_i}^+(S,u)\leq d_{H_i}^+(u) \leq (1+\eps)p_i\alpha,\label{eq:degree}
    \end{equation}
    where the last inequality holds by a Chernoff bound independent from $S$ (so we only need to union bound over all $u\in V$). 
    
    If at most $\eps (|S|-1)$ vertices $u\in S$ are bad for $S$, then in the worst case they achieve equality in \Cref{eq:degree}. We now sum the bad and the good vertices, using the result on good vertices analogous to Part~\ref{part:2}\ref{part:a} to see
    \begin{equation*}
        \frac{1}{p_i} \frac{E_{H_i}(S)}{|S|-1} \leq \frac{1}{p_i} \frac{\eps(|S|-1)(1+\eps)p_i\alpha + (1+7\eps)p_i\alpha}{|S|-1} \leq (1+9\eps)\alpha.
    \end{equation*}
    Setting $\eps\leftarrow \eps/9$ gives the result.

    The fact that w.h.p.\ there are at most $\eps (|S|-1)$ bad vertices is analogous to Part~\ref{part:1}\ref{part:b}.

    \textbf{Bounding $\alpha_{\max}$.}
    Finally, we remark that the fractional arboricity in $H_i$ is at most $\alpha_{\max}=\Theta(\log m/\eps^4)$.
    Since $\alpha < 2^{i+2}$, we get by Part~\ref{part:2} that the fractional arboricity in $H_i$ is w.h.p.\ at most $p_i\cdot (1+\eps)\alpha \leq \frac{8(c+3)\log m}{2^i \eps^4}\cdot 2^{i+3}=O(\tfrac{\log m}{\eps^4})$. Hence by \Cref{thm:arb_det}, we maintain a $(1+\eps)$-approximation of the fractional arboricity in $H_i$, which scales to a $(1+\eps)$-approximation of the fractional arboricity in $G$ by Parts~\ref{part:1} and~\ref{part:2}. 
\end{proof}

\paragraph{Putting it all together.}
For the update time, we will need the following lemma. 
\begin{lemma}[\cite{chekuri2024adaptive}]\label{lm:out_orient}
     Given an unweighted, undirected (multi-)graph, we can maintain $(1+\eps)\alpha$ out-orientations with $O(\eps^{-6}\log^3 m \log \alpha)$ worst-case update time. The orientation maintained by the algorithm has $O(\eps^{-4}\log^2 m \log \alpha)$ recourse. 
\end{lemma}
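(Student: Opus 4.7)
The plan is to adapt the classical level-based framework for dynamic out-orientations (in the style of Brodal–Fagerberg and its descendants) to the $(1+\eps)$-approximate regime. I would assign to every vertex $v$ an integer level $\ell(v) \in \{0,1,\ldots,L\}$ with $L = O(\log_{1+\eps}\alpha) = O(\eps^{-1}\log\alpha)$, and orient every edge from its endpoint of higher level toward its endpoint of lower level, breaking ties by a fixed total order. The structural invariant I would maintain is that every vertex $v$ with $\ell(v)=k$ has out-degree in a window like $\big[(1+\eps)^{k-1},\,(1+\eps)^{k+1}\big]$, so that $\max_v(1+\eps)^{\ell(v)}$ is a $(1+\eps)$-approximation of the fractional arboricity by the Nash-Williams/Picard–Queyranne orientation characterization.

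\textbf{Update handling.} To insert an edge $uv$: orient it toward the lower-level endpoint; if some vertex $w$'s out-degree now exceeds $(1+\eps)^{\ell(w)+1}$, promote $w$ (set $\ell(w) \leftarrow \ell(w)+1$) and re-examine the orientation of each edge incident to $w$, flipping those now pointing the wrong way. Propagate promotions upward as needed, each time flipping only the edges that actually violate the level-orientation rule. Deletions work symmetrically with demotions when out-degrees drop below $(1+\eps)^{\ell(w)-1}$. To make these operations fast enough, I would bucket for every vertex its incident edges by the level of the opposite endpoint (say in a balanced BST or doubly-linked list per level), so that the relevant edges during a promotion/demotion can be located in $O(\log m)$ time each.

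\textbf{Recourse and update time.} The hard part will be bounding the number of edge flips per update. I would define a potential $\Phi = \sum_v f(\ell(v),\, d^+(v))$ where $f$ penalizes vertices whose out-degree is far from the band $[(1+\eps)^{\ell(v)-1},(1+\eps)^{\ell(v)+1}]$, normalized by $\eps^{-2}$ so that a single promotion or demotion absorbs $\Theta(\eps \cdot (1+\eps)^{\ell(v)})$ units while releasing at least the number of flips it triggers. A careful amortized analysis should give $O(\eps^{-4}\log^2 m\log\alpha)$ recourse per update — one $\eps^{-2}$ factor from the narrowness of the out-degree band at each level, another from the need to iterate the local search through the $\log_{1+\eps}\alpha$ levels, and the $\log^2 m$ factor from the bookkeeping cost per flipped edge. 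Finally, to turn this amortized guarantee into the claimed $O(\eps^{-6}\log^3 m\log\alpha)$ worst-case update time, I would stagger the promotion/demotion work across a hierarchy of $O(\log m)$ geometric rebuild epochs, following the standard de-amortization scheme of Bhattacharya–Henzinger–Italiano, paying an extra $\eps^{-2}\log m$ factor but spreading the work uniformly across updates. The main obstacle I anticipate is choosing the potential so that the amortized argument survives de-amortization — in particular, ensuring that a single update cannot force more than $\operatorname{poly}(\eps^{-1},\log m,\log\alpha)$ work in any single epoch, which requires the per-level bucketing to support fast rank/select so promotions touch only edges whose orientation genuinely needs to change.
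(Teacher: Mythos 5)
The paper does not actually prove this lemma: it is stated as a black-box citation to \cite{chekuri2024adaptive}, and the only new content in the paper is the remark immediately following the lemma, which explains how to adapt the cited algorithm to multi-graphs (store parallel copies of an edge in a balanced BST, and when rounding the fractional orientation handle all $2$-cycles first via another BST keyed by the fractional value, so that only one parallel copy's orientation is stored explicitly). Your proposal is instead an attempt to re-derive the result from scratch, so the relevant comparison is between your sketch and the cited algorithm, not between your sketch and the paper.

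Your route is genuinely different from the cited one: you propose an integral Brodal--Fagerberg/BHI-style level scheme with levels $\ell(v)$ and out-degree bands $[(1+\eps)^{\ell(v)-1},(1+\eps)^{\ell(v)+1}]$, whereas the algorithm of \cite{chekuri2024adaptive} maintains a \emph{fractional} out-orientation (essentially a local-search/MWU process on fractional edge variables) and then rounds it — this is precisely what the paper's remark about ``rounding the fractional orientation'' and ``$2$-cycles'' refers to. As a sketch, your approach has plausible ingredients, but it is hand-waved exactly where the real difficulty lies: (i) the potential function $\Phi=\sum_v f(\ell(v),d^+(v))$ is never specified and the bookkeeping of the $\eps^{-4}\log^2 m\log\alpha$ exponent is asserted rather than derived — your own accounting (``$\eps^{-2}$ from band width'' times ``$\log_{1+\eps}\alpha = O(\eps^{-1}\log\alpha)$ levels'' times ``$\log^2 m$ bookkeeping'') only yields $\eps^{-3}\log^2 m\log\alpha$ unless an additional $\eps^{-1}$ factor is justified; (ii) the de-amortization via geometric rebuild epochs is the known hard part of this line of work and cannot be waved through — a single insertion can in principle trigger a long cascade of promotions, and bounding the worst-case work per update is the central technical contribution of the cited paper, not a routine ``staggering'' argument; and (iii) you do not address the multi-graph issue at all, which is the one point the paper itself flags as requiring care. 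In short, your proposal identifies a plausible alternative framework but does not constitute a proof, and it does not engage with the part of the argument that the paper actually adds on top of the citation.
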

Note that this algorithm works for multi-graphs, if one keeps all parallel edges in balanced binary search trees. When rounding the fractional orientation one has to take care of all $2$-cycles before further processing things, but these can be identified and handled using another balanced binary search tree sorted by fractional orientation from one vertex to another. 
This ensures that the refinement contains no parallel edges, and therefore only the orientation of one parallel copy is stored implicitly and can be looked up when necessary. 
Note that the above things can be implemented in $O(\log m)$ time, but this does not increase the running time as it is done in parallel to other more expensive steps. 

\arboricity*
\begin{proof}
    We use the algorithm as described above. 
    Correctness follows from \Cref{lm:arb_aa_correct}. To obtain the bounds on the update time we note the following.
    We maintain $\log m$ graphs $H_i$, on which we apply \Cref{thm:arb_det} with $\alpha_{\max}=\Theta(\tfrac{\log m}{\eps^4})$, which needs $O(\log^{7} m/\eps^{8})$ time per update. By the recourse of the out-orientation, \Cref{lm:out_orient}, we need to resample the out-edges of $O(\log^3 m/\eps^4)$ vertices for each $H_i$ per update to $G$. Every resample leads to $O(\log m/\eps^3)$ edge updates in $H_i$. So in total each $H_i$ has $O(\log^4 m/\eps^7)$ updates per update to $G$.
    Multiplying this with the aforementioned update time for $H_i$, we obtain 
    \begin{equation*}
        \log m \cdot O(\log^4 m/\eps^7) \cdot O(\log^{7} m/\eps^{8})= O(\log^{11}m/\eps^{15})
    \end{equation*}
    amortized update time. For the bound with worst-case update time, we have by \Cref{thm:arb_det} that each update takes $O(m^{o(1)}/\eps^{12})$ time, so the worst-case update time becomes $O(m^{o(1)}/\eps^{19})$.
\end{proof}

\newpage
\section{A Lower Bound for Greedy Tree-Packing}\label{sc:LB}

In this section, we will show the following theorem:
\TPlowerbound*
To do so, we construct a family of graphs, and give an execution of greedily packing trees on these graphs such that $|\ell^\T(e)-\ell^*(e)| > \eps/\lambda$ if $|\T|=o(\lambda/\eps^{3/2})$. 
We first do this for $\lambda=2$, then we note that we can obtain the result for any even $\lambda$ by essentially copying this construction $\lambda/2$ times. To get an intuition for the proof, we recommend the reader to look ahead to the figures. In the right part of \Cref{fig:Construction}, we depict the constructed graph with $\lambda=2$. This graph is a very uniform graph; every edge $e$ has $\ell^*(e)=1/2$ (see \Cref{lma:disjSPT}). The packing of trees is depicted in \Cref{fig:unmodified_pair} and \Cref{fig:modified_pair}, where we can see that certain edges are over-packed and others are under-packed. The over-packed edges will get a value $\ell^\T(e)$ well above $1/2=\ell^*(e)$, giving the result. 

The construction works for any tuple $(\lambda, k, n) \in (2\mathbb{Z}) \times \mathbb{Z}_{\geq 1} \times \mathbb{Z}_{\geq 10}$ with $k = \mathcal{O}(n^{1/3})$. 
Given $n$ and $k$ satisfying these requirements, we first show the construction for $\lambda = 2$. 
We extend the construction to any $\lambda \in 2\mathbb{Z}$ afterwards.
Before we present the family of graphs, we first introduce a simple operation which preserves a partition value of $2$. \\

\textbf{Operation:} Replace any vertex $v$ by two vertices $v'$ and $v''$ connected by two parallel edges. The edges around $v$ can be distributed to $v'$ and $v''$ in any arbitrary fashion. 
The following lemma is then straight-forward to check.
\begin{lemma} \label{lma:operation}
    Given a graph $G$ with $\Phi(G) = 2$ such that the trivial partition $\mathcal{P} = \{u\}_{u \in V(G)}$ is a minimum partition, then for all $v \in V(G)$, any graph $G_v$ obtained by performing a valid version of the operation on $v$ also has $\Phi(G_v) = 2$ and that the trivial partition $\mathcal{P} = \{w\}_{w \in V(G_v)}$ is a minimum partition.
\end{lemma}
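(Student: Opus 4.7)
The plan is to apply \Cref{lma:NashTutte}, which gives $\Phi(G_v) = \min_\P \rm{part\_val}(\P)$, and then to show both that the trivial partition of $V(G_v)$ achieves partition value exactly $2$ and that no other partition of $V(G_v)$ achieves a smaller value. First I would verify the trivial-partition value is $2$: the operation adds one vertex and two edges, so
\[
    \frac{|E(G_v)|}{|V(G_v)|-1} = \frac{|E(G)|+2}{|V(G)|} = \frac{2(|V(G)|-1)+2}{|V(G)|} = 2,
\]
using the hypothesis that the trivial partition of $V(G)$ already has value $2$.

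Next I would take an arbitrary partition $\P$ of $V(G_v)$ with $|\P|\ge 2$ and show that it has partition value at least $2$ in $G_v$. The strategy is to construct a partition $\P'$ of $V(G)$ from $\P$ and then reduce the bound to $\Phi(G)=2$, splitting into two cases depending on whether $v'$ and $v''$ lie in the same class of $\P$. If they share a class, I would collapse them back to the single vertex $v$ to form $\P'$ with $|\P'|=|\P|$; the two new parallel edges become self-loops (which still count as edges), so $|E(G_v/\P)| = |E(G/\P')|+2$, and the bound follows immediately from $\rm{part\_val}(\P') \ge 2$ in $G$. If they lie in different classes, I would merge those two classes (relabelled to contain $v$) to form $\P'$ of $V(G)$ with $|\P'|=|\P|-1$; the two parallel edges now contribute to the cut across the two classes, giving $|E(G_v/\P)| = |E(G/\P')|+2$, and a short calculation using $|E(G/\P')|\ge 2(|\P'|-1)$ again yields the bound of $2$.

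The main obstacle I anticipate is the degenerate subcase of the second case where $|\P'|=1$ (equivalently, $|\P|=2$ with $v'$ and $v''$ on opposite sides); here the partition value of $\P'$ in $G$ is not defined, so I cannot directly appeal to $\Phi(G)=2$. I would handle this separately by observing that the partition value of $\P=\{A,B\}$ in $G_v$ is exactly the cut size $|E_{G_v}(A,B)|$, which is at least $2$ simply because of the two parallel edges between $v'\in A$ and $v''\in B$. Once this corner case is dispatched, both halves of the argument are complete; combined with the trivial-partition computation, this gives $\Phi(G_v) = 2$ with the trivial partition being a minimum partition, as claimed.
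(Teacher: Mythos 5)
Your proof is correct, but it takes a genuinely different route from the paper's. The paper first shows that the hypothesis — $\Phi(G)=2$ with the trivial partition minimizing — is equivalent to $G$ being \emph{exactly} the edge-disjoint union of two spanning trees (the forward direction from Nash-Williams' theorem, the reverse from observing that a tight trivial partition forces $|E(G)|=2(|V(G)|-1)$). The operation then manifestly preserves this class: put one of the two new parallel edges in each spanning tree, and both remain spanning trees of $G_v$. Your proof instead works directly at the level of the partition-value formula, lifting an arbitrary partition $\P$ of $V(G_v)$ to a partition $\P'$ of $V(G)$ and transferring the bound $\rm{part\_val}_G(\P')\geq 2$, with a case split on whether $v'$ and $v''$ share a class, plus the degenerate two-class subcase. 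Both work; the paper's argument is shorter and more conceptual, using the tree-packing side of the duality, while yours is a self-contained case analysis using only the partition side.

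Two small imprecisions, neither of which breaks the argument. In Case 2, the claimed equality $|E(G_v/\P)| = |E(G/\P')|+2$ should be $\geq$: merging classes $A$ and $B$ to form $\P'$ also discards any other $G$-edges between $A$ and $B$, not just the two parallel $v'v''$ edges. Since you only need the lower bound, the chain $\frac{|E(G/\P')|+2}{|\P'|}\geq\frac{2(|\P'|-1)+2}{|\P'|}=2$ still goes through. In Case 1, whether the two self-loops ``still count as edges'' is a convention question; for the Tutte/Nash-Williams duality invoked here, $E(G/\P)$ is the set of inter-class edges with self-loops discarded, so one gets $|E(G_v/\P)|=|E(G/\P')|$ and the conclusion follows even more directly. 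Either way your bound holds.
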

\begin{proof}
Observe that any graph $H$ that satisfies the conditions $\Phi(H) = 2$ and that the trivial partition $\mathcal{P} = \{u\}_{u \in V(H)}$ is a minimum partition is exactly the union of two disjoint spanning trees. 

Indeed, suppose first that $H$ is a union of $2$ disjoint spanning trees. It then follows by~\cite{Nash61} that $\Phi(H)\geq 2$. Since the trivial partition induces partition value exactly $2$, this direction follows. 

To see the other direction, observe that $\Phi(H) = 2$ implies that one can pack two disjoint spanning trees of $H$ by~\cite{Nash61}. Since the trivial partition achieves this minimum, $H$ must in fact be the disjoint union of two spanning trees. 

Finally, observe that performing the operation and placing each new edge in a different spanning tree yields a new graph which is also exactly the union of two disjoint spanning trees. 
\end{proof}

Next, consider the following family of graphs indexed by $n$ and $k$: for given $n$ and $k$, we let $G_{n,k}$ be the following graph. 
Begin with the complete simple graph on $3$ vertices $K_3$. 
Pick an arbitrary edge and insert a parallel copy of it. Denote by $v^{1}_1$ the vertex not incident to any parallel edges. 
Denote by $v^{1}_2$ and $v^{1}_{3}$ the other two vertices (the choice can be made arbitrarily, but is fixed once made). Let $e^{1}_{1}$ be the edge $v^{1}_{1}v^{1}_{2}$ and $e^{1}_{2}$ the edge $v^{1}_{1}v^{1}_{3}$. See the left part of \Cref{fig:Construction} for an illustration. 

Next, we perform the operation on $v^{1}_{3}$ to get two new vertices $v'$ and $v''$. 
We distribute the edges incident to $v^{1}_{3}$ as follows: all edges incident to $v^{1}_{3}$ becomes incident to $v'$ except for one of the parallel edges which becomes incident to $v''$. 
We then (re-)assign $v^{1}_{3} = v'$, $v^{2}_{1} = v^{1}_{2}$, $v^{2}_{2} = v^{1}_{3}$, $v^{2}_{3} = v''$, $e^{2}_{1} = v^{2}_{1}v^{2}_{2}$, and $e^{2}_{2} = v^{2}_{1}v^{2}_{3}$.

Having constructed $v^{i}_{1}$, $v^{i}_{2}$, and $v^{i}_{3}$, we get $v^{i+1}_{1}$, , $v^{i+1}_{2}$, and $v^{i+1}_{3}$ similarly to above: 
we perform the operation on $v^{i}_{3}$ to get two new vertices $v'$ and $v''$. 
We distribute the edges incident to $v^{i}_{3}$ as follows: all edges incident to $v^{i}_{3}$ becomes incident to $v'$ except for one of the parallel edges which becomes incident to $v''$. 
We then (re-)assign $v^{i}_{3} = v'$, $v^{i+1}_{1} = v^{i}_{2}$, $v^{i+1}_{2} = v^{i}_{3}$, $v^{i+1}_{3} = v''$, $e^{i+1}_{1} = v^{i+1}_{1}v^{i+1}_{2}$, and $e^{i+1}_{2} = v^{i+1}_{1}v^{i+1}_{3}$. See the middle part of \Cref{fig:Construction} for an illustration.

\begin{figure}%
    \centering  
    \subfloat{{\includegraphics[width=4.3cm]{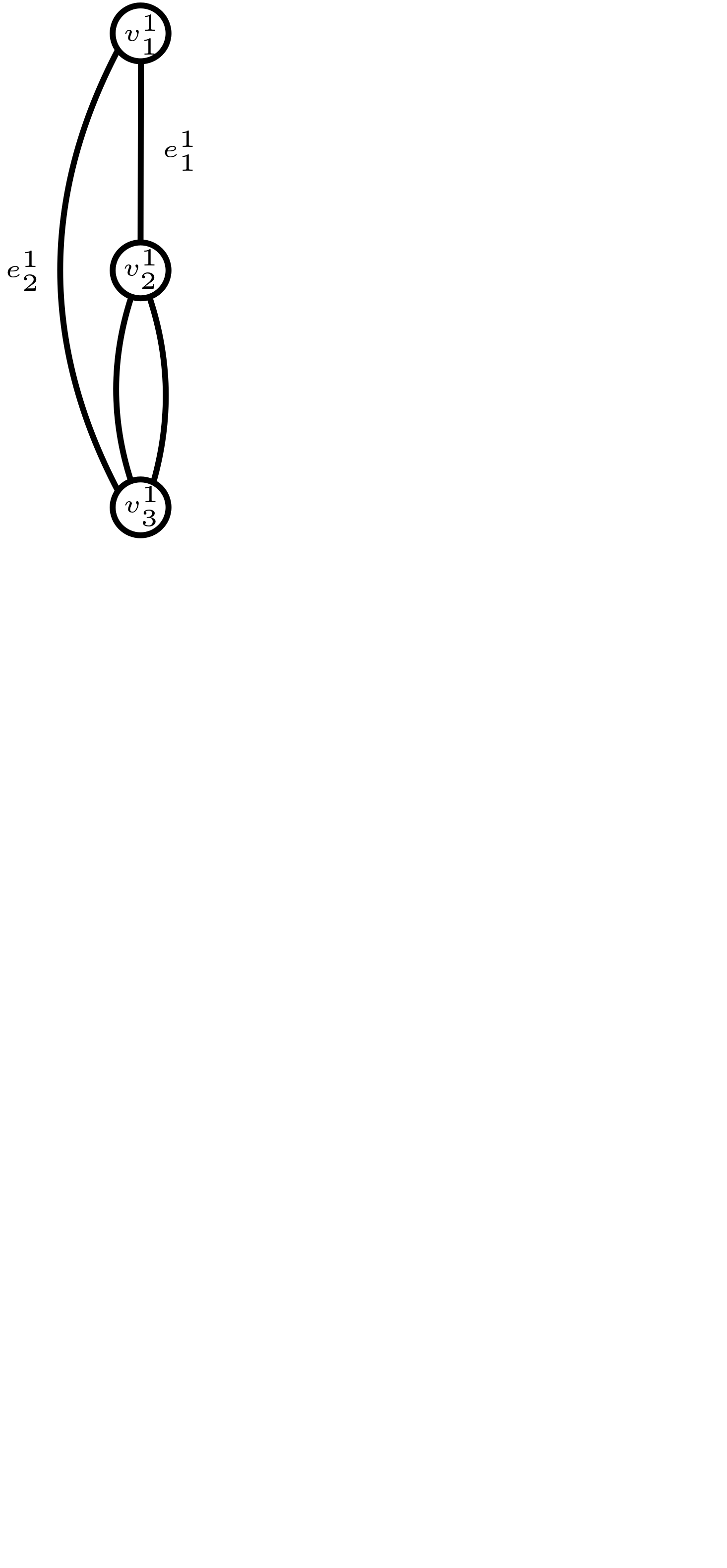} }}%
    \hspace{1mm}%
    \subfloat{{\includegraphics[width=5cm]{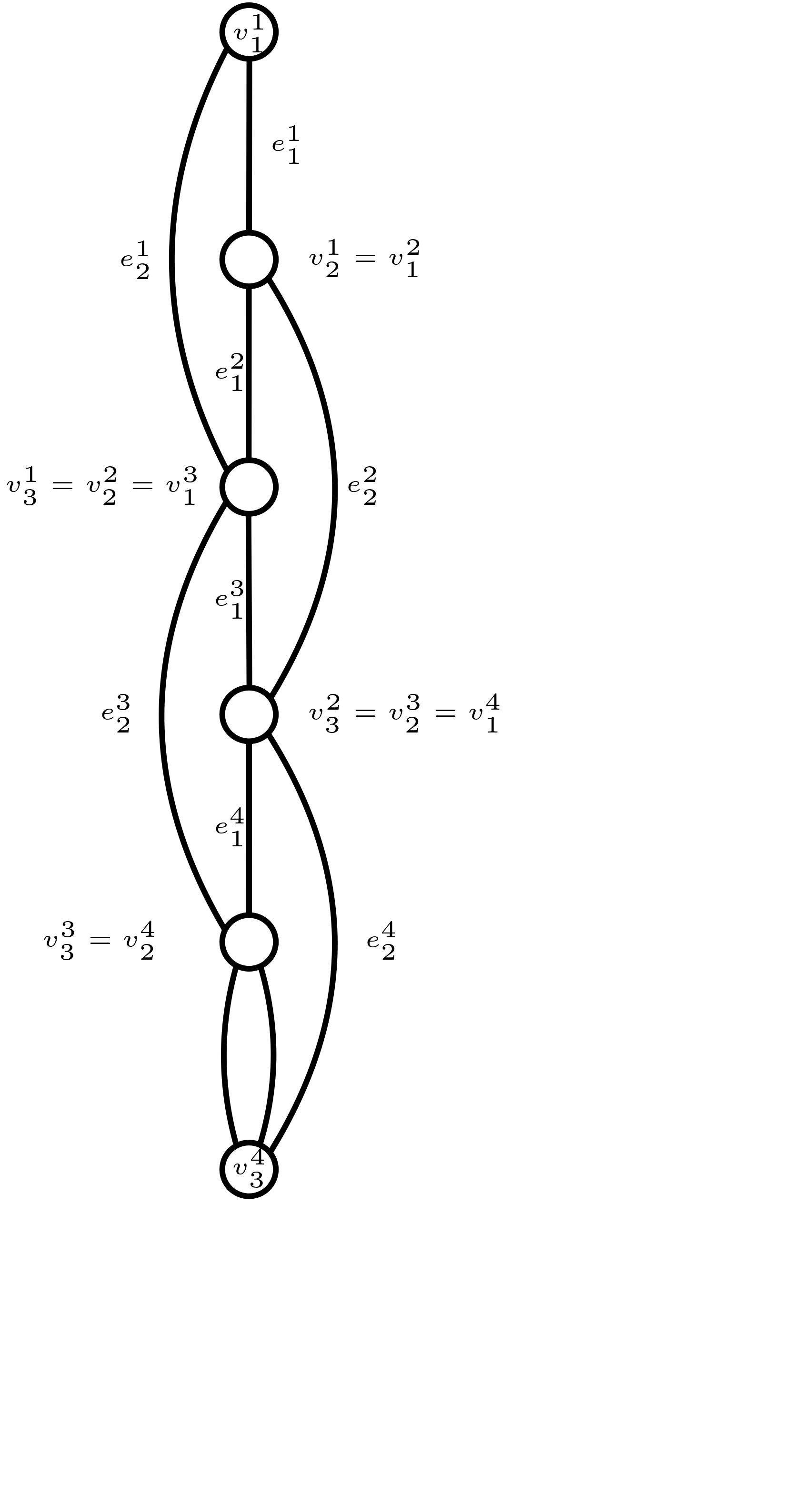} }}%
    \hspace{1mm}%
    \subfloat{{\includegraphics[width=5.2cm]{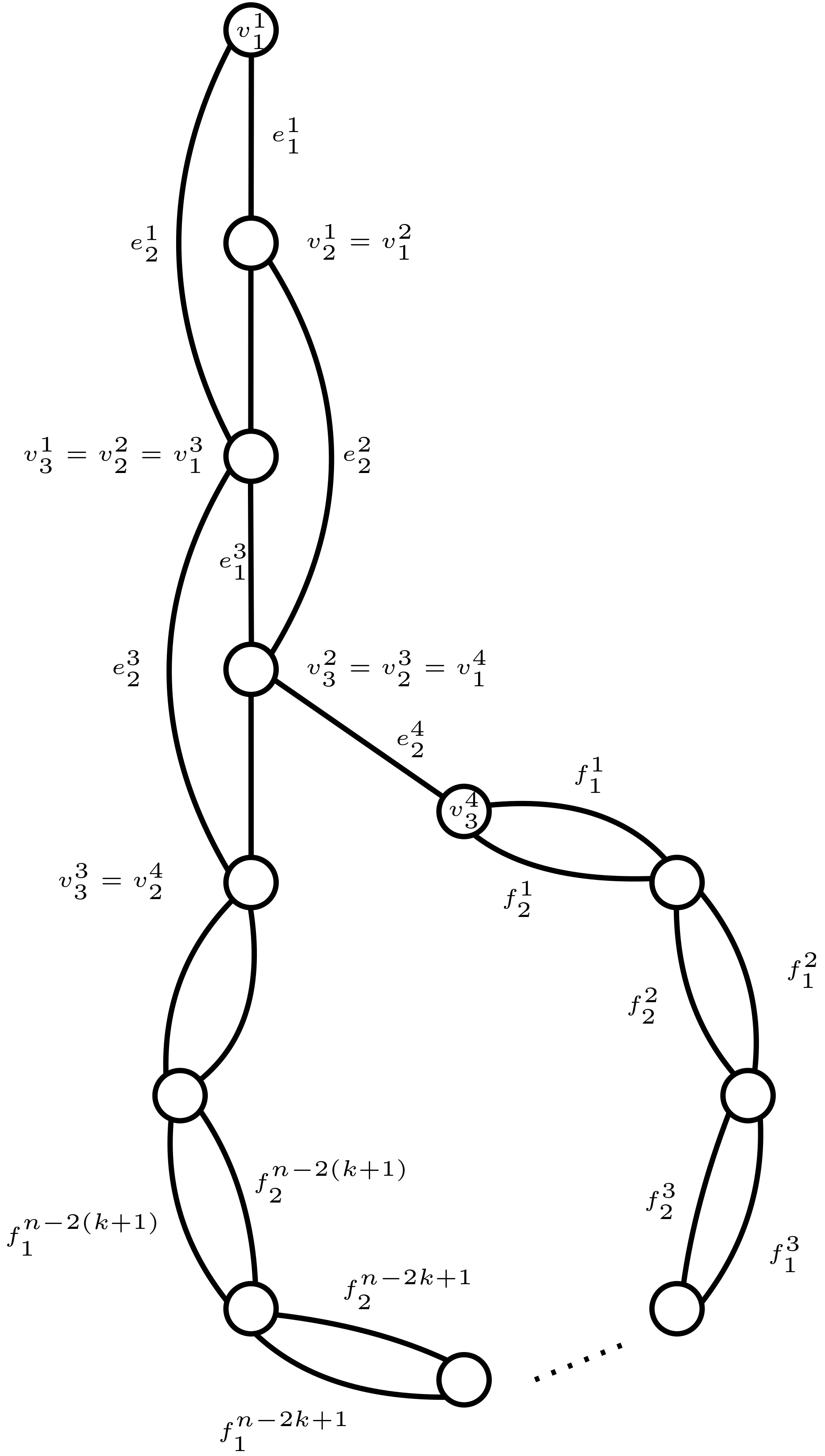} }}%
    \caption{Various stages of the construction of the graph $G_{n,k}$. The leftmost illustration shows $K_3$ with an additional parallel edge. The middle illustration shows the graph obtained by repeatedly applying the operation to the graph to the left. Finally, the rightmost illustration shows the final graph $G_{n,k}$.}
    \label{fig:Construction}%
\end{figure}

We perform the above step $2k-1$ times. Each step increases the number of vertices by $1$, and so the resulting graph has $3+(2k-1) = 2(k+1)$ vertices. 

We then perform the following step $n-2(k+1)$ times: 
perform the operation on $v^{2k}_{3}$ to get two new vertices $v'$ and $v''$. Let all edges incident to $v^{2k}_{3}$ become incident to $v'$ except for the parallel edges incident to $v^{2k}_{3}$ which become incident to $v''$. 
Then re-assign $v^{2k}_{3} = v'$ and denote the two new parallel edges by $f^{n-2(k+1)}_{1}$ and $f^{n-2(k+1)}_{2}$ (the choice can be made arbitrarily, but is fixed once made). 
Having constructed $f^{i}_1$ and $f^{i}_2$, we can construct $f^{i-1}_1$ and $f^{i-1}_{2}$ mutatis mutandis to above. See the right part of \Cref{fig:Construction} for an illustration.

Each time the second step is performed, the number of vertices is also increased by one, so in total the graph has $n$ vertices. 
By repeatedly applying \Cref{lma:operation}, we find the resulting graph is the disjoint union of two spanning trees. Hence, we have that:
\begin{lemma} \label{lma:disjSPT}
    For valid choices of $n,k$, we have that $G_{n,k}$ satisfies that $\Phi(G_{n,k}) = 2$ and that the trivial partition $\mathcal{P} = \{u\}_{u \in V(G_{n,k})}$ is a minimum partition.
\end{lemma}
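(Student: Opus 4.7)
The plan is a straightforward induction on the number of times the operation has been applied, using Lemma~\ref{lma:operation} as the inductive step. I will first check the base case, then argue that each step in the construction of $G_{n,k}$ is an application of the operation, so the two properties ($\Phi = 2$ and the trivial partition being minimum) propagate.

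For the base case, consider the starting graph $H_0$, which is $K_3$ with one edge doubled. It has $3$ vertices and $4$ edges. I would exhibit two edge-disjoint spanning trees explicitly: one tree consists of $e^{1}_{1}$ together with one copy of the doubled edge, and the other consists of $e^{1}_{2}$ together with the second copy. By \cite{Nash61,Tutte61} (i.e., Lemma~\ref{lma:NashTutte}), the existence of two disjoint spanning trees gives $\Phi(H_0) \geq 2$, and on the other hand the trivial partition achieves partition value $\tfrac{|E(H_0)|}{|V(H_0)|-1} = \tfrac{4}{2} = 2$, so $\Phi(H_0) = 2$ and this value is witnessed by the trivial partition.

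For the inductive step, I observe that the construction of $G_{n,k}$ from $H_0$ is a sequence of $n-3$ steps, and each of them is a single application of the operation defined just before Lemma~\ref{lma:operation}: in each step we pick a current vertex (either $v^{i}_{3}$ in the first phase or $v^{2k}_{3}$ in the second phase), replace it by two vertices joined by two parallel edges, and redistribute the formerly-incident edges between the two new vertices in the prescribed way. Since the edge redistribution in our construction is one of the allowed distributions, each such step is a valid application of the operation. Applying Lemma~\ref{lma:operation} once per step, we conclude by induction that the final graph $G_{n,k}$ still satisfies $\Phi(G_{n,k}) = 2$ with the trivial partition being a minimum partition.

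There is essentially no obstacle here: the real work is contained in Lemma~\ref{lma:operation}, and this lemma just verifies that the construction falls under its hypothesis and assembles the induction. The only thing one needs to be careful about is that each step of the construction, as described, indeed distributes the edges legally (which is immediate from the definition of the operation, since any distribution is permitted), and that the relabelings of $v^{i}_{j}$ and the introduction of the labels $e^{i}_{j}, f^{i}_{j}$ do not change the underlying multigraph obtained at each stage.
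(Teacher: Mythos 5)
Your proof is correct and matches the paper's argument: the paper likewise establishes the lemma by observing that $G_{n,k}$ is built from the base graph ($K_3$ with one doubled edge, which is trivially a union of two disjoint spanning trees) via repeated applications of the operation, and then invokes Lemma~\ref{lma:operation} at each step. You simply spell out the base case and the induction more explicitly than the paper does.
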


Next, we will specify a packing of $G_{n,k}$ consistent with \Cref{thm:TP_lower_bound}. 
We consider the pairs of edges $Y_{i}$ consisting of $e^{i}_{1}$ and $e^{i}_{2}$. See \Cref{fig:level} for an illustration. 
After having packed the first $2j$ trees, we say that $Y_{i}$ is at level $2\alpha$ if all edges in $Y_{i}$ have been packed exactly $\alpha+j$ times.
We say that $Y_{i}$ is at level $2\alpha + 1$ if $e^{i}_{1}$ has been packed $\alpha + 1+j$ times and $e^{i}_{2}$ has been packed $\alpha+j$ times. 
If $Y_{i}$ is at level $\beta$ we write $\text{lev}(Y_i) = \beta$.

\begin{figure}%
    \centering
    \subfloat{{\includegraphics[width=5cm]{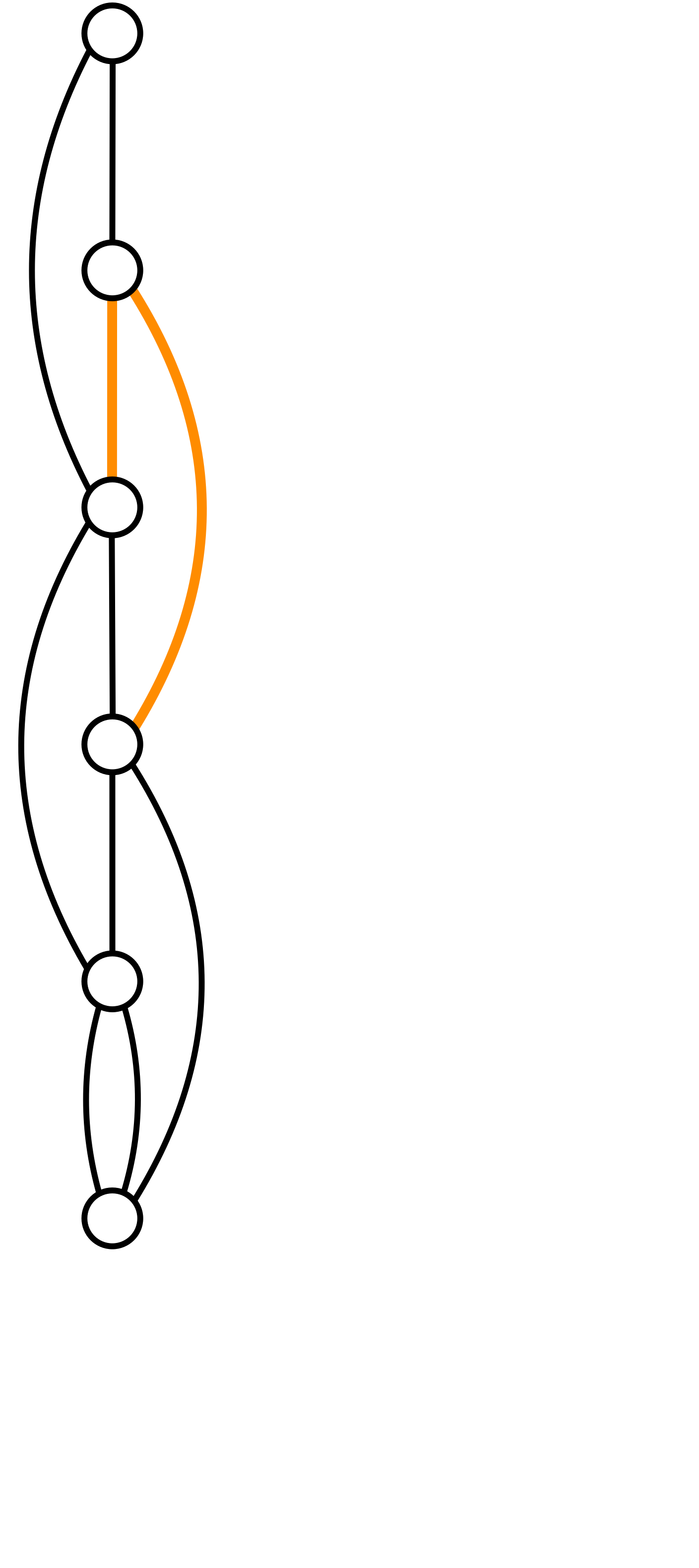} }}%
    \caption{The construction with the level $Y_2$ highlighted in orange. }
    \label{fig:level}%
\end{figure}

We use the following definition:
\begin{definition}
    We say that the tree-packing $\T$ on $G_{k,n}$ is in \emph{standard position} if the following holds:
    \begin{itemize}
        \item $|\T| = 2j$ for some $j \in \mathbb{Z}$.
        \item For all $i$: $Y_i$ is at level $\beta_{i}$ for some $\beta_{i}$. 
        \item For all $i<j$, we have that: $\beta_{i} \geq \beta_{j}$ and that $|\beta_{i}-\beta_{i-1}| \leq 1$.
        \item $\beta_{2k-1} \geq 0$.
        \item For all $i$: $f^{i}_{2}$ has been packed $j$ times. 
        \item There is some $\iota$ such that for all $i \geq \iota$: $f^{i}_{1}$ has been packed $j$ times, and for all $i < \iota$ $f^{i}_{1}$ has been packed $j-1$ times. 
    \end{itemize}
We let the vector $\beta = (\beta_1, \beta_2, \dots, \beta_{2k}) \in \mathbb{Z}^{2k}$ be the \emph{level profile} of $\T$.
\end{definition}

Next we will show that if $\T$ is in standard position with $\text{lev}(Y_i) = \text{lev}(Y_{i+1})$ for some $i$, then we can increase the level of some $Y_j$ without decreasing the level of any pair by adding only $O(k)$ trees to $\T$ while still ensuring that $\T$ ends up in standard position. 
The final tree-packing is then achieved by applying the above procedure in a systematic way $O(k^2)$ times. 
Note that the empty packing is in standard position.
Before showing this, we first show the following lemma.
\begin{lemma} \label{lma:OneAugment}
    Let $\T$ be a greedy tree-packing in standard position on $G_{n,k}$ with $|\T| = 2j$ and $\beta_{2k} \geq 0$. 
    Let $i$ and $s \geq 1$ be such that 1) either $\beta_i = \beta_{i+1} = \dots = \beta_{i+s} > \beta_{i+s+1}$ or $\beta_i = \beta_{i+1} = \dots = \beta_{i+s}$ and $i+s = 2j$, and 2) either $\beta_{i-1} > \beta_{i}$ or $i = 1$.
    Then there is a greedy tree-packing $\T'$ in standard position on $G_{n,k}$ with $2j + 2(\floor{\tfrac{s}{2}}+1)$ trees such that the level profile $\beta'$ of $\T'$ satisfies $\beta'_l = \beta_l + [i = l] - [i+s=l]$\footnote{Here $[P]$ denotes the Iverson bracket, which evaluates to $1$ if $P$ is true and $0$ otherwise.} for all $l$.
\end{lemma}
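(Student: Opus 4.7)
My plan is to prove the lemma by explicit construction. I would append to $\T$ a carefully chosen sequence of $2m := 2(\lfloor s/2 \rfloor + 1)$ spanning trees $T_1, \dots, T_{2m}$, and then verify three things in turn: (i) each $T_q$ is a valid minimum spanning tree under the cumulative loads from $\T \cup \{T_1, \dots, T_{q-1}\}$, (ii) the resulting level profile is exactly $\beta' = \beta + e_i - e_{i+s}$ (where $e_l$ denotes the $l$-th standard basis vector), and (iii) the final packing is in standard position. Each $T_q$ would be specified by its choices from each $2$-edge cut of $G_{n,k}$ (namely $Y_1$ and each of the tail pairs $\{f^h_1, f^h_2\}$) together with its choices from the remaining wishbones $Y_2, \dots, Y_{2k}$. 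Outside the plateau $[i, i+s]$, consecutive trees would alternate their choices so as to add exactly one packing per edge per pair of trees, thereby preserving $\beta_l$ for $l \notin [i, i+s]$; the tail selections would alternate analogously so that each $f^h_1$ and each $f^h_2$ accrues the number of packings required by the standard-position invariant. Inside the plateau, the trees would follow a shifting pattern which, pair by pair, nudges the over-packed endpoint one step closer to $Y_i$ and the under-packed endpoint one step closer to $Y_{i+s}$, so that after all $2m$ trees the net effect is to raise $\beta_i$ by one, lower $\beta_{i+s}$ by one, and leave every other $\beta_l$ unchanged.

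For the MST verification I would rely on two structural observations. First, the plateau condition $\beta_i = \cdots = \beta_{i+s}$ forces the wishbone edges inside the plateau to share identical loads row by row (i.e., all $e^l_1$ tied with one another and all $e^l_2$ tied with one another, for $l \in [i, i+s]$), which gives exactly the tie-breaking freedom needed to choose among them in an MST. Second, the boundary conditions ($\beta_{i-1} > \beta_i$ or $i = 1$, and either $\beta_{i+s} > \beta_{i+s+1}$ or $i + s$ is at the right endpoint) ensure that outside the plateau the preferred MST choice is strictly forced, so our outside-plateau selections do not inadvertently disturb those levels. I would then argue inductively on $q$ that after adding $T_1, \dots, T_{q-1}$ the load landscape still permits the desired $T_q$ as an MST, which amounts to checking, for each relevant $2$-edge cut and each fundamental cycle with respect to the current partial forest, that the chosen edge attains the current minimum load (possibly tied).

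Once MST-ness is secured, (ii) and (iii) become bookkeeping. For (ii) one counts that each $Y_l$ with $l \notin [i, i+s]$ accrues exactly $m$ additional packings of each of its two edges (preserving $\beta_l$), that $Y_i$ accrues one extra packing on the appropriate edge relative to the symmetric baseline (raising $\beta_i$ by one, taking into account the parity of $\beta_i$), and symmetrically that $Y_{i+s}$'s imbalance yields a decrease of one in $\beta_{i+s}$. For (iii) one checks that $\beta'$ is still non-increasing (the boundary drops on either side of the plateau absorb the $\pm 1$ shifts at $i$ and $i+s$), that adjacent differences remain at most one (since every intermediate $\beta_l$ is unchanged), that $\beta'_{2k} \geq 0$ still holds, and that the tail invariant on the $f^h$'s is preserved by the alternating tail selection. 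The main obstacle is the MST verification in step (i): the construction must be globally consistent across all $2m$ trees, since the intermediate load landscape evolves after each insertion, and we must order the trees in such a way that each successive landscape admits the next intended MST choice. The plateau supplies the tie-breaking slack that makes the desired shift possible, and the bulk of the technical work lies in turning this slack into a concrete, verifiable ordering.
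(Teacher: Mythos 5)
Your plan matches the paper's proof at the level of strategy: both pack $2(\lfloor s/2\rfloor + 1)$ new trees in pairs, both exploit the tie-breaking slack created by the plateau $\beta_i = \cdots = \beta_{i+s}$, both shift the over-packed position one step toward $Y_i$ and the under-packed position one step toward $Y_{i+s}$ pair by pair, and both use a default alternating selection outside the plateau and on the tail edges $f^h_1, f^h_2$. The paper carries this out concretely by first constructing a canonical "unmodified" pair $(T,T')$ that extends the packing greedily without touching the level profile, and then performing an explicit sequence of edge swaps on each pair $(T_q, T'_q)$; it splits into cases on the parity of $\text{lev}(Y_i)$ because the swap pattern differs between even and odd plateau levels, and verifies greediness by downward induction on the index $l$.

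Two things are worth flagging. First, your claim that the boundary conditions force the MST choice to be \emph{strictly} unique outside the plateau is not correct: the hypotheses only control $\beta_{i-1}$ and $\beta_{i+s+1}$ relative to the plateau, and there may well be other flats elsewhere in the profile. What actually matters (and what the paper uses) is that a canonical alternating choice is \emph{consistent} with minimality, not that it is the unique minimum-load choice; the soundness argument for the outside-plateau selections should be phrased that way. Second, and more substantially, your write-up explicitly defers the core content — the explicit per-pair swap sequence and its verification — to "technical work" you have not carried out. In this lemma that is precisely where the proof lives: the case split on plateau parity, the exact swaps $e^l_1 \leftrightarrow e^l_2$ and $e^l_\ast \leftrightarrow e^{l+1}_\ast$ made in each $T_q$ and $T'_q$, the handling of the final tree when $s$ is odd versus even, and the per-tree greediness check. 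As a blind reconstruction your outline points in the right direction and identifies all the right ingredients, but as a proof it is incomplete until that construction and its verification are written down.
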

\begin{proof}
    We will show the lemma assuming that $\beta_i = \beta_{i+1} = \dots = \beta_{i+s} > \beta_{i+s+1}$ and that $\beta_{i-1} > \beta_{i}$. The other cases follow from analogous arguments. 

    We pack trees in pairs. Consider first the following trees $T$ and $T'$ that greedily extends $\T$. The edge $e^{l}_{1}$ is in $T$ if $\text{lev}(Y_l)$ is even, $e^{l}_{2}$ is in $T$ if $\text{lev}(Y_l)$ is odd, and all edges $f^{p}_{1}$ are in $T$.
    In order to verify that $T$ indeed extends $\T$ in a greedy manner, observe that since $\T$ is in standard position, it follows by downward induction on $l$ that $T$ extends $\T$ greedily. See \Cref{fig:unmodified_pair} for an illustration. 

    Similarly, we let $e^{j}_{2}$ in $T'$ if $\text{lev}(Y_j)$ is even, $e^{j}_{1}$ in $T'$ if $\text{lev}(Y_j)$ is odd, and all edges $f^{j}_{2}$ are in $T$.
    Exactly as above, it follows by induction that $T'$ greedily extends $\T \cup T$, and that $\T \cup \{T, T'\}$ is a greedy tree-packing in standard position with the same level profile as $\T$. 
    
    \begin{figure}%
        \centering
        \subfloat{{\includegraphics[width=4cm]{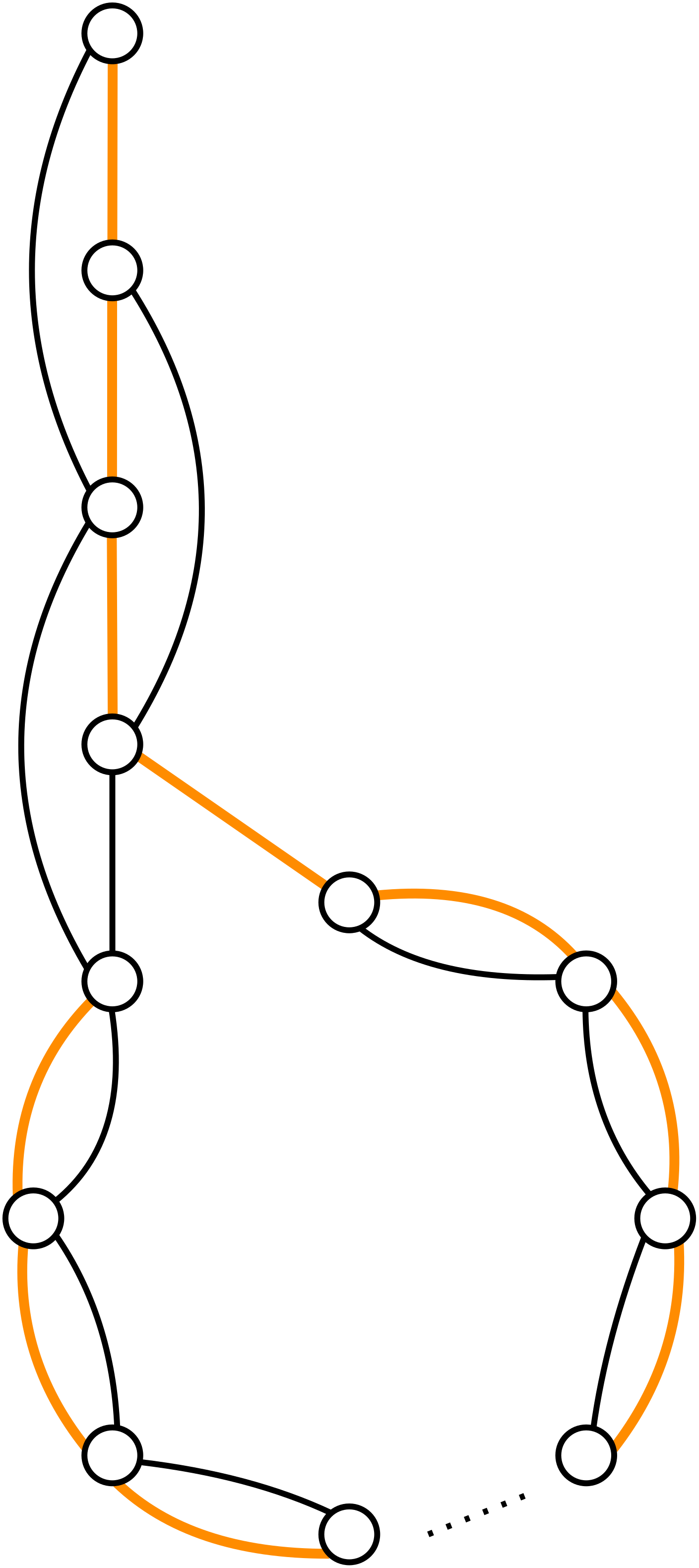} }}%
        \hspace{30mm}%
        \subfloat{{\includegraphics[width=4cm]{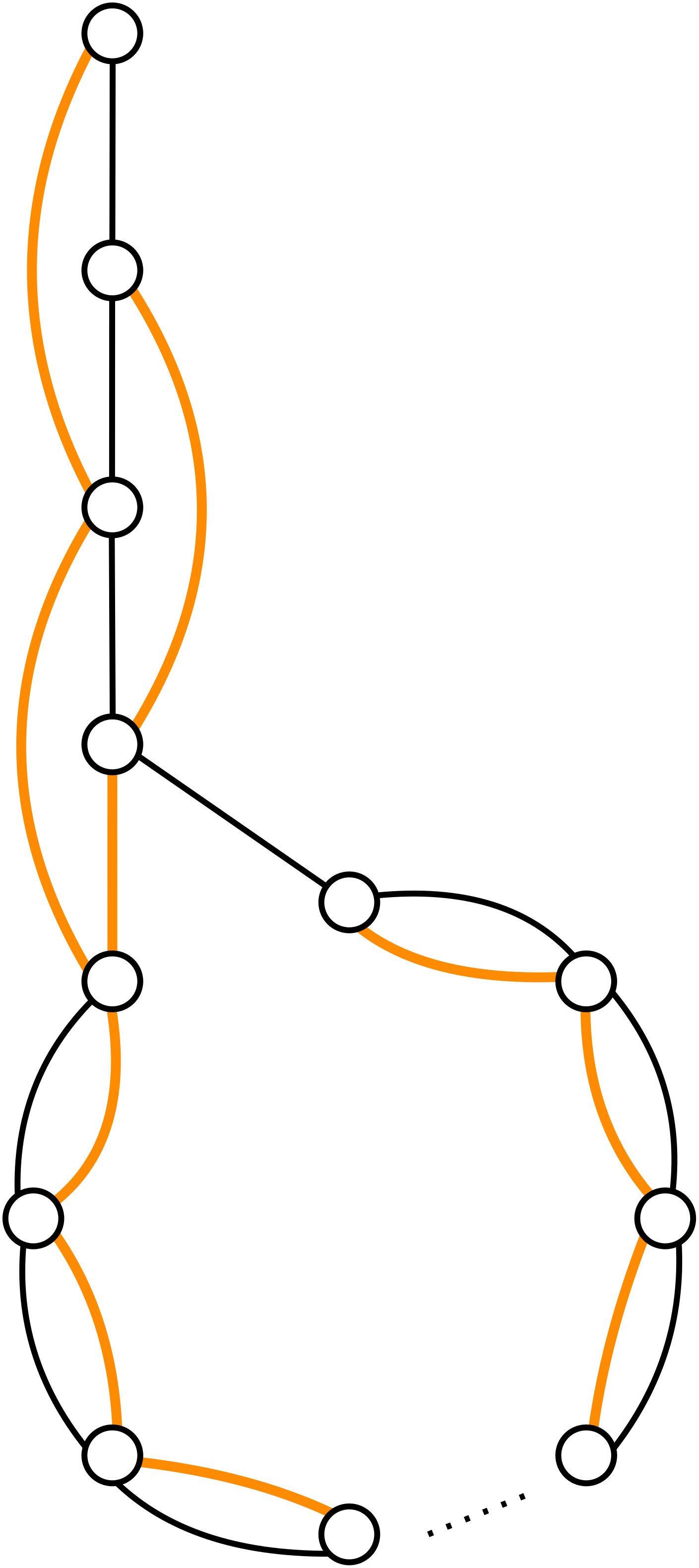} }}%
        \caption{The figure illustrates $T$ (on the left) and $T'$ (on the right) in the case where $i = 1$, $s = 3$, $\text{lev}(Y_1) = \text{lev}(Y_2) = \text{lev}(Y_3) = 2$, and $\text{lev}(Y_4) = 1$.}
        \label{fig:unmodified_pair}%
    \end{figure}

    We can extend $\T$ as above as many times as we would like to obtain a greedy tree-packing
    \[
    \T \cup \paren{ \bigcup_{i = 1}^{2(\floor{\tfrac{s}{2}}+1)} \{T_i, T'_{i}\}}
    \]
    in standard position with the same level profile as $\T$. 
    
    Consider first the case where $\text{lev}(Y_i) = \text{lev}(Y_{i+1})$ is even.
    Then, we can perform the swap of $e^{i}_{2}$ for $e^{i+1}_{1}$, while keeping $T_1$ a greedy extension of $\T$. 
    Similarly, we can now perform the swaps $e^{i}_{1}$ for $e^{i}_{2}$ and $e^{i+1}_{1}$ for $e^{i+2}_{2}$ in $T'_{1}$. 
    After these swaps, $T'_1$ is again a greedy extension of $\T \cup T_1$. See \Cref{fig:modified_pair} for an illustration.

    \begin{figure}
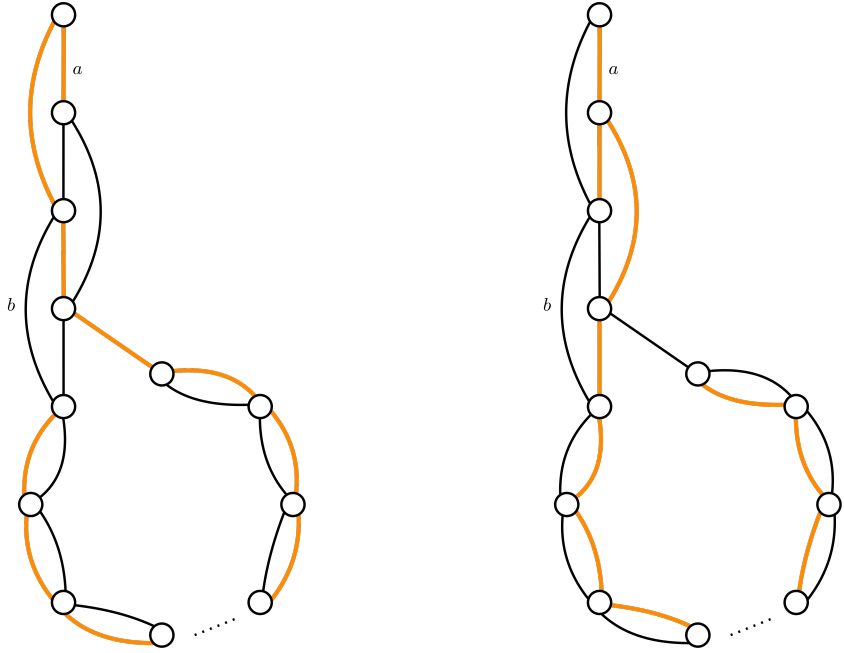
%
        \centering
        \subfloat{{\includegraphics[width=4cm]{Illustrations/modified1.png} }}%
        \hspace{30mm}%
        \subfloat{{\includegraphics[width=4cm]{Illustrations/modified2.png} }}%
        \caption{The figure illustrates $T_1$ (on the left) and $T'_1$ (on the right) in the case where $i = 1$, $s = 3$, $\text{lev}(Y_1) = \text{lev}(Y_2) = \text{lev}(Y_3) = 2$, and $\text{lev}(Y_4) = 1$.
        Observe that all edges are packed once except for $a$, which is packed twice, and $b$ which is not packed. }
        \label{fig:modified_pair}%
    \end{figure}
    
    Next, we can change $T_2$ by swapping $e^{i}_{1}$ for $e^{i}_{2}$ and $e^{i+3}_{1}$ for $e^{i+2}_{2}$. 
    Then we change $T'_2$ by swapping $e^{i}_{2}$ for $e^{i}_{1}$, $e^{i+4}_{2}$ for $e^{i+3}_{1}$.
 
    We can continue this process mutatis mutandis, until at some point we pack neither $e^{i+s}_{1}$ nor $e^{i+s}_{2}$. 
    For even $s$, this happens in $T'_{\tfrac{s}{2}}$, and for odd $s$, this happens in $T_{\ceil{\tfrac{s}{2}}}$. 
    For even $s$, we make no more changes. For odd $s$, we need still need to change $T'_{\ceil{\tfrac{s}{2}}}$: we swap $e^{i}_{2}$ for $e^{i}_{1}$ and $e^{i+s}_{2}$ for $e^{i+s}_{1}$. 

    It follows by induction on $s$ that
    \[
    \T' = \T \cup \paren{ \bigcup_{i = 1}^{2(\floor{\tfrac{s}{2}}+1)} \{T_i, T'_{i}\}}
    \]
    is a greedy tree-packing in standard position with the claimed level profile. If needed, we can increase the size of the tree-packing to the required size by adding another pair of $T$ and $T'$ (based on the levels of $\T'$ and not of $\T$). 

    In the case where $\text{lev}(Y_i) = \text{lev}(Y_{i+1})$ is odd, we only alter $T'_1$. 
    Here, we exchange $e^{i+1}_{1}$ for $e^{i}_{2}$. 

    Next we perform the following swaps on $T_2$; we swap $e^{i}_2$ for $e^{i}_1$, and we swap $e^{i+2}_{2}$ for $e^{i+1}_{1}$. 
    We change $T'_2$ by swapping $e^{i}_1$ for $e^{i}_2$ and we swap $e^{i+3}_{1}$ for $e^{i+2}_{2}$. 
    
    We can continue this process mutatis mutandis, until at some point we pack neither $e^{i+s}_{1}$ nor $e^{i+s}_{2}$. 
    For even $s$, this happens in $T_{\tfrac{s}{2}+1}$, and for odd $s$, this happens in $T'_{\ceil{\tfrac{s}{2}}}$. 
    For odd $s$, we stop here. For even $s$, we need still need to pack $T'_{\tfrac{s}{2}+1}$ before stopping. Before doing so, we swap $e^{i}_{1}$ for $e^{i}_{2}$ and $e^{i+s}_{1}$ for $e^{i+s}_{2}$. 

    It follows by induction on $s$ that
    \[
    \T \cup \paren{ \bigcup_{i = 1}^{2(\floor{\tfrac{s}{2}}+1)} \{T_i, T'_{i}\}}
    \]
    is a greedy tree-packing in standard position with the claimed level profile.
\end{proof}
We can now use this lemma to extend the packing.
\begin{lemma} \label{lma:buildPacking}
    Let $\T$ be a greedy tree-packing in standard position on $G_{n,k}$ with $|\T| = 2j$ and $\beta_{2k} \in \{0,1\}$. 
    Let $i$ and $s >0$ be such that 1) either $\beta_i = \beta_{i+1} = \dots = \beta_{i+s} > \beta_{i+s+1}$ or $\beta_i = \beta_{i+1} = \dots = \beta_{i+s}$ and $i+s = 2j$, and 2) either $\beta_{i-1} > \beta_{i}$ or $i = 1$.
    Then there is a greedy tree-packing $\T'$ in standard position on $G_{n,k}$ with $2j + 2(2k+1)$ trees such that the level profile $\beta'$ of $\T'$ satisfies $\beta'_l = \beta_l + [i = l]$ for all $l$.
\end{lemma}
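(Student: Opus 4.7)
The proof cascades applications of \Cref{lma:OneAugment} to push a single decrement from position $i$ progressively to the right until it reaches position $2k$, where it can be absorbed. We begin by applying \Cref{lma:OneAugment} at $(i,s)$, which by its stated effect raises $\beta_i$ by one and decreases $\beta_{i+s}$ by one. Because the unit-step property $|\beta_j - \beta_{j-1}|\leq 1$ of standard positions forces $\beta_{i+s} = \beta_{i+s+1}+1$ originally, the decrement brings $\beta_{i+s}$ down to equal $\beta_{i+s+1}$; position $i+s$ thus merges into the plateau immediately to its right, forming a single new plateau of length one greater than the one that already lived there.

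We then iterate the cascade: apply \Cref{lma:OneAugment} to the new merged plateau, which restores the decrement at $i+s$ but produces a new decrement at the right end of this merged plateau. Each iteration pushes the decrement one plateau further to the right. The cascade terminates once the current plateau extends to position $2k$, at which point we invoke the boundary case of \Cref{lma:OneAugment} (the alternative ``$i+s = 2k$''). Here the precondition $\beta_{2k}\in\{0,1\}$ combines with the flexibility in the packing of the $f$-edges -- governed by the parameter $\iota$ from the definition of standard position -- to absorb the final decrement without lowering any level in the profile, leaving the net change $\beta'_l = \beta_l + [i=l]$.

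For the tree count, let $l_1, l_2, \dots, l_p$ denote the plateau lengths along the chain from position $i$ to position $2k$, so $\sum_j l_j = 2k-i+1$ and $l_1 \geq 2$ by the precondition $s \geq 1$. The first cascade uses $2\lceil l_1/2\rceil \leq l_1 + 1$ trees, and each subsequent cascade targets a merged plateau of length $l_j + 1$ and therefore uses $2\lceil (l_j+1)/2\rceil \leq l_j + 2$ trees. Splitting the sum according to whether $l_j = 1$ or $l_j \geq 2$ and using $l_1 \geq 2$, the total is at most $4k - 2i + 2 \leq 2(2k+1)$, matching the claimed bound.

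The main obstacle I expect will be verifying the boundary step: one must argue that the final application of \Cref{lma:OneAugment} in its boundary case can raise a level without introducing any net decrement. This amounts to a careful bookkeeping argument about how the packings of the $f^i_1$ and $f^i_2$ edges redistribute -- via the parameter $\iota$ -- when the final tree pair is added, where $\beta_{2k}\in\{0,1\}$ provides exactly the slack needed. Once this is settled, it will also be necessary to recheck that the resulting packing is still in standard position (in particular that the non-increasing property and unit-step property of the level profile are preserved throughout the cascade).
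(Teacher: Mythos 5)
Your proposal follows essentially the same cascade structure as the paper's proof: iteratively apply \Cref{lma:OneAugment}, moving the decrement rightward along plateaus until it reaches position $2k$, with the tree count bounded by a telescoping sum over plateau lengths. Your count is in fact slightly sharper than the paper's cruder bound of at most $4k$ trees for the cascade, and your observation that each decrement merges the current plateau into the next one is exactly how the paper sets up the recursion.

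However, as you yourself flag, the boundary step is the crux of the lemma, and the proposal leaves it unresolved. This is not a minor detail: the cascade of \Cref{lma:OneAugment} applications, including the last one at $(t,s')$ with $t+s'=2k$, necessarily produces the unwanted decrement $\beta'_{2k}=\beta_{2k}-1$; nothing in \Cref{lma:OneAugment} itself ``absorbs'' it. The paper's resolution is an explicit, separate construction, not another invocation of \Cref{lma:OneAugment}: after the cascade, one packs a custom pair of trees $T,T'$ where one of $e^{2k}_1,e^{2k}_2$ is swapped into $T$ in place of $f^\iota_1$ (so that $T$ contains \emph{both} $e^{2k}$ edges), and $T'$ has the $e^{2k}$ edges exchanged relative to the default. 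With a case split on whether the initial $\beta_{2k}$ is $0$ or $1$, this re-raises $\beta_{2k}$ by one and simultaneously advances the standard-position parameter $\iota$ by one, which is precisely the slack that keeps the packing in standard position. One then pads with unmodified $T,T'$ pairs to reach exactly $2(2k+1)$ added trees. Your intuition that $\beta_{2k}\in\{0,1\}$ together with the $\iota$ flexibility supplies the needed slack is correct, but the proof requires exhibiting these specific swaps and verifying greediness and standard position for the final pair -- which is where the actual content of the lemma lives beyond what \Cref{lma:OneAugment} already gives.
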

\begin{proof} 
    We will show the lemma assuming that $\beta_i = \beta_{i+1} = \dots = \beta_{i+s} > \beta_{i+s+1}$ and that $\beta_{i-1} > \beta_{i}$. The other cases follow from identical arguments. 

    We begin by letting $t = i$, $s' = s$. Then we apply \Cref{lma:OneAugment} with $t$ and $s'$ as arguments to get a new packing with only $2j + 2(\floor{\tfrac{s'}{2}}+1)$ trees. We then let $ t = t+s'$ and $s'$ be the smallest non-negative integer such that $\text{lev}(Y_{t})>\text{lev}(Y_{t+s'})$. Again we apply \Cref{lma:OneAugment} with $t$ and $s'$ as arguments. We do this recursively, until $t+s' = 2k$.

    At this point, we have a tree-packing $\T'$ of size at most $2j+4k$. Indeed, in the worst-case $s' = 0$ in every iteration, leaving us with at most $2k$ recursive calls. 

    If the initial level of $Y_{2k}$ was $1$, we pack $T$ with $e^{2k}_2$ and $f^{\iota}_{1}$ swapped. Then we pack $T'$, but with $e^{2k}_{1}$ and $e^{2k}_2$ swapped. 
    Note here that $T$ and $T'$ should be constructed with respect to the levels of $\T'$ and not $\T$.

    If the initial level of $Y_{2k}$ was $0$, we pack $T$ with $e^{2k}_1$ and $f^{\iota}_{1}$ swapped. Then we pack $T'$, but with $e^{2k}_{1}$ and $e^{2k}_2$ swapped. Again $T$ and $T'$ should be constructed with respect to the levels of $\T'$ and not $\T$.
    
    Finally, we can, if necessary, pad with un-altered copies of $T$ and $T'$ (based on the levels of the final tree-packing above) to achieve a greedy tree-packing of the form
    \[
        \T \cup \paren{\bigcup_{l = 1}^{2(2k+1)} \{T_{l}, T'_{l}\}} 
    \]
   in standard position with $\iota$ one larger than before.  
   Observe that the ultimate tree-packing has level profile 
   \[
   \hat{\beta}_l = \beta_{l} + [l = 2k] + \sum_{(t,s')} [l = t] - [l = t+s'] = \beta_{l} + [l = i] + \sum_{t} -[l = t] + [l = t] = \beta_{l} + [l = i],
   \]
   as claimed. 
\end{proof}

To obtain the final tree-packing of $G_{n,k}$ we do as follows. 
Beginning from the empty packing, which is in standard position, we repeatedly apply \Cref{lma:buildPacking}. The goal is to achieve a level profile of the form $(2k-1, 2k-2, \dots, 0)$. 
To do so, assume that we have constructed a level profile of the form $(i,i, \dots, i, i-1, i-2, \dots, 0)$. 
We then apply \Cref{lma:buildPacking} on $j$ beginning with $j=1$, then $j=2$ and so on up to and including $j=2k-i$ to increase the $j^{th}$ coordinate to $i+1$. 

Since we apply \Cref{lma:buildPacking} $\sum \limits_{j = 1}^{2k} j = k(2k+1)$ times, and each application extends the tree-packing with $2(2k+1)$ trees, the resulting tree-packing $\T$ contains  $2k(2k+1)^2$ trees. 
Since the tree-packing is in standard position, we observe that some edge $e$ has been packed $\tfrac{|\T|}{2}+k$ times. 
By \Cref{lma:disjSPT} we have that $\ell^{*}(e) = \tfrac{1}{2}$, so if 
\[
\frac{k}{|\T|} = \frac{k}{2k(2k+1)^2} = |\ell^{\T}(e) - \ell^{*}(e)| \leq \frac{\eps}{2}
\]
we have that 
\[
    \eps^{-1} \leq (2k+1)^2
\]
i.e., $k = \Omega(\eps^{-\tfrac{1}{2}})$. 
In particular, the lemma now follows for $\lambda = 2$. Indeed, the above construction with specific $k = O(\eps^{-\tfrac{1}{2}})$ yields a tree-packing with $\Omega(\lambda \cdot{} \eps^{-3/2})$ trees that does not achieve the required concentration. 
Note that this only holds for large enough $\eps$ with $\eps^{-1} \in O(n^{1/3})$, since otherwise $\iota$ might become too big for the argument in \Cref{lma:buildPacking} to go through. 

To generalize the statement to any $\lambda \in 2\mathbb{Z}$, let $\lambda = 2s$. 
Then we get $G_{s,n,k}$ by duplicating every edge of $G_{n,k}$ $s$ times. 
We get a greedy tree-packing $\T_s$ by replacing each tree $T$ in the tree-packing $\T$ on $G_{n,k}$ by $s$ parallel and isomorphic copies of $T$, each using their own set of edges. 

In total $|\T_s| = 2k(2k+1)^2\cdot{}s$, and so the calculation now becomes:
\[
    \frac{k}{|\T|} = \frac{k}{2k(2k+1)^2\cdot{}s} = |\ell^{\T}(e) - \ell^{*}(e)| \leq \frac{\eps}{2s}
\]
and again we conclude that 
\[
    \eps^{-1} \leq (2k+1)^2 
\]
 and obtain the theorem exactly as before. 

\newpage
\section{Existence of Small Tree-Packing}\label{sc:existence}
The goal of this section is to show that for all graphs $G$ there exists a tree-packing that approximates the ideal load decomposition well. 
Formally, we will prove the following. 

\TPexistence*
We restate the theorem slightly. We recall that $\lambda/2 < \Phi \leq \lambda$ by \Cref{lm:phi_lambda}. Hence the statement is equivalent to $|\T|=\Theta(\Phi/\eps)$ trees guaranteeing 
\begin{equation*}
        |\ell^\T(e)-\ell^*(e)| \leq \eps/\Phi,
\end{equation*}
for all $e\in E$.

To show this, we first inspect the easier case, where the trivial partition $\mathcal{P} = \{v_{i}\}_{i = 1}^{n}$ achieves the minimum partition value $\Phi_{G}$ of $G$. 

To do so, we will generalize Kaiser's simple proof of the tree-packing theorem~\cite{Kaiser12}, to packing trees plus one forest. The proof is very similar to the proof in~\cite{Kaiser12}, we include it for completeness. We start by introducing some notation. 

Let $k\geq 1$. A \emph{$k$-decomposition} $\T$ of a graph\footnote{By abuse of notation, we use $\T$ both for a $k$-decomposition and a tree-packing. In the end, this will correspond to essentially the same packing.}  is a $k$-tuple of spanning subgraphs such that $\{E(T_i) : 1\leq i\leq k\}$ is a partition of $E(G)$. 

We define the sequence $(\P_0,\P_1, \dots, \P_\infty)$ of partitions of $V(G)$ associated with $\T$ as follows. First $\P_0=\{V(G)\}$. For $i\geq 0$, if there exists $c\in \{1, \dots, k\}$ such that the induced subgraph $T_c[X]$ is disconnected for some $X\in \P_i$, then let $c_i$ be the least such $c$, and let $\P_{i+1}$ consist of the vertex sets of all components of $T_{c_i}[X]$, where $X$ ranges over all the classes of $\P_i$. Otherwise, the process ends by setting $\P_\infty=\P_i$, and we set $c_j=k+1$ and $\P_j=\P_i$ for all $j\ge i$. 

The \emph{level}, $\lev(e)$, of an edge $e\in E(G)$ (w.r.t.\ $\T$) is defined as the largest $i$ (possibly $\infty$) such that both endpoints of $e$ are contained in one class in $\P_i$. 

When $\P$ and $\Q$ are partitions of $V(G)$, we say that $\P$ \emph{refines} $\Q$, denoted by $\P \leq Q$, if every class of $\P$ is a subset of a class of $\Q$. 

Finally, we define a strict partial order on $k$-decompositions. Given two $k$-decompositions $\T$ and $\T'$, we set $\T\prec \T'$ if there is some finite $j\geq 0$ such that both of the following hold
\begin{itemize}
    \item for $0\le i< j$, $\P_i=\P_i'$ and $c_i=c_i'$\footnote{Here we use $\P_i'$ and $c_i'$ to denote the partitions/values corresponding to $\T'$.},
    \item either $\P_j < \P_j'$ or $\P_j=\P_j'$ and $c_j< c_j'$.
\end{itemize}

\begin{lemma}\label{lm:Kaiser_adjusted}
    Let $G$ be a graph on vertex set $V(G) = \{v_1, v_2, \dots, v_n\}$ such that the trivial partition $\mathcal{P} = \{v_{i}\}_{i = 1}^{n}$ achieves the minimum partition value $\Phi_{G}$ of $G$. Then, there exists a disjoint packing with $\lfloor \Phi_{G}\rfloor$ spanning trees and one forest $F$ on exactly $(\Phi_{G}-\lfloor \Phi_{G}\rfloor)(n-1)$ edges. 
\end{lemma}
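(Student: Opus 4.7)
The plan is to apply the matroid union theorem to the graphic matroid $M(G)$. Let $k = \lfloor \Phi_G \rfloor$ and set $N_1 = k M(G)$ (the $k$-fold matroid union) and $N_2 = M(G)$; a base of $N_1$ is an edge-disjoint union of $k$ spanning trees, and a base of $N_2$ is a spanning tree. The matroid union $N_1 \oplus N_2 = (k+1) M(G)$ has, by the standard rank formula, rank restricted to $E(G)$ equal to
\[
\min_{F \subseteq E(G)} \bigl( |E(G)| - |F| \bigr) + (k+1)\bigl( n - c(F) \bigr),
\]
where $c(F)$ is the number of connected components of $(V, F)$.

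The first step is to show this rank equals $|E(G)|$, so that $E(G)$ is independent in $(k+1) M(G)$ and therefore decomposes into $k+1$ forests. It suffices to verify $|F| \leq (k+1)\bigl( n - c(F) \bigr)$ for every $F \subseteq E(G)$. Applying the min-partition-value hypothesis of $G$ to the partition $\{C\} \cup \{\{w\} : w \notin C\}$ for each connected component $C$ of $(V, F)$ yields $|E(G[C])| \leq \Phi_G(|C|-1)$; summing over components, $|F| \leq \sum_C |E(G[C])| \leq \Phi_G(n - c(F)) \leq (k+1)(n - c(F))$. An identical computation, using only $\Phi_G \geq k$, yields $r_{N_1}(E(G)) = k(n-1)$.

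Given this rank bound, matroid union supplies a decomposition $E(G) = I_1 \sqcup I_2$ with $I_1$ independent in $N_1$ and $I_2$ independent in $N_2$. Among all such decompositions I would pick one maximising $|I_1|$. If $|I_1| < k(n-1) = r_{N_1}(E(G))$, then by the matroid augmentation property there is an edge $e \in E(G) \setminus I_1 = I_2$ such that $I_1 \cup \{e\}$ is independent in $N_1$; moving $e$ from $I_2$ to $I_1$ produces a valid decomposition with strictly larger $|I_1|$, contradicting maximality. Hence $|I_1| = k(n-1)$, so $I_1$ is a base of $N_1$ and decomposes into $k$ edge-disjoint spanning trees, while $|I_2| = |E(G)| - k(n-1) = (\Phi_G - k)(n-1) \leq n - 1$ is a forest by independence in $N_2$. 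The main technical point is the rank computation, which is a clean consequence of the min-partition-value hypothesis; the remainder is routine matroid augmentation, and this sidesteps the subtle edge-swap issues that arise if one instead tries to adapt Kaiser's exchange argument directly to a $(k+1)$-decomposition (where the obstacle is that swapping a cycle-edge of the forest component with an edge on a tree path need not reduce the forest component's cycle-space dimension when all $u$-$v$ tree paths stay inside a single component of that forest).
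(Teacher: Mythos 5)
Your proof is correct, and it takes a genuinely different route from the paper. You deduce the lemma from the matroid union theorem applied to copies of the graphic matroid, whereas the paper adapts Kaiser's elementary exchange-based proof of the Tutte--Nash-Williams theorem (maximizing a carefully defined partial order $\prec$ on $(k{+}1)$-decompositions and swapping a cycle-edge of $F$ against a lower-level edge to reach a contradiction). Both the key rank computations in your argument -- that $r_{(k+1)M(G)}(E) = |E|$ and $r_{kM(G)}(E) = k(n-1)$ -- follow cleanly from the hypothesis that the trivial partition achieves $\Phi_G$ (equivalently, $|E| = \Phi_G(n-1)$ and $|E(G[C])| \leq \Phi_G(|C|-1)$ for all $C \subseteq V$), and your augmentation step to push $|I_1|$ up to $k(n-1)$ is standard. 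Two small remarks. First, your displayed rank formula has the minimization in the wrong place -- the $\min_F$ must enclose the whole sum, i.e., $\min_{F \subseteq E}\bigl[\,|E \setminus F| + (k+1)(n - c(F))\,\bigr]$ -- but you apply it correctly afterward. Second, the direct sum notation $N_1 \oplus N_2$ should read as matroid union $N_1 \vee N_2$. On the comparison: your approach is shorter and more transparent once matroid union (with its min-max rank formula) is taken as a black box, and it makes the role of the partition-value hypothesis crisp -- it is exactly what makes $E$ independent in $(k{+}1)M(G)$ and rigid in $kM(G)$. The paper's Kaiser-style proof, in contrast, is self-contained and constructive in a combinatorial sense; it does not invoke any matroid machinery, which fits the rest of the paper. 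Since the matroid union rank formula is itself usually proved by an exchange argument structurally close to Kaiser's, the two proofs are cousins, differing mainly in what they treat as a primitive. Your parenthetical worry about a direct $(k{+}1)$-decomposition adaptation of Kaiser breaking down is not borne out -- the paper's Lemma~\ref{lm:Kaiser_adjusted} carries out exactly such an adaptation, first arguing $|\P_\infty| = n$ via the edge count, and then performing the swap with the forest playing the role of $T_k$ -- but that does not affect the validity of your own argument.
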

\begin{proof}
    The idea is the following: pick a $k$-decomposition that first of all contains $\lfloor \Phi_{G}\rfloor$ disjoint spanning trees and (at most) one disjoint spanning subgraph $F$ on $(\Phi_{G}-\lfloor \Phi_{G}\rfloor)(n-1)$ edges, that subject to these constraints maximize the partial order $\prec$. Note that this is a $k$-decomposition for $k=\ceil{\Phi_G}$, and that there is only a forest $F$ iff $\Phi_G$ is non-integer. If $\Phi_G$ is integer the statement is exactly the tree-packing theorem, so follows by e.g.~\cite{Kaiser12}. So we assume it is not. In that case $F=T_k$.
    
    The claim now is that $\P = \P_{\infty}$ has at least $(\Phi_{G}-\lfloor \Phi_{G}\rfloor)(n-1)$ non-parallel, inter-partition edges. If $F$ is a forest, we are done.

    We first argue that if $F$ is not a forest, then $F$ must contain a cycle in $G/\P$. To see this, consider $G/\P$. Since $T_i$ and $F$ all induce trees inside each partition of $\P$, we conclude that $F$ must have at least $(\Phi_{G}-\lfloor \Phi_{G}\rfloor) (|\P|-1)$ edges in $G/\P$. $F$ contains a tree on each partition $P\in \P$ so we have at least $n-|\P|$ edges inside the partitions. 
    Now we see that
    \begin{equation*}
        (\Phi_G-\lfloor\Phi_G \rfloor)(n-1) = |E(F)| \geq (\Phi_{G}-\lfloor \Phi_{G}\rfloor) (|\P|-1)+ (n-|\P|).
    \end{equation*}
    Rearranging gives 
    \begin{equation*}
        (\Phi_G-\lfloor\Phi_G \rfloor)(n-|\P|) \geq n-|\P|,
    \end{equation*}
    and thus for $|\P|\neq n$, $\Phi_G-\lfloor\Phi_G \rfloor\geq 1$, a contradiction. We conclude that $|\P|=n$, and thus any cycle in $F$ is a cycle in $F/\P$. 
    
    Let $e$ be an edge in a cycle of $F=T_k$ of minimum level, and set $m=\lev(e)$. Let $P$ be the class of $\P_m$ containing both endpoints of $e$. Since $e$ joins different components of $T_{c_m}[P]$, we have $c_m\neq k$, and the unique cycle $C$ in $T_{c_m}+ e$ contains an edge with only one endpoint in $P$. Thus for some edge $e'\in C$ we have $\lev(e')< m$. Let $e'$ be such an edge of lowest level. Let $Q$ be the class of $\P_{\lev(e')}$ containing both endpoints of $e'$. Observe that $V(C)\subseteq Q$. We now create a new $k$-decomposition with $e$ and $e'$ swapped: let $\T'$ be the $k$-decomposition obtained from $\T$ by replacing $T_{c_m}$ with $T_{c_m}+e-e'$ and $T_k$ with $T_k-e+e'$. 

    Next, it is easy to check that $\T \prec \T'$, contradicting that $\T$ was maximal. First, we show that for $i\leq m$ we have $\P_i'=\P_i$ and $c_i'=c_i$. We do this by induction. For $i=0$ we have $\P_0'=\{V(G)\}=\P_0$ and $c_0'=k=c_0$, so the base case is clear. Now we assume that the statement holds for $0\leq i<m$, and we prove it for $i+1$. 

    Let $S$ be an arbitrary class of $\P_{i+1}$, by definition, $T_{c_i}[S]$ is connected. We want to show that $T'_{c_i'}[S]$ is also connected. By the induction hypothesis, $c_i'=c_i$. If $c_i\notin \{c_m,k\}$, then $T'_{c_i'}[S]=T_{c_i'}[S]=T_{c_i}[S]$, so $T'_{c_i'}[S]$ is connected. We have $i<m$, so we are only left with the case $c_i=k$. We have that $E(T_k)-E(T_k')=e$, so if not both endpoints of $e$ are in $S$, then we have $T_k'[S]$ is connected as well. If $S$ does contain both endpoints of $e$, then $P\subseteq S$, because every class of $\P_{i+1}$ containing both endpoints of $e$ contains $P$. Hence $T_k'[S]$ is connected. We see that $\P_{i+1} \leq \P_{i+1}'$. By maximality of $\T$, we conclude that $\P_{i+1} = \P_{i+1}'$.

    Next, we show that $c_{i+1}'=c_{i+1}$. Let $R\in \P_{i+1}$ and $c<c_{i+1}$. Since $\P_{i+1} = \P_{i+1}'$, we have $R\in \P_{i+1}'$. By definition of $c_{i+1}$, $T_c[R]$ is connected. Similar as before, we can argue that $T_c'[R]$ is connected. Hence, $c_{i+1}'\geq c_{i+1}$. Again by maximality of $\T$, we get that we must have $c_{i+1}'=c_{i+1}$.

    Now we look at the next step. We have from the above that $\P_m' = \P_m$ and $c_m'=c_m$, so the classes of $\P_{m+1}'$ are the vertex set of components of $T_{c_m}'[U]$ for $U\in \P_m$. For $U\in \P_m-\{P\}$, we have $T_{c_m}'[U]=T_{c_m}[U]$, so their components coincide. The graph $T_{c_m}'[P]$ equals $T_{c_m}[P]$ with the extra edge $e$ connecting two components of $T_{c_m}[P]$. Hence $\P_{m+1}< \P_{m+1}'$, so also $\T\prec \T'$, contradicting $\T$ being maximal. 
\end{proof}

Now the result follows easily on graphs with uniform $\Phi_G$.
\begin{lemma}
     Let $G$ be an unweighted, undirected (multi-)graph, where the trivial partition $\mathcal{P} = \{v_{i}\}_{i = 1}^{n}$ achieves the minimum partition value $\Phi_{G}$ of $G$.
    There exists a tree-packing $\T$ that needs $|\T|=\Theta(\Phi_G/\eps)$ trees to satisfy
    \begin{equation*}
        |\ell^\T(e)-\ell^*(e)| \leq \eps/\Phi_G,
    \end{equation*}
    for all $e\in E$.
\end{lemma}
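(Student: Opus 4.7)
The plan is to reduce the claim to \Cref{lm:Kaiser_adjusted} via a simple scaling trick. Set $k = \lceil 1/\eps \rceil$ and consider the multigraph $kG$ obtained by duplicating every edge of $G$ exactly $k$ times. Since edge counts scale by $k$ under this duplication, every partition $\P'$ of $V$ has partition value in $kG$ exactly $k$ times its partition value in $G$. Hence the trivial partition still achieves the minimum in $kG$ and $\Phi_{kG} = k\Phi_G$. Applying \Cref{lm:Kaiser_adjusted} to $kG$ yields a disjoint packing consisting of $\lfloor k\Phi_G \rfloor$ spanning trees $T_1,\dots,T_{\lfloor k\Phi_G \rfloor}$ of $kG$, together with one forest $F$ on exactly $(k\Phi_G - \lfloor k\Phi_G \rfloor)(n-1)$ edges. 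Each piece, viewed in $G$, is a spanning tree (respectively forest) of $G$, because no two parallel copies of the same edge of $G$ can lie in the same acyclic piece.

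Next, I would extend $F$ to a spanning tree $T^{*}$ of $G$ by greedily adding edges from $E(G) \setminus F$; this is possible since $G$ is connected (which follows from $\Phi_G \geq 1$). Set $\T = \{T_1, \dots, T_{\lfloor k\Phi_G \rfloor}, T^{*}\}$, so $|\T| = \lceil k\Phi_G \rceil = \Theta(\Phi_G/\eps)$, as required.

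The load analysis is then immediate. Each edge $e \in E(G)$ has exactly $k$ parallel copies in $kG$, and each of these copies lies in a distinct piece of the disjoint decomposition. If $e \in F$, then the copy in $F$ is absorbed into $T^{*}$, its other $k-1$ copies lie in distinct $T_i$, and the extension of $F$ cannot reuse $e$; hence $L^\T(e) = k$. If $e \notin F$, then all $k$ copies lie in distinct $T_i$, and $e$ may or may not appear in the extension, so $L^\T(e) \in \{k, k+1\}$. Because the trivial partition minimizes, $\ell^{*}(e) = 1/\Phi_G$, and a direct calculation gives
\[
|\ell^\T(e) - \ell^{*}(e)| = \left|\frac{L^\T(e)}{\lceil k \Phi_G\rceil} - \frac{1}{\Phi_G}\right| \leq \frac{1}{\lceil k \Phi_G\rceil} \leq \frac{1}{k\Phi_G} \leq \frac{\eps}{\Phi_G},
\]
using $k \geq 1/\eps$.

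The only step requiring care is checking that the trivial partition still minimizes in $kG$ with $\Phi_{kG} = k\Phi_G$ (immediate from linearity of edge counts under duplication) and that the pieces of \Cref{lm:Kaiser_adjusted} applied to $kG$ are acyclic in $G$, so that each edge of $G$ is covered exactly $k$ times by the disjoint decomposition. Everything else is routine bookkeeping.
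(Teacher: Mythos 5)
Your proof is correct and takes essentially the same route as the paper: duplicate every edge a factor $k \approx 1/\eps$ times, invoke \Cref{lm:Kaiser_adjusted} on the blown-up graph, and note that the acyclic pieces project to (sub-)trees of $G$, so every edge lands in either $k$ or $k+1$ of the $\lceil k\Phi_G\rceil$ trees. Your bookkeeping is in fact a touch cleaner — by taking $k=\lceil 1/\eps\rceil$ and bounding both cases by $1/\lceil k\Phi_G\rceil$ directly, you avoid the triangle-inequality step and the final $\eps\leftarrow\eps/2$ rescaling that the paper uses.
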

\begin{proof}
    W.l.o.g., assume that $1/\eps$ is an integer. We replace every edge in $G$ by $1/\eps$ copies to obtain $G'$. Note that $\ell^*_{G'}(e)=\eps\ell^*_G(e)$ and $\Phi_{G'}=\Phi_G/\eps$. By \Cref{lm:Kaiser_adjusted}, we see that we can pack $\lfloor \Phi_{G'}\rfloor$ spanning trees and one forest on exactly $(\Phi_{G'}-\lfloor \Phi_{G'}\rfloor)(n-1)$ edges. 
    Now let $\T$ consists of these $\lfloor \Phi_{G'}\rfloor$ spanning trees, together with one tree that is the forest extended to a tree in an arbitrary way. Now we have that $L^\T(e)=1/\eps$ or $L^T(e)=1/\eps+1$. We see that in the first case that
    \begin{align*}
        |\ell^\T(e)-\ell^*(e)| &= \left| \frac{1/\eps}{\lceil \Phi_G/\eps \rceil}-\frac{1}{\Phi_G} \right|= \frac{1}{\Phi_G} -\frac{1/\eps}{\lceil \Phi_G/\eps \rceil}= \frac{\lceil \Phi_G/\eps \rceil-\Phi_G/\eps}{\lceil \Phi_G/\eps \rceil\Phi_G}\\
        &\leq \frac{1}{\lceil \Phi_G/\eps \rceil\Phi_G} \leq \eps/\Phi_G.
    \end{align*}
    And in the second case we have 
    \begin{align*}
        |\ell^\T(e)-\ell^*(e)| &= \left| \frac{1/\eps+1}{\lceil \Phi_G/\eps \rceil}-\frac{1}{\Phi_G} \right|=  \left|\frac{\Phi_G/\eps+\Phi_G-\lceil \Phi_G/\eps \rceil}{\lceil \Phi_G/\eps \rceil\Phi_G}\right|\\
        &\leq \left|\frac{\Phi_G/\eps-\lceil \Phi_G/\eps \rceil}{\lceil \Phi_G/\eps \rceil\Phi_G}\right|+\left|\frac{\Phi_G}{\lceil \Phi_G/\eps \rceil\Phi_G}\right|    
        \leq 2\eps/\Phi_G,
    \end{align*}
    where for the first inequality we use the triangle inequality, and for the second inequality we use the first case. Now by setting $\eps\leftarrow\eps/2$, we obtain the result. 
\end{proof}

At last, we are ready to show the general case.
\begin{proof}[Proof of \Cref{thm:existence}]
We recall the definition of the ideal relative loads: $\ell^*(e)$ is defined recursively as follows.  
\begin{enumerate}
    \item Let $\P^*$ be a packing with $\rm{pack\_val}(\P^*)=\Phi$.
    \item For all $e\in E(G/\P)$, set $\ell^*(e):= 1/\Phi$.
    \item For each $S\in \P^*$, recurse on the subgraph $G[S]$. 
\end{enumerate}

We prove the lemma by induction on the depth of this recursive definition. If the depth is $0$ then the minimum partition value $\Phi_{G}$ is exactly achieved by the trivial partition $\mathcal{P} = \{v_{i}\}_{i = 1}^{n}$, so the base case is immediate. Suppose it holds up to depth $i$. Then we have for each component $X\in \P^*$ that there exists a tree-packing $\T_X$ such that $|\T_X|=\Theta(\Phi/\eps)$ and 
\begin{equation*}
        |\ell^{\T_X}(e)-\ell^*(e)| \leq \eps/\Phi,
\end{equation*}
for all $e\in E(G[X])$. Note that this follows from the induction hypothesis since $\Phi_{G[X]}\geq \Phi$, so we can set $\eps\leftarrow \tfrac{\Phi_{G[X]}}{\Phi}\eps$. We also have a tree-packing $\T_{\P*}$ on $G/\P^*$ such that $|\T_{\P^*}|=\Theta(\Phi/\eps)$ and 
\begin{equation*}
        |\ell^{\T_{\P^*}}(e)-\ell^*(e)| \leq \eps/\Phi,
\end{equation*}
for all $e\in E(G/\P^*)$.
Clearly $\T=\T_{\P^*}\cup \left(\bigcup_{X\in \P^*} \T_X\right)$, where each tree is a union of the respective trees, is a tree-packing of $G$ that satisfies the bounds on $\ell^\T$. 
\end{proof}

\section*{Acknowledgements}
We thank Pavel Arkhipov for pointing out a typo in an earlier version. 

\newpage
\printbibliography[heading=bibintoc]
% \bibliography{references}{}
% \bibliographystyle{plain}

\end{document}